\newcommand{\R}{\mathbb R}
\newcommand{\Z}{\mathbb Z}
\newcommand{\C}{\mathbb C}
\newcommand{\N}{\mathbb{N}}
\newcommand{\T}{\mathbb{T}}
\newtheorem{thm}{Theorem}[section]
\newtheorem{lem}[thm]{Lemma}
\newtheorem{rem}{\bf Remark}[section]
\theoremstyle{definition}
\theoremstyle{statement}
\newtheorem{state}[thm]{Statement}
\numberwithin{equation}{section}
\begin{document}
 \title[Quantitative Green's  function estimates]{Quantitative Green's  function estimates for Lattice Quasi-periodic Schr\"odinger Operators}

\author[Cao]{Hongyi Cao}
\address[H. Cao] {School of Mathematical Sciences,
Peking University,
Beijing 100871,
China}
\email{chyyy@stu.pku.edu.cn}
\author[Shi]{Yunfeng Shi}
\address[Y. Shi] {College of Mathematics,
Sichuan University,
Chengdu 610064,
China}
\email{yunfengshi@scu.edu.cn}

\author[Zhang]{Zhifei Zhang}
\address[Z. Zhang] {School of Mathematical Sciences,
Peking University,
Beijing 100871,
China}
\email{zfzhang@math.pku.edu.cn}

\date{\today}

\keywords{Quantitative Green's function estimates, Quasi-periodic Schr\"odinger operators, Arithmetic Anderson localization, Multi-scale analysis, H\"older continuity of IDS}

\begin{abstract}
In this paper,  we establish quantitative Green's function estimates for some higher dimensional  lattice quasi-periodic (QP)  Schr\"odinger operators.  The resonances in the estimates can be described via  a pair of symmetric zeros  of certain  functions   and the estimates  apply to  the sub-exponential type non-resonant conditions.  As the application of quantitative Green's function estimates,  we  prove both the  arithmetic version of Anderson localization and  finite volume version of $(\frac 12-)$-H\"older continuity of the integrated density of states (IDS) for  such QP Schr\"odinger operators. This gives an affirmative answer to  Bourgain's  problem  in   \cite{Bou00}. 
\end{abstract}

\maketitle

\maketitle

\section{Introduction}
Consider  the  QP Schr\"odinger operators
\begin{align}\label{SO}
	H=\Delta+\lambda V(\theta+n\omega)\delta_{n, n'}\   {\rm on}\  \Z^d,
\end{align}
where $\Delta$ is the discrete Laplacian,  $V:\T^d=(\R/\Z)^ d\to\R$ is the potential and $n\omega=(n_1\omega_1,\dots,n_d\omega_d)$. Typically, we call $\theta\in\T^d$ the phase,  $\omega\in[0,1]^d$ the frequency and   $\lambda\in\R$ the coupling .   Particularly, if $V=2\cos2\pi\theta$ and $d=1$, then the operators  \eqref{SO} become  the famous almost Mathieu operators (AMO). \smallskip

Over the past decades,  the study of spectral and dynamical properties of lattice QP Schr\"odinger operators has been one of the central themes in mathematical physics. Of particular importance is the phenomenon  of   Anderson localization (i.e., pure point spectrum with exponentially decaying eigenfunctions).   Determining the nature of the spectrum  and the eigenfunctions properties of \eqref{SO} can be viewed as a small divisor problem, which depends sensitively on  features of $\lambda, V, \omega, \theta$ and $d$.   Then substantial progress has been made following   Green's function estimates  based on  a KAM type multi-scale analysis  (MSA)  of Fr\"ohlich-Spencer \cite{FS83}.   More precisely, Sinai \cite{Sin87} first proved the Anderson localization for a class of $1D$ QP Schr\"odinger operators with a  $C^2$ cosine-like potential assuming the Diophantine frequency \footnote{We say $\omega\in\R$ satisfies the Diophantine condition if there are $\tau>1$ and $\gamma>0$ so that $$\|k\omega\|=\inf_{l\in\Z}|l-k\omega|\geq \frac{\gamma}{|k|^\tau}\ {\rm for}\ \forall\ k\in\Z\setminus\{0\}.$$}.  The proof focuses on  eigenfunctions parametrization and the resonances are overcome via a  KAM iteration scheme. Independently, Fr\"ohlich-Spencer-Wittwer \cite{FSW90} extended the celebrated method of Fr\"ohlich-Spencer  \cite{FS83}  originated from random Schr\"odinger operators case  to the QP one,  and obtained similar Anderson localization result with \cite{Sin87}. The  proof  however uses  estimates of finite volume Green's functions based on the MSA and the eigenvalue variations.  Both  \cite{Sin87} and \cite{FSW90} were  inspired essentially  by  arguments of \cite{FS83}.   Eliasson \cite{Eli97} applied a reducibility method based on KAM iterations to  general Gevrey QP potentials and established the pure point spectrum for corresponding Schr\"odinger operators. All these $1D$  results are perturbative in the sense that the required perturbation strength depends heavily on the Diophantine frequency  (i.e., localization holds for $|\lambda|\geq \lambda_0(V,\omega)>0$). The great breakthrough  was then made by  Jitomirskaya \cite{Jit94, Jit99}, in which the non-perturbative methods for control of  Green's functions (cf. \cite{Jit02})  were developed first  for AMO. The non-perturbative methods can avoid the usage of multi-scale scheme and the eigenvalue variations. This will allow effective (even optimal in many cases) and independent of $\omega$ estimate on $\lambda_0$. In addition, such  methods can provide arithmetic version of Anderson localization which means the removed sets on both  $\omega$ and $\theta$ when obtaining localization have an explicit arithmetic description (cf. \cite{Jit99,JL18} for details).  In contrast, the current perturbation methods seem only providing certain measure  or complexity bounds on these sets.  Later,  Bourgain-Jitomirskaya \cite{BJ02} extended some results of \cite{Jit99} to the exponential long-range hopping case (thus the absence of Lyapunov exponent) and obtained both nonperturbative and arithmetic Anderson localization.  Significantly, Bourgain-Goldstein \cite{BG00} generalized  the non-perturbative Green's function estimates  of Jitomirskaya \cite{Jit99} by introducing the new ingredients of  semi-algebraic sets theory and subharmonic function estimates,  and  established  the non-perturbative Anderson localization\footnote{i.e., Anderson localization assuming the positivity of the Lyapunov exponent. In the present context by nonperturbative Anderson localization we mean localization if  $|\lambda|\geq \lambda_0=\lambda_0(V)>0$ with $\lambda_0$ being independent of $\omega$.}  for general analytic QP potentials.  The localization results of \cite{BG00}  hold for arbitrary $\theta\in\T$ and a.e. Diophantine frequencies (the permitted set of frequencies depends  on $\theta$),  and  there seems  no  arithmetic  version of Anderson localization  results  in this case.    We would  like to mention that the Anderson localization can  also be obtained via reducibility arguments based on Aubry duality \cite{JK16,AYZ17}.\smallskip

If one increases the lattice dimensions of  QP operators,  the Anderson localization proof  becomes significantly difficult. In this setting, Chulaevsky-Dinaburg \cite{CD93} and Dinaburg \cite{Din97}  first extended  results of Sinai \cite{Sin87} to the exponential long-range operator with a $C^2$ cosine type potential on $\Z^d$ for arbitrary $d\geq1.$   However,  in this  case, the localization holds assuming further restrictions on the frequencies (i.e., localization only holds for  frequencies in a set of positive measure, but without explicit arithmetic description).  Subsequently,  the remarkable work of Bourgain-Goldstein-Schlag \cite{BGS02}  established the Anderson localization for the general analytic QP Schr\"odinger operators with  $(n,\theta,\omega)\in\Z^2\times\T^2\times\T^2$ via Green's function estimates. In \cite{BGS02} they first proved the large deviation theorem (LDT) for the finite volume Green's functions by combining MSA, matrix-valued Cartan's estimates and semi-algebraic sets  theory. Then by  using further semi-algebraic arguments together with LDT, they proved the Anderson localization for all $\theta\in\T^2$ and $\omega$ in a set of positive measure (depending on $\theta$). While the restrictions of the frequencies when achieving LDT are purely arithmetic and do not depend on the choice of potentials, in order to obtain the Anderson localization it needs to remove an additional  frequencies set of positive measure.   The proof of \cite{BGS02} is essentially two-dimensional and a generation  of it to higher dimensions  is significantly difficult.  In 2007,   Bourgain \cite{Bou07} successfully extended the  results of  \cite{BGS02} to arbitrary dimensions, and one of his  key ideas is allowing the restrictions of frequencies to depend on the potential  by means of delicate  semi-algebraic sets analysis  when proving LDT for Green's functions.  In other words,  for the proof of LDT in \cite{Bou07} there has already been    additional restrictions on the frequencies,  which depends on the potential $V$ and is thus not arithmetic. The results of \cite{Bou07} have been largely generalized by Jitomirskaya-Liu-Shi  \cite{JLS20} to  the  case of both arbitrarily  dimensional multi-frequencies and exponential long-range hopping. Very recently, Ge-You \cite{GY20}  applied a reducibility argument  to  higher dimensional  long-range QP operators with the cosine potential, and proved the  first arithmetic Anderson localization assuming the  Diophantine  frequency.

Definitely, the LDT type Green's function estimates methods are powerful to deal with higher dimensional QP  Schr\"odinger operators with general analytic potentials.   However, such methods do not provide the detailed information on Green's functions and eigenfunctions that may be extracted by purely perturbative method based on Weierstrass preparation type theorem.  As an evidence, in the celebrated work \cite{Bou00}, Bourgain developed the method of  \cite{Bou97} further to first obtain the finite volume version of  $(\frac12-)$-H\"older continuity of the IDS for AMO. The proof shows that the Green's functions can be controlled via certain quadratic polynomials, and the resonances are completely determined by zeros of these polynomials.  Using this method then yields a surprising quantitative  result on the H\"older exponent of the IDS,  since the celebrated method of Goldstein-Schlag \cite{GS01} which is non-perturbative and works for more general potentials does not seem to provide explicit information on the H\"older exponent. In 2009,  by using KAM reducibility method of Eliasson \cite{Eli92}, Amor \cite{Amo09} obtained the first  $\frac12$-H\"older continuity result of the IDS for $1D$ and multi-frequency QP Schr\"odinger operators with small analytic potentials and Diophantine frequencies.  Later,  the one-frequency result of Amor was largely generalized by Avila-Jitomirskaya \cite{AJ10} to the non-perturbative case via the quantitative almost reducibility and localization method.  In  the regime  of  the positive Lyapunov exponent,   Goldstein-Schlag \cite{GS08} successfully proved the $(\frac{1}{2m}-)$-H\"older continuity of the IDS for $1D$ and one-frequency QP Schr\"odinger operators with  potentials given by analytic perturbations of certain trigonometric polynomials of degree $m\ge1$. This celebrated work provides in fact the  finite volume version of  estimates on the IDS.  We  remark that the  H\"older continuity of the IDS for  $1D$ and multi-frequency QP Schr\"odinger operators with large  potentials is hard to prove. In \cite{GS01}, by using the LDT for transfer matrix  and the avalanche principle, Goldstein-Schlag showed the weak H\"older continuity (cf. \eqref{weakids}) of  the IDS for $1D$ and multi-frequency QP Schr\"odinger operators assuming the positivity of the Lyapunov exponent and strong Diophantine frequencies. The weak H\"older continuity of the IDS for higher dimensional  QP Schr\"odinger operators  has been  established  in \cite{Sch01,Bou07, Liu20}. Very recently, Ge-You-Zhao \cite{GYZ22} proved the  $(\frac{1}{2m}-)$-H\"older continuity of  the  IDS for higher dimensional QP  Schr\"odinger operators with  small exponential long-range hopping and trigonometric polynomial (of degree $m$) potentials via the reducibility argument.  By Aubry duality, they can obtain the  $(\frac{1}{2m}-)$-H\"older continuity of the IDS for $1D$ and multi-frequency QP operators with a  finite range hopping.

Of course,  the  references  mentioned as above  are far from complete and we refer the reader to \cite{Bou05, MJ17, Dam17} for more recent results on the study of both Anderson localization and the H\"older regularity of the IDS for lattice QP Schr\"odinger operators. 
\subsection{Bourgain's problems}

The remarkable Green's function estimates  of \cite{Bou00} should be not restricted   to  the proof of $(\frac12-)$-H\"older regularity of the IDS for AMO only. In fact, in \cite{Bou00} (cf. Page 89),   Bourgain made three comments on the possible extensions of his method: 

\begin{itemize}

\item[(1)] In fact, one may also recover  the Anderson localization results from \cite{Sin87} and \cite{FSW90} in the perturbative case;

\item[ (2)] One may hope that it may be combined with nonperturbative arguments in the spirit of \cite{BG00,GS01} to establish $(\frac12-)$-H\"older regularity assuming  positivity of the Lyapunov exponent only;

\item[(3)] It may also allow progress in the multi-frequency case (perturbative or nonperturbative) where regularity estimates of the form (0.28)\footnote{i.e, a weak H\"older continuity estimate  \begin{align}\label{weakids}|\mathcal{N}(E)-\mathcal{N}(E')|\leq e^{-\left(\log\frac{1}{|E-E'|}\right)^\zeta},\ \zeta\in(0,1),\end{align}  where $\mathcal{N}(\cdot)$ denotes the IDS.} are the best obtained so far.

 \end{itemize}

An  extension of (2) has been accomplished  by Goldstein-Schlag \cite{GS08}.  The answer to the extension of (1) is highly nontrivial due to the following reasons:

\begin{itemize} 

\item The Green's function on {\bf good} sets (cf. Section \ref{GFES} for details) only has a sub-exponential off-diagonal decay estimate rather than an exponential one required by proving Anderson localization;

\item At the $s$-th iteration step ($s\geq1$),  the resonances of \cite{Bou00}  are characterized as
\begin{align*}
	\min\{\|\theta+k\omega-\theta_{s,1}\|, \|\theta+k\omega-\theta_{s,2}\|\}\leq \delta_s\sim \delta_0^{C^s},\ C>1.
\end{align*}
However,  the  symmetry  information  of  $\theta_{s,1}$  and $\theta_{s,2}$ is missing. 
Actually, in \cite{Bou00},  it might be  $\theta_{s,1}+\theta_{s,2}\neq 0$ because of the construction of resonant blocks;

\item If one tries to extend the method of Bourgain \cite{Bou00} to higher lattice dimensions,  there comes new difficulty:  the resonant blocks at each iteration  step could not be the cubes  similar to the intervals appeared in the $1D$ case. 
\end{itemize}

To extend the method of Bourgain \cite{Bou00} to higher lattice dimensions and recover the Anderson localization, one has to address the above issues, which is our main motivation of this paper.

\subsection{Main results}

In this paper, we study  the QP Schr\"odinger operators on $\mathbb{Z}^d$
\begin{align}\label{model}
	H(\theta)=\varepsilon \Delta+\cos2\pi(\theta+ n\cdot{\omega})\delta_{n, n'},\ \varepsilon>0,
\end{align}
where the discrete Laplacian $\Delta$ is defined as
\begin{align*}
	\Delta( n, n')=\delta_{{\|n- n'\|_{1}, 1}},\ \| n\|_{1}:=\sum_{i=1}^{d}\left|n_{i}\right|.
\end{align*}
For the diagonal part of \eqref{model}, we have $\theta\in \mathbb{T}=\mathbb{R}/\mathbb{Z},  \omega\in[0,1]^d$ and $n\cdot\omega=\sum\limits_{i=1}^dn_i \omega_i.$ Throughout the paper,  we assume that $\omega\in \mathcal{R}_{\tau,\gamma}$ for some $0<\tau<1$
and $\gamma>0$   with
\begin{align}\label{bruno}
	\mathcal{R}_{\tau, \gamma}=\left\{\omega\in[0,1]^d:\ \|n\cdot\omega\|=\inf_{l\in\mathbb{Z}}|l- n\cdot\omega|\geq \gamma e^{-\| n\|^{\tau}}\ {\rm for}\ \forall\  n\in\mathbb{Z}^d\setminus\{0\}\right\},
\end{align}
where
$$\| n\|:=\sup _{1\leq i \leq d}\left|n_{i}\right|.$$
 We aim to extend the method of Bourgain \cite{Bou00} to higher lattice dimensions and establish quantitative  Green's function estimates assuming \eqref{bruno}. As the application, we  prove  the arithmetic  version of Anderson localization and  the finite volume version of $(\frac12-)$-H\"older continuity of the IDS for \eqref{model}. 

\subsubsection{Quantitative Green's function estimates}
The first main result of this paper is a quantitative version of Green's function estimates, which will imply both arithmetic Anderson localization and the finite volume version of $(\frac12-)$-H\"older continuity of IDS. The estimates on Green's function are based on multi-scale induction arguments.\smallskip

Let $\Lambda\subset \Z^d$ and denote by $R_\Lambda$ the restriction operator. Given $E\in\R$, the Green's function (if exists) is defined by
$$
T^{-1}_\Lambda(E;\theta)=\left(H_\Lambda(\theta)-E\right)^{-1},\ H_\Lambda(\theta)=R_\Lambda H(\theta)R_\Lambda.
$$

Recall that $\omega\in\mathcal{R}_{\tau,\gamma}$ and $\tau\in(0,1)$. We fix a constant $c>0$ so that
$$
1<c^{20}<\frac1\tau.
$$
At the $s$-th iteration step,  let $\delta_s^{-1}$ (resp. $N_s$)  describe  the  resonance strength (resp. the size of resonant blocks) defined by
$$N_{s+1}=\left[\left|\log\frac{\gamma}{\delta_s}\right|^{\frac{1}{c^5\tau}}\right], \  \left|\log\frac{\gamma}{\delta_{s+1}}\right|=\left|\log\frac{\gamma}{\delta_s}\right|^{c^5},\ \delta_0=\varepsilon^{\frac{1}{10}},
$$
where $[x]$ denotes the integer part of $x\in\R.$  

  If $a\in\R$, let $\|a\|={\rm dist}(a,\Z)=\inf\limits_{l\in\Z}|l-a|$. For $z=a+\sqrt{-1}b\in \C$ with $a,b\in\R$, define $\|z\|=\sqrt{|b|^2+\|a\|^2}.$ Denote by ${\rm dist} (\cdot,\cdot)$ the distance induced by the supremum norm on $\R^d$. Then we have
\begin{thm}\label{ggg}
Let $\omega\in\mathcal{R}_{\tau,\gamma}$. Then there is some $\varepsilon_0=\varepsilon_0(d,\tau,\gamma)>0$ so that, for $0<\varepsilon\leq \varepsilon_0$ and $ E\in[-2,2]$,  there exists a sequence $\{\theta_s=\theta_s(E)\}_{s=0}^{s'}\subset \C$ ($s'\in \N\cup \{+\infty\}$)  with the following properties. Fix any $\theta\in\T$. If a finite set $\Lambda\subset \mathbb{Z}^d$ is $s$-{\rm\bf good}  ({\rm cf}.  $({\bf e})_s$ of the  {\bf Statement}  \ref{state} for the definition of $s$-{\rm\bf good} sets, and  Section \ref{GFES} for the definitions of $\{\theta_s\}_{s=0}^{s'}$, the sets $P_s, Q_s,  \widetilde{\Omega}_k^s$), then
\begin{align*}
	\|T^{-1}_{\Lambda}(E;\theta)\|&<\delta^{-3}_{s-1}\sup_{\left\{k\in P_s:\   \widetilde{\Omega}_{k}^s\subset \Lambda \right\}}\|\theta+k\cdot\omega-\theta_{s}\|^{-1}\cdot \|\theta+k\cdot\omega+\theta_{s}\|^{-1}\\
	&<\delta^{-3}_s,\\
	\left|T^{-1}_{\Lambda}(E;\theta)(x, y)\right|&<e^{-\frac14|\log\varepsilon|\cdot\|x-y\|_1} \ { \rm for }\ \|x-y\|>N_{s}^{c^3}.
\end{align*}
In particular, for any finite set $\Lambda\subset\Z^d$, there exists some $\widetilde \Lambda$ satisfying 
	$$	\Lambda\subset\widetilde\Lambda\subset \left\{k\in\Z^d:\ {\rm dist}(k, \Lambda)\leq 50N_s^{c^2}\right\}$$
	so that, if
	$$
	\min_{k\in\widetilde \Lambda^*}\min_{\sigma=\pm 1}(\|\theta+ k\cdot\omega+\sigma \theta_s\|)\geq \delta_s,
	$$
	then
	\begin{align*}
	\|T^{-1}_{\widetilde\Lambda}(E;\theta)\|&\leq \delta_{s-1}^{-3}\delta_s^{-2}\leq \delta_s^{-3},\\
	|T^{-1}_{\widetilde\Lambda}(E;\theta)(x,y)|&\leq e^{-\frac14|\log\varepsilon|\cdot\|x-y\|}\ {\rm for}\ \|x-y\|> N_s^{c^3},
	\end{align*}
	where
	$$
	\widetilde\Lambda^*=\left\{k\in\frac12\Z^d:\ {\rm dist}(k, \widetilde\Lambda)\leq \frac12\right\}.
	$$

\end{thm}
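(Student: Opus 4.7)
The proof will run by multi-scale induction on $s\ge 0$. The inductive hypothesis is the full package encoded in Statement \ref{state}: existence and symmetric structure of the resonance pair $\pm\theta_s(E)$, the combinatorial description of $s$-\textbf{good} sets and of the resonant blocks $\widetilde\Omega_k^s$, and the two Green's function estimates at scale $s$; Theorem \ref{ggg} is the direct output. The base case $s=0$ is elementary: define $\theta_0\in[0,\tfrac12]$ by $\cos 2\pi\theta_0=E$, so that diagonal near-resonances $\cos 2\pi(\theta+k\cdot\omega)\approx E$ correspond exactly to $\theta+k\cdot\omega\approx \pm\theta_0$; the \emph{symmetric pair} structure is thus forced at the outset by the parity of the cosine potential. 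On any set free of such near-resonances, the diagonal of $H_\Lambda-E$ dominates the $O(\varepsilon)$ Laplacian by a factor $\gtrsim\delta_0=\varepsilon^{1/10}$, and a Neumann series delivers the norm bound together with exponential off-diagonal decay at rate $\sim|\log\varepsilon|$.

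For the inductive step, fix a finite $\Lambda$ that is $(s+1)$-\textbf{good}. I would first identify all $k\in\Lambda$ with $\min_{\sigma=\pm 1}\|\theta+k\cdot\omega+\sigma\theta_s\|<\delta_s$; the Diophantine condition \eqref{bruno} combined with the very slow growth $N_{s+1}\sim|\log(\gamma/\delta_s)|^{1/(c^5\tau)}$ forces these resonances to cluster into well-separated groups, and the reflection $\theta\mapsto-\theta$ together with the uniqueness of small resonant shifts pairs these clusters up symmetrically, one near $+\theta_s$ and one near $-\theta_s$. Each symmetric pair gets enclosed in a scale-$s$ resonant block $\widetilde\Omega_k^s$ with $s$-\textbf{good} complement. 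The resolvent identity then factors the inverse of $H_\Lambda-E$ into an $s$-good piece (controlled by the inductive bounds) and a $2\times 2$ Schur complement $M_{s+1}(\theta)$ encoding the paired resonances. The key symmetry step is that the reflection symmetry of the cosine potential, transported through the Schur reduction using the symmetric construction of $\widetilde\Omega_k^s$, forces $\det M_{s+1}(\theta)=\det M_{s+1}(-\theta)$ up to tolerable errors; a Weierstrass-preparation/implicit-function argument in the parameter $\theta$ then identifies $\det M_{s+1}$ as a near-quadratic with exactly two zeros symmetric about $0$, which we name $\pm\theta_{s+1}$.

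The norm bound at step $s+1$ is read off $M_{s+1}^{-1}$ using the factorization $|\det M_{s+1}(\theta)|\gtrsim\|\theta+k\cdot\omega-\theta_{s+1}\|\cdot\|\theta+k\cdot\omega+\theta_{s+1}\|$, and the supremum over $k\in P_s$ in the theorem arises by selecting the worst such resonant block inside $\Lambda$. The off-diagonal decay $e^{-\frac14|\log\varepsilon|\|x-y\|_1}$ for $\|x-y\|>N_s^{c^3}$ is obtained by iterating the resolvent identity across $s$-\textbf{good} stretches: the sub-exponential loss from \eqref{bruno} is suffered only when transiting a resonant block, and is beaten by the exponential gain accumulated on the complement provided the travel distance exceeds $N_s^{c^3}$. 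For the second (enlargement) statement, I would take $\widetilde\Lambda$ to be $\Lambda$ thickened by $50N_s^{c^2}$ so that all scale-$s$ resonant clusters intersecting $\Lambda$ are fully absorbed; under the non-resonance hypothesis on $\widetilde\Lambda^*$ no new resonant cluster can straddle the boundary, so $\widetilde\Lambda$ is automatically $(s+1)$-\textbf{good} and the preceding bounds apply directly.

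The principal obstacle, and the point where Bourgain's one-dimensional argument from \cite{Bou00} requires genuine reworking, is the construction of the resonant blocks $\widetilde\Omega_k^s$ in $\mathbb{Z}^d$: unlike intervals in $d=1$ they cannot be cubes, and their geometry must simultaneously track the irregular location of scale-$s$ resonances, preserve the reflection symmetry $\theta_s\leftrightarrow-\theta_s$ through the Schur reduction at every scale, and remain small enough ($\ll N_{s+1}$) for the inductive $s$-\textbf{good} bounds to close. A secondary technical difficulty is carrying the sub-exponential condition \eqref{bruno} through the induction without degrading the genuinely exponential decay rate $\frac14|\log\varepsilon|$ in the Green's function estimate; this is precisely what dictates the rather delicate hierarchy between $\delta_s$, $N_s$, and the constant $c$ satisfying $1<c^{20}<1/\tau$.
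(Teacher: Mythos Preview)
Your outline captures the right large-scale architecture---multi-scale induction via Statement~\ref{state}, Neumann series at stage $0$, Schur reduction onto a singular core at each later stage, and resolvent-identity iteration for off-diagonal decay---and this is indeed how the paper proceeds. Two points, however, are oversimplified in ways that matter.

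First, the Schur complement $S_{s+1}(z)$ is \emph{not} a $2\times 2$ matrix: the singular core $A_k^{s+1}$ has cardinality up to $2^{s+1}$, built either as $A_k^{s+1}=A_k^s$ in the separated case $(\mathbf{C}1)_s$ or as $A_o^s\cup A_{o^*}^s$ for a mirrored pair $o,o^*$ in the colliding case $(\mathbf{C}2)_s$. What is quadratic is $\det S_{s+1}(z)$, and the paper obtains the symmetric pair $\pm\theta_{s+1}$ by showing $\det S_{s+1}(z)=\det S_s(z)+O(\delta_s^8)$ (respectively $\det S_s(z-\tfrac{l_s}{2}\cdot\omega)\det S_s(z+\tfrac{l_s}{2}\cdot\omega)+O(\delta_s^8)$) and then applying Rouch\'e, not Weierstrass preparation. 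The evenness $\det M_{s+1}(z)=\det M_{s+1}(-z)$ (Lemma~\ref{even}) is what forces $\theta_{s+1}'=-\theta_{s+1}$, and this requires $\widetilde\Omega_k^{s+1}-k$ to be symmetric about the origin; achieving that when the two resonant clusters $Q_s^+$ and $\widetilde Q_s^-$ collide is precisely why the paper introduces the shifts $l_s$ and allows centers $k\in\mathbb{Z}^d+\tfrac12\sum_{i}l_i\subset\tfrac12\mathbb{Z}^d$. Your sketch omits the $(\mathbf{C}1)/(\mathbf{C}2)$ dichotomy and the half-integer centers entirely, yet that is the device that makes the symmetry propagate through the induction.

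Second, the enlarged set $\widetilde\Lambda$ in the last part of the theorem is not simply the $50N_s^{c^2}$-thickening of $\Lambda$: a bare thickening need not satisfy the nesting property $\widetilde\Lambda\cap\widetilde\Omega_{k'}^{s'}\neq\emptyset\Rightarrow\widetilde\Omega_{k'}^{s'}\subset\widetilde\Lambda$ required for $s$-goodness. The paper builds $\widetilde\Lambda$ by an iterative absorption procedure (the chain $J_{r,0}\subsetneq J_{r,1}\subsetneq\cdots$ in the proof of $(\mathbf{a})_{s+1}$), adjoining at each sub-scale $s-r$ the overlapping blocks $\Lambda_{2N_{s-r}^{c^2}}(h)$, and uses the separation estimate \eqref{contradiction} to bound the number of absorption rounds by a fixed constant, so that the total growth stays within $50N_s^{c^2}$. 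Without that construction you cannot conclude $\widetilde\Lambda$ is $s$-good, and the second block of estimates does not follow.
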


Let us refer to Section \ref{GFES} for a complete description of our Green's function estimates.


\subsubsection{Arithmetic Anderson localization and H\"older continuity of the IDS}

As the application of quantitative Green's function estimates, we first  prove the following arithmetic version of Anderson localization for $H(\theta)$. Let $\tau_1>0$ and define
$$
\Theta_{\tau_1}=\{(\theta, \omega)\in\T\times\mathcal{R}_{\tau,\gamma}:\   \|2\theta+n \cdot\omega\|\leq e^{-\| n\|^{\tau_1}}\ {\rm holds\  for\  finitely\  many }\   n\in\Z^d\}.
$$

We have
\begin{thm}\label{thm1}
	Let $H(\theta)$ be given by \eqref{model} and let $0<\tau_1<\tau$. Then there exists some $\varepsilon_0=\varepsilon_0(d,\tau, \gamma)>0$ such that, if $0<\varepsilon\leq\varepsilon_0$, then for  $(\theta,\omega)\in \Theta_{\tau_1}$,  $H(\theta)$ satisfies the Anderson localization.
\end{thm}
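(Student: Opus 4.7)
The plan is to deduce Theorem~\ref{thm1} from Theorem~\ref{ggg} by the standard Shnol/Simon reduction: I will show that under $(\theta,\omega)\in\Theta_{\tau_1}$ every polynomially bounded solution $\psi:\Z^d\to\C$ of $H(\theta)\psi=E\psi$ decays exponentially, which forces the spectrum to be pure point with exponentially localized eigenfunctions. So I fix such $E\in[-2,2]$ in the spectrum, a generalized eigenfunction $\psi$ with $|\psi(n)|\leq C(1+\|n\|)^D$, and the sequence $\{\theta_s(E)\}$ provided by Theorem~\ref{ggg}.

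For every $n\in\Z^d$ with $\|n\|$ large, I want to produce a cube $\Lambda_n\ni n$ of side $L_n\sim\|n\|/4$ whose thickening $\widetilde\Lambda_n$ satisfies the nonresonance hypothesis of Theorem~\ref{ggg} at a scale $s=s(n)$ with $N_s^{c^3}\ll L_n$. Granted this, the Poisson expansion
\[
\psi(n)=-\sum_{\substack{n'\in\widetilde\Lambda_n,\ n''\notin\widetilde\Lambda_n\\ \|n'-n''\|_1=1}}T_{\widetilde\Lambda_n}^{-1}(E;\theta)(n,n')\,\psi(n''),
\]
combined with the off-diagonal bound $|T_{\widetilde\Lambda_n}^{-1}(E;\theta)(n,n')|\leq e^{-\frac14|\log\varepsilon|\cdot\|n-n'\|}$ from Theorem~\ref{ggg} (valid once $\|n-n'\|>N_s^{c^3}$) and the polynomial bound $|\psi(n'')|\leq C(2\|n\|)^D$, will yield
\[
|\psi(n)|\leq |\partial\widetilde\Lambda_n|\cdot e^{-\frac14|\log\varepsilon|\cdot L_n}\cdot C(2\|n\|)^D\leq e^{-c\|n\|},
\]
the desired exponential decay. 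A Schnol-type theorem then identifies the spectral measure with the closure of such generalized eigenvalues, yielding Anderson localization.

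The key step is therefore the construction of $\Lambda_n$. I pick $s=s(n)$ so that $4N_s^{c^3}\leq\|n\|\leq 2N_{s+1}^{c^5}$; adjacent such intervals overlap since $c^5>c^3$, so every large $n$ is caught by some $s$. The diameter of $\widetilde\Lambda_n^*$ is at most $\|n\|/2+O(N_s^{c^2})$. Two resonances of \emph{opposite} sign in $\widetilde\Lambda_n^*$, i.e.\ $k,k'$ with $\|\theta+k\cdot\omega-\theta_s\|<\delta_s$ and $\|\theta+k'\cdot\omega+\theta_s\|<\delta_s$, would produce $\|2\theta+(k+k')\cdot\omega\|<2\delta_s$ with $\|k+k'\|\leq 2\|n\|+O(N_s^{c^2})$; invoking $(\theta,\omega)\in\Theta_{\tau_1}$ (applicable once $\|k+k'\|$ exceeds a $(\theta,\omega)$-dependent constant) this forces
\[
(2\|n\|+O(N_s^{c^2}))^{\tau_1}>|\log(\gamma/(2\delta_s))|\sim N_{s+1}^{c^5\tau},
\]
i.e.\ $\|n\|\gtrsim N_{s+1}^{c^5\tau/\tau_1}$, which since $\tau_1<\tau$ contradicts $\|n\|\leq 2N_{s+1}^{c^5}$ for large $s$. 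Two resonances of the \emph{same} sign at $k,k''\in\widetilde\Lambda_n^*$ give $\|(k-k'')\cdot\omega\|<2\delta_s$ with $\|k-k''\|\leq\|n\|/2+O(N_s^{c^2})$, while $\omega\in\mathcal R_{\tau,\gamma}$ forces $\|k-k''\|>|\log(\gamma/(2\delta_s))|^{1/\tau}\sim N_{s+1}^{c^5}$, again contradicting the upper bound on $\|n\|$. Hence $\widetilde\Lambda_n^*$ contains at most one resonant point, and a translation of $\Lambda_n$ by at most $50N_s^{c^2}$ (allowed by the thickening budget in Theorem~\ref{ggg}) absorbs it into $\widetilde\Lambda_n\setminus\Lambda_n$, achieving the nonresonance hypothesis.

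The main obstacle, and the whole reason for the symmetric resonance description $\pm\theta_s$ delivered by Theorem~\ref{ggg}, is precisely the opposite-sign exclusion above: a one-sided description $\|\theta+k\cdot\omega-\theta_s\|<\delta_s$ in the spirit of earlier MSA schemes gives the quantity $k+k'$ no arithmetic meaning, and the hypothesis in $\Theta_{\tau_1}$ on $\|2\theta+n\cdot\omega\|$ cannot be brought to bear. With this geometric/arithmetic bookkeeping in hand, the passage from exponential decay of generalized eigenfunctions to arithmetic Anderson localization is routine.
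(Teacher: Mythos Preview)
Your reduction via Shnol's theorem and the opposite-sign/same-sign dichotomy are exactly the right moves, and your observation that the symmetry $\theta_{s,1}+\theta_{s,2}=0$ is what makes the arithmetic hypothesis on $\|2\theta+n\cdot\omega\|$ bite is the heart of the matter. However, the last step---``a translation of $\Lambda_n$ by at most $50N_s^{c^2}$ absorbs it into $\widetilde\Lambda_n\setminus\Lambda_n$''---does not work. The nonresonance hypothesis in Theorem~\ref{ggg} is a condition on \emph{all} of $\widetilde\Lambda^*$, not just on $\Lambda_n$; having the single resonant point land in the collar $\widetilde\Lambda_n\setminus\Lambda_n$ still violates it. And a translation of order $50N_s^{c^2}$ cannot push a point out of a region of diameter $\sim\|n\|/4\gg N_s^{c^2}$. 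The deeper problem is that your argument gives no information about \emph{where} the single resonant point sits: nothing you have written excludes the possibility that it is $n$ itself (or within $N_s^{c^3}$ of $n$), in which case no placement of $\Lambda_n$ can both keep $n$ deep inside and keep the resonance out of $\widetilde\Lambda_n^*$.

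The paper closes this gap with an extra step you are missing: it first shows, by contradiction using $\psi\not\equiv0$, that for all large $s$ the resonant set $Q_s$ must meet a neighborhood $\Lambda_{2N_s^{c^4}}$ of the \emph{origin} (otherwise one could cover the origin by an $s$-good region along a subsequence and force $\psi\equiv0$ via Poisson). Knowing one resonance $k\in Q_s^+$ sits near $0$, the same-sign separation pushes every other element of $Q_s^+$ out to distance $\gtrsim N_{s+1}^{c^5}$, while your opposite-sign argument (now with one of the two points pinned near $0$, so $\|k+l\|$ is controlled by $\|l\|$) excludes $Q_s^-$ from the annulus $\Lambda_{10N_{s+1}^{c^4}}\setminus\Lambda_{3N_s^{c^4}}$. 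One then takes $\widetilde U_s^*$ to be (a slight thickening of) this annulus, which is genuinely resonance-free, contains $n$ with ${\rm dist}(n,\partial\widetilde U_s^*)\gtrsim\|n\|$, and Poisson finishes the job. So your separation estimates are correct and essential, but they must be combined with the ``resonance is pinned near the origin'' argument rather than with a translation trick.
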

\begin{rem}\label{remfp}
It is easy to check both  ${\rm mes}(\T\setminus\Theta_{\tau_1,\omega})=0$ and  ${\rm mes}(\mathcal{R}_{\tau,\gamma}\setminus\Theta_{\tau_1,\theta})=0$, where $\Theta_{\tau_1,\omega}=\{\theta\in\T:\ (\theta,\omega)\in\Theta_{\tau_1}\}$, $\Theta_{\tau_1,\theta}=\{\omega\in\mathcal{R}_{\tau,\gamma}:\ (\theta,\omega)\in\Theta_{\tau_1}\}$ and ${\rm mes}(\cdot)$ denotes the Lebesgue measure.  Thus Anderson localization can be established either by fixing  $\omega\in\mathcal{R}_{\tau,\gamma}$ and removing $\theta$ in the spirit of \cite{Jit99},  or by fixing $\theta\in \T$ and removing $\omega$ in the spirit of \cite{BG00,BGS02}.
\end{rem}

The second application is a proof  of  the  finite volume version of $(\frac12-)$-H\"older continuity of the IDS for $H(\theta).$  For a finite set $\Lambda$, denote by $\#\Lambda$ the cardinality of $\Lambda.$ Let
$$\mathcal{N}_{\Lambda}(E;\theta)=\frac{1}{\#\Lambda}\#\{\lambda\in\sigma({H_\Lambda(\theta)}):\  \lambda\leq E\}$$
and denote by
\begin{align}\label{ids}
	\mathcal{N}(E)=\lim_{N\to\infty}\mathcal{N}_{\Lambda_N}(E;\theta)
\end{align}
the IDS, where $\Lambda_N=\{k\in\Z^d:\ \|k\|\leq N\}$ for $N>0$. It is well-known that the limit in  \eqref{ids} exists and is independent of $\theta$  for a.e. $\theta$. 

\begin{thm}\label{thm2}
	Let $H(\theta)$ be given by \eqref{model} and let $\omega\in \mathcal{R}_{\tau,\gamma}$. Then there exists some $\varepsilon_0=\varepsilon_0(d,\tau,\gamma)>0$ such that if $0<\varepsilon\leq\varepsilon_0$, then for any small $\mu>0 $ and    $0<\eta
	<\eta_0(d,\tau,\gamma,\mu )$, we have for sufficiently large $N$ depending on $\eta$,
	\begin{align}\label{hold}
		\sup_{\theta\in\T, E\in\R}\left(\mathcal{N}_{\Lambda_N}(E+\eta;\theta)-\mathcal{N}_{\Lambda_N}(E-\eta;\theta)\right)\leq\eta^{\frac{1}{2}-\mu}.
	\end{align}	
	In particular, the IDS is  H\"older continuous of exponent $\iota$ for any $\iota\in(0, \frac{1}{2})$.  
\end{thm}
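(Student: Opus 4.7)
The plan is to use Theorem \ref{ggg} to convert the counting of eigenvalues of $H_{\Lambda_N}(\theta)$ inside a window $[E-\eta,E+\eta]$ into a counting of resonant lattice sites in $\Lambda_N$, and then to bound the number of resonant sites by a Weyl-type equidistribution argument powered by the Bruno condition \eqref{bruno}. For given small $\eta>0$, I would first choose the scale $s=s(\eta)$ as the smallest integer with $\delta_s\le \eta$; the recursion $|\log(\gamma/\delta_{s+1})|=|\log(\gamma/\delta_s)|^{c^5}$ then forces $s\approx\log\log\eta^{-1}$ and $\delta_{s-1}^{-1}=\eta^{-o(1)}$ as $\eta\downarrow 0$, so all $\delta_{s-1}$-factors appearing below are absorbed into $\eta^{-\mu/2}$.

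First I would observe that
\[
\mathcal{N}_{\Lambda_N}(E+\eta;\theta)-\mathcal{N}_{\Lambda_N}(E-\eta;\theta)=\frac{1}{\#\Lambda_N}\#\{j:E_j(\theta)\in[E-\eta,E+\eta]\},
\]
and exploit the second half of Theorem \ref{ggg} as a Wegner-type estimate. Any enlarged box $\widetilde\Lambda$ whose sites $k$ all satisfy $\min_{\sigma=\pm1}\|\theta+k\cdot\omega+\sigma\theta_s\|\ge \delta_s$ has $\|T^{-1}_{\widetilde\Lambda}(E;\theta)\|\le\delta_s^{-3}$, so ${\rm dist}(E,\sigma(H_{\widetilde\Lambda}(\theta)))\ge\delta_s^3\gg\eta$. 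Together with the exponential off-diagonal decay of $T^{-1}_{\widetilde\Lambda}$ and a telescoping resolvent identity, this forces any eigenfunction $\psi_j$ with $E_j(\theta)\in[E-\eta,E+\eta]$ to carry non-negligible $\ell^2$-mass inside some resonant block $\widetilde\Omega_k^s$ indexed by a $k\in\Lambda_N$ with $\min_\sigma\|\theta+k\cdot\omega+\sigma\theta_s\|<\delta_s$. Refining this via the sharper product bound $\|\theta+k\cdot\omega-\theta_s\|\cdot\|\theta+k\cdot\omega+\theta_s\|<\delta_{s-1}^3\eta$ (obtained by inverting the first Green's function estimate of Theorem \ref{ggg}) and applying elementary $2\times 2$ perturbation theory to the symmetric-zero resonant block at $k$, each such $k$ contributes $O(1)$ eigenvalues of $H_{\Lambda_N}(\theta)$ to the window.

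Next I would count the resonant sites. The condition $\min_{\sigma=\pm1}\|\theta+k\cdot\omega+\sigma\theta_s\|\le \sqrt{\delta_{s-1}^3\,\eta}=\eta^{1/2-o(1)}$ localises $k\cdot\omega\bmod 1$ to one of two intervals $I_{\pm}(\theta)\subset\T$ of total length at most $\eta^{1/2-\mu/2}$. The Bruno-type condition \eqref{bruno} and standard equidistribution estimates for $\{k\cdot\omega\}_{k\in\Lambda_N}$ (whose discrepancy is sub-polynomial in $N$) yield, once $N$ is large enough in terms of $\eta$,
\[
\#\{k\in\Lambda_N : k\cdot\omega\in I_\sigma(\theta)\}\le |I_\sigma|\cdot\#\Lambda_N\cdot(1+o(1)).
\]
Summing over $\sigma=\pm 1$, absorbing all polylog losses into the exponent, and dividing by $\#\Lambda_N$ produces \eqref{hold}; taking $\mu<\tfrac{1}{2}-\iota$ and passing to the $N\to\infty$ limit using \eqref{ids} then gives $\iota$-H\"older continuity of $\mathcal{N}$ for every $\iota\in(0,\tfrac12)$.

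The hard step is the first reduction: converting the bound ${\rm dist}(E,\sigma(H_{\widetilde\Lambda}))\ge\delta_s^3$ into a quantitative count of \emph{how many} eigenvalues of the full $H_{\Lambda_N}(\theta)$ sit inside $[E-\eta,E+\eta]$. One must handle sites near $\partial\Lambda_N$ by passing to the slightly larger set $\widetilde{\Lambda_N}$ from Theorem \ref{ggg} and using $\#\widetilde{\Lambda_N}/\#\Lambda_N\to 1$, and one must prevent a single resonant site from generating too many eigenvalues in the window. The latter is precisely where the \emph{symmetric} structure of the two zeros $\pm\theta_s$ enters: the refined pair-resonance $\|\theta+k\cdot\omega-\theta_s\|\cdot\|\theta+k\cdot\omega+\theta_s\|<\delta_{s-1}^3\eta$ (rather than a single-zero resonance) is what produces the $\eta^{1/2}$ rate; without it one would only recover a Lipschitz-like counting and at best a weak H\"older exponent of the form \eqref{weakids}.
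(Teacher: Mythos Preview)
Your overall strategy---choose a scale $s$ adapted to $\eta$, identify resonant sites via the pair condition on $\pm\theta_s$, count them by equidistribution, and absorb $\delta_{s-1}$-losses into $\eta^{-\mu/2}$---is exactly the paper's. But two of your intermediate steps do not work as written.

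First, with your choice of $s$ (smallest integer with $\delta_s\le\eta$) the inequality ${\rm dist}(E,\sigma(H_{\widetilde\Lambda}))\ge\delta_s^3\gg\eta$ is false: $\delta_s\le\eta$ gives $\delta_s^3\le\eta^3\ll\eta$. The paper instead picks $s$ via $|\log\delta_{s-1}|^c\le|\log\eta|<|\log\delta_s|^c$ and, crucially, does \emph{not} use the threshold $\delta_s$ for resonance but rather $\eta^{1/2-\mu/2}$; on the complement $\widetilde\Lambda_N'$ the product bound in \eqref{L2} then yields $\|T^{-1}_{\widetilde\Lambda_N'}\|<\delta_{s-1}^{-3}\eta^{\mu-1}<\tfrac12\eta^{-1}$ directly.

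Second, and more seriously, the step ``each resonant $k$ contributes $O(1)$ eigenvalues to the window by elementary $2\times 2$ perturbation theory'' is not justified. The resonant block $\widetilde\Omega_k^s$ has size $\sim N_s^{c^2}$, and even after passing to the Schur complement $S_s$ (which is $\#A_k^s\times\#A_k^s$ with $\#A_k^s\le 2^s$, not $2\times2$) you would only control eigenvalues of the \emph{isolated} block, not of $H_{\Lambda_N}$. Coupling to the rest of $\Lambda_N$ could a priori pull additional eigenvalues into $[E-\eta,E+\eta]$, and your telescoping resolvent sketch does not rule this out. The paper bypasses this entirely with a rank argument: every eigenvector $\xi_r$ with eigenvalue in $[-\eta,\eta]$ satisfies $\|R_{\widetilde\Lambda_N'}\xi_r - P_H R_{\widetilde\Lambda_N'}\xi_r\|\le\tfrac12$ for a fixed subspace $H$ of dimension $\le \#(\widetilde\Lambda_N\setminus\widetilde\Lambda_N')+\#(\widetilde\Lambda_N\setminus\Lambda_N)$, and then an elementary projection inequality (Lemma~\ref{fx}) gives $R\le \tfrac43(\dim H + \#(\Lambda_N\setminus\widetilde\Lambda_N'))$. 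This replaces your unproven $O(1)$-per-site claim by a clean dimension count, at the harmless cost of a factor $\#\Omega_k^s\sim N_s^{cd}=\eta^{-o(1)}$ which is absorbed into $\eta^{-\mu/2}$.
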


 
Let us give some remarks on our results.

\begin{itemize}

\item[(1)]
The Green's function estimates can be extended to the exponential long-range hopping case, and may not be  restricted to the cosine potential.   
 Except for the proof of arithmetic Anderson localization and the finite volume version of  $(\frac12-)$-H\"older regularity of the IDS, the  quantitative Green's function estimates  should have potential applications in other  problems, such as the estimates of Lebesgue measure of the spectrum, dynamical localization, the estimates of level spacings of eigenvalues  and finite volume version of  localization.  We can even expect fine results in dealing with the Melnikov's persistency problem (cf. \cite{Bou97}) by employing our Green's function estimates method.

\item[ (2)]  
As mentioned previously, Ge-You \cite{GY20} proved the first arithmetic Anderson localization result for  higher dimensional QP operators with  the  exponential long-range hopping and the cosine potential via their reducibility method. 
 Our result is valid for frequencies satisfying the sub-exponential non-resonant condition (cf. \eqref{bruno}) of   R\"ussmann type  \cite{Rus80}, which slightly generalizes the Diophantine type localization result of \cite{GY20}.  While the  R\"ussmann type condition is  sufficient for the use of  classical KAM method, it is not clear if such condition still suffices for the validity of MSA method.  Definitely,  the localization result of both \cite{GY20}  and the present work   is   perturbative   \footnote{In fact,   Bourgain \cite{Bou02} has proven  that the  non-perturbative localization can not  be expected  in dimensions $d\geq2$.   More precisely,   consider 	${H}^{(2)}=\lambda\Delta+2\cos2\pi(\theta+n\cdot \omega)\delta_{n,n'}$ on $\Z^2$. Using Aubry duality together with  result of Bourgain \cite{Bou02}  yields  for any $\lambda\neq0$, there exists a set  $\Omega\subset\mathbb{T}^2$ of  positive measure with the following property, namely, for  $\omega\in \Omega$,   there exists a  set $\Theta\subset\mathbb{T}$ of positive measure, s.t.,  for $\theta\in \Theta$, ${H}^{(2)}$ does not satisfy Anderson localization.}.  Finally, since our proof of arithmetic  Anderson localization is based on Green's function estimates,  it  could  be improved to obtain  the finite volume version of Anderson localization as that obtained in \cite{GS11}.

\item[(3)]  Apparently,  using the Aubry duality  together with Amor's result \cite{Amo09} has already led to the $\frac12$-H\"older continuity of the IDS for higher dimensional QP operators with small exponential long-range hopping and the cosine potential assuming Diophantine frequencies.  So our result of $(\frac12-)$-H\"older continuity is weaker than that of \cite{Amo09} in the Diophantine frequencies case. However, we want to emphasize  that the method of Amor seems only valid for estimating the limit   $\mathcal{N}(E)$ and provides no precise information on the  finite volume quantity   $\mathcal{N}_{\Lambda}(E;\theta)$. In this context, our result  (cf.  \eqref{hold})  is also new as it gives uniform  upper bound  on the number of eigenvalues inside a small interval.   In addition,  our result also improves the upper bound  on the number of eigenvalues  of Schlag (cf. Proposition 2.2 of \cite{Sch01})  in the special case that the potential is given by the cosine function.

 \end{itemize}

\subsection{Notations and structure of the paper}
   \begin{itemize}
  \item Given $A\in\C$ and $B\in\C$, we write $A\lesssim B$ (resp. $A\gtrsim$ B) if there is some $C=C(d,\tau,\gamma)>0$ depending only on $d, \tau, \gamma$ so that $|A|\leq C |B|$ (resp. $|A|\geq C|B|$).  We also denote  $A\sim B\Leftrightarrow \frac{1}{C}<\left| \frac{A}{B}\right|<C $,  and  for some $D>0$, $A\stackrel{D}{\sim}B\Leftrightarrow \frac{1}{CD}<\left| \frac{A}{B}\right| <CD$. 
  \item 
  The determinant of a matrix $M$ is denoted by $\det M.$ 
  \item For $n\in\R^d$, let $\|n\|_{1}:=\sum\limits_{i=1}^{d}\left|n_{i}\right| $
   	and $\|n\|:=\sup\limits _{1\leq i \leq d}\left|n_{i}\right| .$ Denote by ${\rm dist}(\cdot,\cdot)$ the distance induced by $\|\cdot\|$ on $\R^d,$
	and define $$\operatorname{diam}\Lambda=\sup_{k,k'\in \Lambda}\|k-k'\|.
	$$
	Given $n\in\Z^d$, $\Lambda_1\subset \frac12\Z^d$ and $L>0$,  denote $\Lambda_{L}(n)=\{k\in\Z^d:\ \|k-n\|\leq L\}$
	and $\Lambda_{L}(\Lambda_1)=\{k\in\Z^d:\ {\rm dist}(k, \Lambda_1)\leq L\}$. In particular,   write $\Lambda_L=\Lambda_L(0).$
	
		 \item  Assume $\Lambda'\subset\Lambda\subset \Z^d$. Define the relatively boundaries as  $\partial^+_{\Lambda}\Lambda'=\{k\in\Lambda:\  {\rm dist}(k,\Lambda')=1\}$, $\partial^-_{\Lambda}\Lambda'=\{k\in \Lambda :\  {\rm dist}(k,\Lambda\setminus\Lambda')=1\}$ and $\partial_{\Lambda}\Lambda'=\{(k,k'):\ \|k-k'\|=1, k\in  \partial^-_{\Lambda}\Lambda',k'\in  \partial^+_{\Lambda}\Lambda'\}$.
\item Let $\Lambda\subset\Z^d$ and let $T:\ \ell^2(\Z^d)\to\ell^2(\Z^d)$ be a linear operator. Define $T_{\Lambda}=R_{\Lambda} T R_{\Lambda}$, where $R_\Lambda$  is the restriction operator.  Denote by $\left\langle \cdot,\cdot \right\rangle$ the standard inner product on $\ell^2(\Z^d).$  Set $T_\Lambda(x,y)=\left\langle \delta_x,T_\Lambda \delta_y \right\rangle \ \text{for $x,y\in \Lambda$}$. By $\|T_\Lambda\|$ we mean the standard operator norm of $T_\Lambda.$ The spectrum of the operator $T$ is denoted by $\sigma(T).$ Finally, $I$ typically  denotes   the  identity operator. 
 \end{itemize}

The paper is organized as follows.  The key ideas of the proof are  introduced in \S 2. The proofs of Theorems \ref{ggg}, \ref{thm1} and \ref{thm2} are presented in \S3, \S4 and \S5, respectively.  Some useful estimates can be found in the appendix.

\section{Key ideas of the proof}

The main scheme of our proof is definitely adapted from Bourgain \cite{Bou00}.  The key ingredient of the proof in \cite{Bou00} is that the resonances in dealing with Green's function estimates can be completely determined by the roots of some quadratic polynomials. The polynomials were produced in a  Fr\"ohlich-Spencer  type  MSA  induction procedure.  However, in the estimates of Green's functions restricted on the resonant blocks,  Bourgain applied directly the Cramer's rule and provided estimates on  certain determinants. It turns out  these determinants  can be well controlled  via estimates of previous induction steps,  the Schur complement argument and  Weierstrass preparation theorem.  It is the preparation type technique  that  yields  the desired quadratic polynomials.  We emphasize that this new method of Bourgain  is  fully free from  eigenvalues variations or eigenfunctions parametrization.
\smallskip

However, in order to extend the  method to achieve arithmetic version of Anderson localization in  higher dimensions, some new ideas are required:\smallskip

\begin{itemize}
	\item The off-diagonal decay of the Green's function  obtained by Bourgain \cite{Bou00} is sub-exponential rather than exponential, which is not sufficient for a proof of Anderson localization. We resolve this issue by modifying  the definitions  of the resonant blocks $\Omega_k^s\subset\widetilde \Omega_k^s\subset\Z^d$, and allowing 
	$${\rm diam}\  \Omega_k^s\sim ({\rm diam}\ \widetilde\Omega_k^s) ^\rho,\  0<\rho<1.$$
	This sublinear bound is crucial for a proof of exponential off-diagonal decay. In the argument of Bourgain,  it requires actually that  $\rho=1.$  Another issue we want to highlight is that Bourgain just  provided  outputs  of iterating resolvent identity in many places of the paper \cite{Bou00}, but did not present the details.  This motivates  us to write down the whole iteration arguments that is also important to the exponential decay estimate. 
	
	\item  To prove Anderson localization, one has to eliminate the  energy  $E\in\R$ appeared in the  Green's function estimates by removing  $\theta$ or $\omega$  further.  Moreover, if one wants to prove  an arithmetic version of Anderson localization,  a geometric description of resonances  (i.e.,  the symmetry of zeros of certain functions appearing as the perturbations of quadratic polynomials in the present context) is essential. Precisely, at the $s$-th iteration step,  using the Weierstrass preparation theorem Bourgain \cite{Bou00} had shown the existence  of zeros $\theta_{s,1}(E)$ and $\theta_{s,2}(E)$,   but  provided  no symmetry information.  Indeed, the symmetry property of  $\theta_{s,1}(E)$ and $\theta_{s,2}(E)$  relies highly  on  that of resonant blocks $\widetilde\Omega_k^s$.  However, in the construction of $\widetilde\Omega_k^s$ in \cite{Bou00},  the  symmetry property is missing. In this paper,  we prove in fact 
	$$\theta_{s,1}(E)+\theta_{s,2}(E)=0.$$
	The main idea is that  we   reconstruct $\widetilde\Omega_k^s$ so that it is  symmetrical about $k$ and allow  the center $k\in\frac12 \Z^d. $  
	
	\item In the construction of resonant blocks  \cite{Bou00},  the property that 
	\begin{align}\label{inclu}
		\widetilde\Omega_{k'}^{s'}\cap\widetilde\Omega^s_{k}\neq \emptyset\Rightarrow\widetilde\Omega_{k'}^{s'}\subset\widetilde\Omega_k^s\  {\rm for}\ s'<s
	\end{align}
	plays a center role. In the $1D$ case,  $\widetilde\Omega_k^s$ can be defined as an interval so that \eqref{inclu}  holds true. This interval structure of  $\widetilde\Omega_k^s$  is important to get  desired estimates  using resolvent identity. However, to generalize  this argument to higher dimensions, one  needs to give up the ``interval''  structure of $\widetilde\Omega_k^s$  in order to fulfill  the property \eqref{inclu}.  As a result, the geometric description of $\widetilde\Omega_k^s$ becomes significantly complicated,  and  the estimates relying on resolvent identity remain unclear. We address this issue by proving that  $\widetilde\Omega_k^s$ can be  constructed satisfying \eqref{inclu} and  staying in some enlarged cubes, such as 
	$$\Lambda_{N_s^{c^2}}\subset\widetilde\Omega_k^s-k\subset \Lambda_{N_s^{c^2}+50N_{s-1}^{c^2}}.$$
	
	\item We want to mention that in the estimates of zeros for some perturbations of quadratic polynomials,  we use the standard R\'ouche theorem rather than the Weierstrass preparation theorem  as in \cite{Bou00}.   This  technical modification avoids controlling the first order derivatives of determinants and simplifies significantly the proof. 
\end{itemize}

The proofs of both Theorem \ref{thm1} and  Theorem \ref{thm2} follow  from the estimates in Theorem \ref{ggg}. 
   
\section{Quantitative Green's function estimates}\label{GFES}
The spectrum $\sigma(H(\theta))\subset[-2,2]$ since $\|H(\theta)\|\leq1+2d\varepsilon<2$ if $0<\varepsilon<\frac{1}{2d}.$
In this section, we fix
\begin{align*}
\theta \in \mathbb{T},\ E\in[-2,2].
\end{align*}
Write   $$ E=\cos 2 \pi \theta_{0}$$ with $\theta_{0} \in \mathbb{C}.$ 
 Consider \begin{equation}\label{T}
 	T(E; \theta)=H(\theta)-E=D_n\delta_{n,n'}+\varepsilon \Delta,
 \end{equation}where
 \begin{align}\label{DN}
 	D_{n}=\cos2\pi(\theta+n\cdot{\omega})-E.
 \end{align}
 For simplicity, we may omit the dependence of $T(E; \theta)$ on $E, \theta$ below. 

We will use a multi-scale analysis induction to provide estimates on Green's functions. Of particular importance is the analysis of resonances, which will be described by zeros of certain functions appearing as perturbations of some quadratic polynomials. Roughly speaking, at the $s$-th iteration step,  the  set  $Q_s\subset\frac12\Z^d$ of singular sites    will be  completely described by a pair of symmetric zeros of certain functions, i.e., 
$$
Q_{s}=\bigcup_{\sigma=\pm 1}\left\{k \in P_{s} :\  \left\|\theta+k\cdot \omega +\sigma \theta_{s}\right\|<\delta_{s}\right\}.
$$
While  the Green's functions  restricted on $Q_s$ can not be generally well controlled,  the algebraic structure of $Q_s$ combined with the  non-resonant  condition  of $\omega$ may lead to fine separation property of singular sites. As a result, one can cover $Q_s$ with a new generation of resonant blocks $\widetilde\Omega_{k}^{s+1}  (k\in P_{s+1})$.
It turns out that one can control $\|T_{\widetilde\Omega_{k}^{s+1}}^{-1}\|$ via  zeros $\pm \theta_{s+1}$ of some new functions which are also perturbations of quadratic polynomials  in the sense that
$$\det T_{\widetilde\Omega_{k}^{s+1}}\sim\delta_s^{-2} \|\theta+k\cdot \omega - \theta_{s+1}\|\cdot\|\theta+k\cdot \omega + \theta_{s+1}\|.$$
The key point is that some  $T^{-1}_{\widetilde\Omega_{k}^{s+1}}$  while $\widetilde\Omega_{k}^{s+1}$ intersecting $Q_s$ become controllable \footnote{Even more general sets , e.g., the $(s+1)$-{\bf good} sets  remain true.} in the $(s+1)$-th step. Moreover,  the completely uncontrollable singular sites form the $(s+1)$-th singular sites, i.e., 
$$Q_{s+1}=\bigcup_{\sigma=\pm 1}\left\{k \in P_{s+1} :\  \left\|\theta+k\cdot \omega +\sigma \theta_{s+1}\right\|<\delta_{s+1}\right\}.
$$


Now we turn to the statement of our main result on the multi-scale type Green's function estimates. Define the induction parameters as follows.  $$N_{s+1}=\left[|\log\frac{\gamma}{\delta_s}|^{\frac{1}{c^5\tau}}\right], \ |\log\frac{\gamma}{\delta_{s+1}}|=|\log\frac{\gamma}{\delta_s}|^{c^5}.$$
Thus $$N_s^{c^5}-1\leq N_{s+1}\leq( N_s+1)^{c^5}.$$


We first introduce the following statement.
	\begin{state}[${\bf\mathcal{P}}_s{(s\geq1)}$]\label{state}
	Let 
	\begin{align}
    \label{Q_{s-1}}
Q_{s-1}^{\pm}&=\left\{k \in P_{s-1} :\  \left\|\theta+k\cdot \omega \pm \theta_{s-1}\right\|<\delta_{s-1}\right\},
		\  Q_{s-1}=Q_{s-1}^{+}\cup Q_{s-1}^{-},\\
\label{wQ_{s-1}}\widetilde{Q}_{s-1}^{\pm}&=\left\{k \in P_{{s-1}} :\  \left\|\theta+k\cdot \omega \pm \theta_{s-1}\right\|<\delta_{s-1}^{\frac{1}{100}}\right\},\ \widetilde{Q}_{s-1}=\widetilde{Q}_{s-1}^{+}\cup\widetilde{Q}_{s-1}^{-}.	
	\end{align}
We distinguish the following two cases: 
	\begin{itemize}
		\item[{$({\bf C}1)_{s-1} $}.]
		\begin{equation}\label{scase1}
			{\rm dist}(\widetilde{Q}_{s-1}^{-}, Q_{s-1}^{+})>100N_{s}^{c},
		\end{equation}
		\item[{$({\bf C}2)_{s-1} $}.] 
		\begin{equation}\label{scase2}
			{\rm dist}(\widetilde{Q}_{s-1}^{-}, Q_{s-1}^{+})\leq100N_{s}^{c}.
		\end{equation}
	\end{itemize}
Let  $$\mathbb{Z}^d\ni l_{s-1}=\left\{\begin{aligned}
		&0 \ & {\rm if }\ \eqref{scase1}\ {\rm holds\  true},\\
		&i_{s-1}-j_{s-1} \ &{\rm if }\ \eqref{scase2}\ {\rm holds\  true},\end{aligned}
	\right.$$
	where $i_{s-1}\in Q_{s-1}^+$, $j_{s-1}\in \widetilde{Q}_{s-1}^-$ such that $\|i_{s-1}-j_{s-1}\|\leq 100N_s^{c}$ in {\rm$({\bf C}2)_{s-1} $}.
	Set $\Omega_k^0=\{k\}$ ($k\in \Z^d$). Let $\Lambda\subset\Z^d$ be a finite set. We say $\Lambda$ is $(s-1)$-{\bf good} iff
\begin{equation}\label{sgood}
		\left\{\begin{aligned}
			&k'\in Q_{s'},\widetilde{\Omega}_{k'}^{s'}\subset\Lambda,\widetilde{\Omega}_{k'}^{s'}\subset \Omega_{k}^{s'+1} \Rightarrow \widetilde{\Omega}_k^{s'+1} \subset\Lambda\ { \rm for }\ s'<s-1,\\
			&\{k\in P_{s-1}:\   \widetilde{\Omega}_{k}^{s-1}\subset \Lambda \}\cap Q_{s-1}= \emptyset.
		\end{aligned}\right.
\end{equation}
	Then  
	\begin{itemize}
		\item[$(\bf a)_s$.] There are $P_s\subset Q_{s-1}$ so that the following holds true. 
		We have in the case {\rm $({\bf C}1)_{s-1}$} that
		\begin{equation}\label{2233}
			P_s=Q_{s-1}\subset\left\{k\in \mathbb{Z}^d+\frac{1}{2}\sum_{i=0}^{s-1}l_i:\  \min_{\sigma=\pm1}\|\theta+k\cdot\omega+\sigma\theta_{s-1}\|<\delta_{s-1}\right\}.
		\end{equation}
		For the  case {\rm $({\bf C}2)_{s-1}$}, we have
		\begin{align}\label{3322}
			\begin{split}
				P_s&\subset\left\{k\in \mathbb{Z}^d+\frac{1}{2}\sum_{i=0}^{s-1}l_i :\ \|\theta+k\cdot\omega\|<3\delta_{s-1}^{\frac{1}{100}}\right\},\\
				{\rm or }\ 	P_s&\subset\left\{k\in \mathbb{Z}^d+\frac{1}{2}\sum_{i=0}^{s-1}l_i :\ \|\theta+k\cdot\omega+\frac{1}{2}\|<3\delta_{s-1}^{\frac{1}{100}}\right\}.
			\end{split}
		\end{align}
		For every $k \in P_s$, we can find resonant blocks  $\Omega_{k}^{s}$, $\widetilde{\Omega}_k^{s}\subset\Z^d$ with the following properties. If \eqref{scase1} holds true, then
		\begin{align*}
			&\Lambda_{N_s}(k)\subset \Omega_{k}^{s}\subset \Lambda_{N_s+50N_{s-1}^{c^2}}(k),\\
			&\Lambda_{N_s^c}(k)\subset \widetilde{\Omega}_k^{s}\subset \Lambda_{N_s^c+50N_{s-1}^{c^2}}(k),
		\end{align*} 
		and if \eqref{scase2} holds true, then
		\begin{align*}
	\Lambda_{100N_s^c}(k)&\subset \Omega_{k}^{s}\subset \Lambda_{100N_s^c+50N_{s-1}^{c^2}}(k),\\
	\Lambda_{N_s^{c^2}}(k)&\subset \widetilde{\Omega}_k^{s}\subset \Lambda_{N_s^{c^2}+50N_{s-1}^{c^2}}(k).
		\end{align*} 
		These resonant blocks are constructed  satisfying  the following two properties.
		\begin{itemize}
			\item [$({\bf a}1)_s$.] \begin{equation}\label{haha}
				\left\{\begin{aligned}
					&\Omega_{k}^{s} \cap \widetilde{\Omega}_{k'}^{s'} \neq \emptyset\ \left(s^{\prime}<s\right) \Rightarrow \widetilde{\Omega}_{k'}^{s'} \subset \Omega_{k}^{s}, \\
					&\widetilde{\Omega}_{k}^{s} \cap \widetilde{\Omega}_{k'}^{s'} \neq \emptyset\ \left(s^{\prime}<s\right) \Rightarrow \widetilde{\Omega}_{k'}^{s'} \subset \widetilde{\Omega}_{k}^{s}, \\
					&{\rm dist}(\widetilde{\Omega}_{k}^{s}, \widetilde{\Omega}_{k^{\prime}}^{s})>10\operatorname{diam}\widetilde{\Omega}_{k}^{s} \ { \rm for }\  k \neq k^{\prime} \in P_{s} .
				\end{aligned}\right.	
			\end{equation}
			\item [$({\bf a}2)_s$.] The translation of $\widetilde{\Omega}_k^{s},$  $$\widetilde{\Omega}_k^{s}-k\subset \mathbb{Z}^d+\frac{1}{2}\sum_{i=0}^{s-1}l_i$$ is independent  of $k\in P_s$ and  symmetrical about the origin.
		\end{itemize}
			\item[$({\bf b})_s$.]
		$Q_{s-1}$ is covered by $\Omega_k^s\  (k\in  P_s)$ in the sense that, for every $k' \in Q_{s-1}$, there exists $k \in P_{s}$ such that
		\begin{equation}\label{gai}
			\widetilde{\Omega}_{k'}^{s-1} \subset \Omega_k^{s}.
		\end{equation}
		\item[$({\bf c})_s$.]
		For each $k\in P_s$, $\widetilde{\Omega}_k^s$ contains a subset
		$A_k^s \subset \Omega_k^s$ with $\# A_k^s\leq2^s$
		such that $\widetilde{\Omega}_k^s\setminus A_k^s$ is $(s-1)$-{\rm\bf good}.
		Moreover,  $A_k^{s}-k$ is independent  of $k$ and is symmetrical about the origin.		
		\item[$({\bf d})_s$.]   There is $\theta_s=\theta_s(E)\in\C$ with the following properties. Replacing
		$\theta +n\cdot \omega$ by $z+(n-k)\cdot \omega,$ and restricting $z$	in  \begin{equation}\label{dede}
			\{z\in \mathbb{C}:\ \min_{\sigma=\pm1} \left\|z+\sigma\theta_s\right\|<\delta_s^{\frac{1}{10^4}}\},
		\end{equation}
		then $T_{\widetilde{\Omega}_{k}^{s}}$ becomes
		$$M_s(z)=T(z)_{\widetilde{\Omega}_k^{s}-k}=\left(\cos 2 \pi(z+n \cdot\omega)\delta_{n,n'}-E+\varepsilon \Delta\right)_{\widetilde{\Omega}_k^s-k}.$$
		Then $M_s(z)_{(\widetilde{\Omega}_k^s-k)\setminus(A_k^s-k)}$ is invertible  and we can define the Schur complement
		\begin{align*}	S_s(z)&=M_s(z)_{A_k^s-k}-R_{A_k^s-k} M_s(z) R_{(\widetilde{\Omega}_k^s-k)\setminus(A_k^s-k)}\left( M_s(z)_{(\widetilde{\Omega}_k^s-k)\setminus(A_k^s-k)}\right)^{-1}\\
			&\ \ \ \times R_{(\widetilde{\Omega}_k^s-k)\setminus(A_k^s-k)} M_s(z)R_{A_k^s-k}. 
		\end{align*}
Moreover, if $z$ belongs to the set defined by \eqref{dede},  then we have
		\begin{equation}\label{10}
			\max_{x}\sum_y|S_s(z)(x,y)|<4+\sum_{l=0}^{s-1}\delta_l<10,
		\end{equation}and
		\begin{equation}\label{sSsim}
			\operatorname{det}S_{s}(z)\stackrel{\delta_{s-1}}{\sim}\|z-\theta_{s}\|\cdot \|z+\theta_{s}\|.
		\end{equation}	
		
	\item[$({\bf e})_s$.] We say a finite set $\Lambda\subset \mathbb{Z}^d$ is $s$-{\rm\bf good} iff
\begin{equation}\label{sgood}
		\left\{\begin{aligned}
			&k'\in Q_{s'},\widetilde{\Omega}_{k'}^{s'}\subset\Lambda,\widetilde{\Omega}_{k'}^{s'}\subset \Omega_{k}^{s'+1} \Rightarrow \widetilde{\Omega}_k^{s'+1} \subset\Lambda\ { \rm for }\ s'<s,\\
			&\{k\in P_s:\   \widetilde{\Omega}_{k}^s\subset \Lambda \}\cap Q_{s}= \emptyset.
		\end{aligned}\right.
\end{equation}
		Assuming  $\Lambda$ is  $s$-{\rm\bf good}, then 
		\begin{align}
			\label{L2}\|T_{\Lambda}^{-1}\|&<\delta^{-3}_{s-1}\sup_{\left\{k\in P_s:\   \widetilde{\Omega}_{k}^s\subset \Lambda \right\}}\|\theta+k\cdot\omega-\theta_{s}\|^{-1}\cdot \|\theta+k\cdot\omega+\theta_{s}\|^{-1}\\
			\nonumber&<\delta^{-3}_s,\\
			\label{exp}\left|T_{\Lambda}^{-1}(x, y)\right|&<e^{-\gamma_s\|x-y\|_1} \ { \rm for }\ \|x-y\|>N_{s}^{c^3},
		\end{align}
		where $$\gamma_0=\frac{1}{2}|\log\varepsilon|,\ \gamma_s=\gamma_{s-1}(1-N_s^{\frac{1}{c}-1})^3.$$ Thus $\gamma_s\searrow\gamma_{\infty}\geq\frac{1}{2}\gamma_0=\frac{1}{4}|\log\varepsilon|$.		
		\item[$({\bf f})_s$.] We have
		\begin{equation}\label{P>}
			\left\{k \in \mathbb{Z}^{d}+\frac{1}{2}\sum_{i=0}^{s-1}l_i:\ \min_{\sigma=\pm1}\left\|\theta+k\cdot \omega+\sigma \theta_{s}\right\|<10\delta_{s}^{\frac{1}{100}}\right\}\subset P_s.
		\end{equation}		
		\end{itemize}
	
\end{state}
	
The main theorem of this section is 	
\begin{thm}\label{Inthm}
Let $\omega\in\mathcal{R}_{\tau,\gamma}$. Then there is some $\varepsilon_0(d,\tau,\gamma)>0$ so that for $0<\varepsilon\leq \varepsilon_0$, the statement ${\bf\mathcal{P}}_s$ holds for all $s\geq 1.$ 	
	\end{thm}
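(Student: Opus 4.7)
The plan is to prove Theorem \ref{Inthm} by induction on $s$, establishing all seven clauses $(a)_s$--$(f)_s$ of Statement \ref{state} simultaneously at each step. The base case $s=1$ starts from the observation that at scale $0$, $T$ is a diagonal perturbation of $\varepsilon\Delta$; the Neumann series gives Green's function bounds with exponential rate $\gamma_0=\tfrac12|\log\varepsilon|$ on any $\Lambda\subset\Z^d$ avoiding the singular sites $\{k:\|\theta+k\cdot\omega\pm\theta_0\|<\delta_0\}$, and $\theta_0$ is defined by $E=\cos 2\pi\theta_0$. Throughout, the inductive hypothesis $\mathcal{P}_s$ feeds the step $\mathcal{P}_{s+1}$ through three mechanisms: arithmetic separation of $Q_s$ (via \eqref{bruno}), Schur complement reduction based on $(d)_s$, and resolvent-identity propagation from the off-diagonal decay in $(e)_s$.

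For the combinatorial clauses $(a)_{s+1}$, $(b)_{s+1}$, $(f)_{s+1}$, I would first exploit that any two points $k,k'\in Q_s^{\sigma}$ of the same sign satisfy $\|(k-k')\cdot\omega\|<2\delta_s$, so by \eqref{bruno} we get $\|k-k'\|\gtrsim|\log(\gamma/\delta_s)|^{1/\tau}\gg N_{s+1}^{c^4}$, which is the crucial separation between same-sign resonances. Splitting into $(C1)_s$ (cross-sign resonances far apart, so $l_s=0$ and the lattice $\Z^d+\tfrac12\sum_{i\le s}l_i$ is unchanged) and $(C2)_s$ (cross-sign resonances within $100 N_{s+1}^c$, absorbed into the half-shift $l_s=i_s-j_s$, producing the alternatives in \eqref{3322}), I then build the blocks $\Omega_k^{s+1},\widetilde\Omega_k^{s+1}$ by starting from the cube $\Lambda_{N_{s+1}}(k)$ (respectively $\Lambda_{100N_{s+1}^c}(k)$ in case $(C2)_s$), adjoining every previous-scale block $\widetilde\Omega_{k'}^{s'}$ ($s'\le s$) that intersects it, and symmetrizing about $k$ (this is where $k\in\tfrac12\Z^d$ is essential for $\theta_{s+1,1}+\theta_{s+1,2}=0$). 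The geometric series $N_{s'}\ll N_s$, together with the separation $(a1)_{s'}$, guarantees that this saturation procedure terminates within $\Lambda_{N_{s+1}^{c^2}+50N_s^{c^2}}(k)$ and inherits the nesting property \eqref{haha}. The set $A_k^{s+1}$ is the union (translated to the center) of the inner singular points inherited from the absorbed $Q_{s'}$ sites, giving $\#A_k^{s+1}\le 2^{s+1}$ and $(s)$-goodness of the complement.

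The analytic heart is $(d)_{s+1}$. By $(e)_s$ applied to $(\widetilde\Omega_k^{s+1}-k)\setminus(A_k^{s+1}-k)$ (which is $s$-good by $(c)_{s+1}$), the operator $M_{s+1}(z)_{(\widetilde\Omega_k^{s+1}-k)\setminus(A_k^{s+1}-k)}$ is invertible in the disk \eqref{dede}, so the Schur complement $S_{s+1}(z)$ is well-defined. Bound \eqref{10} then follows by a telescoping resolvent expansion that uses the off-diagonal decay from $(e)_s$ at each level. For the factorization \eqref{sSsim}: the symmetry of $\widetilde\Omega_k^{s+1}-k$ and $A_k^{s+1}-k$ about $0$, combined with $\cos 2\pi(z+n\cdot\omega)=\cos 2\pi(-z-n\cdot\omega)$, forces the involution $S_{s+1}(z)\mapsto S_{s+1}(-z)$ to be a conjugation by the reflection operator, so $\det S_{s+1}(z)$ is an even analytic function of $z$. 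The inductive identity \eqref{sSsim} at step $s$ expresses $\det S_s$ on the absorbed sub-blocks, and a careful expansion of $\det S_{s+1}(z)$ (isolating the leading quadratic piece $(z^2-\theta_{s+1}^2)\cdot c_{s+1}$ with $c_{s+1}\sim\delta_s^{-2}$) followed by Rouch\'e's theorem on the disk $\|z\|<\delta_s^{1/10^4}$ locates the two zeros $\pm\theta_{s+1}$ and yields \eqref{sSsim}. For $(e)_{s+1}$, given an $(s+1)$-good $\Lambda$, I partition its ``bad'' sites into the disjoint blocks $\widetilde\Omega_k^s$ with $\widetilde\Omega_k^s\subset\Lambda$ and iterate the resolvent identity: inside each block apply $(d)_s$--$(e)_s$ and the bound $\|T_{\widetilde\Omega_k^s}^{-1}\|\le\delta_s^{-3}$, outside pick up a factor $\varepsilon$ per step from $\Delta$. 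Since the density of ``lost'' sites is at most $N_s^{1/c-1}$, the rate degrades to $\gamma_{s+1}=\gamma_s(1-N_{s+1}^{1/c-1})^3$, and $\prod_s(1-N_s^{1/c-1})^3$ converges by the super-exponential growth of $N_s$, giving $\gamma_\infty\ge\tfrac14|\log\varepsilon|$.

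The main obstacle I expect is the simultaneous enforcement of three conflicting requirements in the construction of $\widetilde\Omega_k^{s+1}$: (i) polynomial thickening $\widetilde\Omega_k^{s+1}\subset\Lambda_{N_{s+1}^{c^2}+50N_s^{c^2}}(k)$ so that the diameter does not blow up, (ii) exact symmetry of $\widetilde\Omega_k^{s+1}-k$ and $A_k^{s+1}-k$ about the origin so that the zeros of $\det S_{s+1}$ are genuinely symmetric about $0$ (giving the arithmetic description of resonances needed in Theorems \ref{thm1}--\ref{thm2}), and (iii) the nesting $(a1)_{s+1}$ preserved through the symmetrization. Coupled to this is the need to maintain the sublinear relation $\mathrm{diam}\,\Omega_k^{s+1}\sim(\mathrm{diam}\,\widetilde\Omega_k^{s+1})^\rho$ with $\rho<1$, without which the off-diagonal decay in $(e)_s$ would degrade to the sub-exponential rate of \cite{Bou00} and Theorem \ref{thm1} would fail. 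Balancing (i)--(iii) against the induction parameters $N_s,\delta_s$ is the most delicate part of the argument.
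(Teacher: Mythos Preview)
Your proposal is correct and follows essentially the same approach as the paper: induction on $s$, the $(C1)_s/(C2)_s$ case split, a saturation-and-symmetrize construction of the resonant blocks, Schur complement reduction with evenness of $\det M_{s+1}$ giving the symmetric pair $\pm\theta_{s+1}$ via Rouch\'e, and resolvent-identity iteration for $(e)_{s+1}$. The one place where the paper's argument is more specific than your outline is the block construction: to make $\widetilde\Omega_k^{s+1}-k$ genuinely independent of $k\in P_{s+1}$ while still satisfying the nesting \eqref{haha} for the actual resonant blocks near each $k$, the paper does not saturate against the physical centers $P_{s-r}$ near $k$ but rather against the translation-closed and reflection-symmetric surrogate sets $H_r=(k_0-P_{s+1}+P_{s-r})\cup(k_0+P_{s+1}-P_{s-r})$, building a single model set around a fixed $k_0$ and then translating; your phrase ``adjoining every previous-scale block that intersects it, and symmetrizing about $k$'' would need exactly this refinement to deliver (i)--(iii) simultaneously.
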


The following three subsections are devoted to prove Theorem \ref{Inthm}.

\subsection{The initial step}
Recalling \eqref{T}--\eqref{DN} and $\cos2\pi\theta_0=E$,  we have
\begin{align*}
	|D_{n}|&=2 |\sin \pi(\theta+n\cdot{\omega}+\theta_{0}) \sin \pi(\theta+n\cdot{\omega}-\theta_{0})|\\
	&\geq2\|\theta+n\cdot{\omega}+\theta_{0}\|\cdot \|\theta+n\cdot{\omega}-\theta_{0}\| .
\end{align*}
Denote  $\delta_{0}=\varepsilon^{1 / 10}$ and
$$
P_{0}=\mathbb{Z}^d, \  Q_{0}=\{k \in P_0:\  \min (\|\theta +k\cdot  \omega+\theta_{0}\|,\|\theta+ k \cdot  \omega-\theta_{0}\|)<\delta_0\} .
$$
We say a finite set $\Lambda\subset \mathbb{Z}^d$ is $0$-\textbf{good} iff	  $$\Lambda \cap Q_{0}=\emptyset.$$
\begin{lem}\label{le1}
If the finite set $\Lambda\subset\Z^d$ is $0$-{\bf good}, then 
\begin{align}
\label{0good2}\|T_\Lambda^{-1}\|&< 2\|D_\Lambda^{-1}\|<\delta_0^{-2},\\
\label{0good1}|T_\Lambda^{-1}(x,y)|&<e^{-\gamma_0\|x-y\|_1}\   {\rm for}\  \|x-y\|>0.
\end{align}
where $\gamma_0=5|\log \delta_0|=\frac{1}{2}|\log \varepsilon|$.
\end{lem}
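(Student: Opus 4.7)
My approach is the standard Neumann series perturbation of the diagonal part. The 0-\textbf{good} assumption $\Lambda\cap Q_0=\emptyset$ gives, for every $n\in\Lambda$, $\|\theta+n\cdot\omega+\theta_0\|\geq\delta_0$ and $\|\theta+n\cdot\omega-\theta_0\|\geq\delta_0$, since $\min(\cdot,\cdot)\ge \delta_0$. Combined with the factorization $|D_n|\geq 2\|\theta+n\cdot\omega+\theta_0\|\cdot\|\theta+n\cdot\omega-\theta_0\|$ already displayed before the lemma, this yields the pointwise bound $|D_n|\geq 2\delta_0^2$, so $\|D_\Lambda^{-1}\|\leq (2\delta_0^2)^{-1}$.

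Next I write $T_\Lambda=D_\Lambda\bigl(I+\varepsilon D_\Lambda^{-1}\Delta_\Lambda\bigr)$. Since $\|\Delta_\Lambda\|\leq 2d$ and $\varepsilon=\delta_0^{10}$, I get $\|\varepsilon D_\Lambda^{-1}\Delta_\Lambda\|\leq d\varepsilon/\delta_0^{2}=d\varepsilon^{4/5}\leq \tfrac12$ once $\varepsilon\leq \varepsilon_0(d)$ is small enough. The Neumann series therefore converges with $\|(I+\varepsilon D_\Lambda^{-1}\Delta_\Lambda)^{-1}\|\leq 2$, giving $\|T_\Lambda^{-1}\|\leq 2\|D_\Lambda^{-1}\|\leq \delta_0^{-2}$. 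This establishes \eqref{0good2}.

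For the off-diagonal decay \eqref{0good1} I expand
\[
T_\Lambda^{-1}=\sum_{k\geq 0}(-\varepsilon)^{k}\bigl(D_\Lambda^{-1}\Delta_\Lambda\bigr)^{k}D_\Lambda^{-1}
\]
and evaluate the $(x,y)$ entry as a sum over nearest-neighbor walks $x=x_0,x_1,\dots,x_k=y$ in $\Lambda$. Because every such walk has length $k\geq \|x-y\|_1$, and at most $(2d)^{k}$ walks of length $k$ start at $x$, I obtain
\[
|T_\Lambda^{-1}(x,y)|\leq (2\delta_0^2)^{-1}\sum_{k\geq\|x-y\|_1}(d\varepsilon^{4/5})^{k}\leq\frac{(d\varepsilon^{4/5})^{\|x-y\|_1}}{\delta_0^{2}}.
\]
Using $\varepsilon=\delta_0^{10}$, the right-hand side equals $d^{\|x-y\|_1}\delta_0^{8\|x-y\|_1-2}$. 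Comparing with the target $e^{-\gamma_0\|x-y\|_1}=\delta_0^{5\|x-y\|_1}$, it suffices to check $d^{\|x-y\|_1}\delta_0^{3\|x-y\|_1-2}\leq 1$, which reduces to $d\,\delta_0^{3-2/\|x-y\|_1}\leq 1$; since $3-2/\|x-y\|_1\geq 1$ whenever $\|x-y\|_1\geq 1$, this holds for all $\varepsilon\leq \varepsilon_0(d)$ small enough.

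There is no real obstacle: the lemma is purely perturbative, with the only mild point being that one must track the explicit constants carefully enough that the decay rate $\gamma_0=\tfrac12|\log\varepsilon|$ survives after losing factors of $d^{\|x-y\|_1}$ from the combinatorics of walks and of $\delta_0^{-2}$ from the prefactor. Both losses are absorbed by the gap between $\varepsilon^{4/5}$ (the size of one step of the Neumann expansion) and $\varepsilon^{1/2}$ (the required decay per unit step), which is the sole reason for the choice $\delta_0=\varepsilon^{1/10}$.
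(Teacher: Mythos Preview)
Your proof is correct and follows essentially the same approach as the paper: both use the Neumann series expansion $T_\Lambda^{-1}=\sum_{k\geq 0}(-\varepsilon D_\Lambda^{-1}\Delta_\Lambda)^k D_\Lambda^{-1}$, bound $\|D_\Lambda^{-1}\|\leq (2\delta_0^2)^{-1}$ from the $0$-\textbf{good} condition, and exploit that the $k$-th term vanishes for $k<\|x-y\|_1$. The only cosmetic difference is that the paper bounds each factor via the operator norm $\|\varepsilon D_\Lambda^{-1}\Delta_\Lambda\|<\tfrac12\delta_0^{7}$ directly, arriving at $|T_\Lambda^{-1}(x,y)|<\delta_0^{7\|x-y\|_1-2}$, whereas you count nearest-neighbor walks explicitly to get $d^{\|x-y\|_1}\delta_0^{8\|x-y\|_1-2}$; both yield the target $\delta_0^{5\|x-y\|_1}$ with room to spare.
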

\begin{proof}[Proof of Lemma \ref{le1}]
Assuming $\Lambda$ is $0$-\textbf{good}, we have
$$
\left\|D_{\Lambda}^{-1}\right\|<\frac{1}{2}\delta_0^{-2},\ 
\left\|\varepsilon D_{\Lambda}^{-1}\Delta_{\Lambda}\right\|<d\varepsilon \delta_0^{-2}<\frac{1}{2}\delta_{0}^{7}<\frac12.
$$
Thus
$$
T_{\Lambda}^{-1}=\left(I+\varepsilon D_{\Lambda}^{-1}\Delta_{\Lambda}\right)^{-1}D_{\Lambda}^{-1}
$$
and $\left(I+\varepsilon D_{\Lambda}^{-1}\Delta_{\Lambda}\right)^{-1}$ may be expanded in the Neumann series
$$
(I+\varepsilon D_{\Lambda}^{-1}\Delta_{\Lambda})^{-1}=\sum_{i=0}^{+\infty}(-\varepsilon D_{\Lambda}^{-1}\Delta_{\Lambda})^{i}.
$$
Hence
\begin{equation*}
	\|T_\Lambda^{-1}\|< 2\|D_\Lambda^{-1}\|<\delta_0^{-2},
\end{equation*}
which implies \eqref{0good2}. 

In addition,  if $\|x-y\|_1>i,$ then
$$\left( (\varepsilon D_{\Lambda}^{-1}\Delta_{\Lambda})^{i}D_{\Lambda}^{-1}\right) (x,y)=0.$$
Hence
\begin{align*}
	|T_\Lambda^{-1}(x,y)|&=|\sum_{i\geq\|x-y\|_1}\left( (\varepsilon D_{\Lambda}^{-1}\Delta_{\Lambda})^{i}D_{\Lambda}^{-1}\right)(x,y)|<\delta_0^{7\|x-y\|_1-2}.
\end{align*}
In particular,
\begin{equation*}
	|T_\Lambda^{-1}(x,y)|<e^{-\gamma_0\|x-y\|_1}\  {\rm for}\  \|x-y\|>0
\end{equation*}
with $\gamma_0=5|\log \delta_0|=\frac{1}{2}|\log \varepsilon|$, which yields \eqref{0good1}.
\end{proof}

\subsection{Verification of ${\bf \mathcal{P}}_1$}\label{p1vf}

If $\Lambda\cap Q_0\neq \emptyset$,  then the Neumann series argument of previous subsection does not work. Thus we use the resolvent identity argument to  estimate $T^{-1}_\Lambda$, where $\Lambda$ is $1$-\textbf{good} ($1$-\textbf{good} will be specified later) but might intersect with $Q_0$ (not $0$-\textbf{good}).  

First, we construct blocks  $\Omega^1_k\ (k\in P_1)$ to cover the singular point $Q_0$. Second, we get the bound  estimate   
 $$	\|T_{\widetilde{\Omega}_k^1}^{-1}\|<\delta_{0}^{-2}\|\theta+k\cdot\omega-\theta_1\|^{-1}\cdot\|\theta+k\cdot\omega+\theta_1\|^{-1},$$ 
  where $\widetilde{\Omega}^1_k$ is an   extension of $\Omega^1_k$, and $\theta_{1}$ is obtained by analyzing the root of the equation $\operatorname{det}T(z-k\cdot \omega)_{\widetilde{\Omega}^1_k}=0$ about  $z$.
  Finally, we combine the estimates of $T^{-1}_{\widetilde{\Omega}^1_k}$  to get that of  $T^{-1}_{\Lambda}$ by resolvent identity assuming $\Lambda$ is $1$-\textbf{good}.

Recall  that
$$1<c^{20}<\frac{1}{\tau}.$$
Let
$$N_1=\left[|\log\frac{\gamma}{\delta_0}|^{\frac{1}{c^5\tau}}\right].$$
Define
\begin{align*}
&Q_{0}^{\pm}=\left\{k \in \mathbb{Z}^d :\ \left\|\theta+k\cdot \omega \pm \theta_{0}\right\|<\delta_{0}\right\},\  Q_{0}=Q_{0}^{+}\cup Q_{0}^{-},\\
&\widetilde{Q}_{0}^{\pm}=\left\{k \in \mathbb{Z}^d :\ \left\|\theta+k\cdot \omega \pm \theta_{0}\right\|<\delta_{0}^{\frac{1}{100}}\right\}, \  \widetilde{Q}_{0}=\widetilde{Q}_{0}^{+}\cup\widetilde{Q}_{0}^{-}.
\end{align*}

We distinguish  three steps. 

{\bf STEP1: The case $({\bf C}1)_0$ Occurs }: i.e., 

\begin{equation}\label{case1}
	{\rm dist}\left(\widetilde{Q}_{0}^{-}, Q_{0}^{+}\right)>100 N_{1}^{c}.
\end{equation}
\begin{rem}\label{r0}
	We have in fact   $${\rm dist}\left(\widetilde{Q}_{0}^{-}, Q_{0}^{+}\right)={\rm dist}\left(\widetilde{Q}_{0}^{+}, Q_{0}^{-}\right).$$Thus \eqref{case1}  also implies
	$${\rm dist}\left(\widetilde{Q}_{0}^{+}, Q_{0}^{-}\right)>100 N_{1}^{c}.$$
We refer to the Appendix \ref{appa} for a detailed proof. 
\end{rem}
Assuming \eqref{case1},  we define
\begin{equation}\label{P11}
	P_1=Q_0= \{k \in \mathbb{Z}^d :\  \min (\|\theta +k\cdot  \omega+\theta_{0}\|,\|\theta+ k \cdot  \omega-\theta_{0}\|)<\delta_0\} .
\end{equation}
Associate every $k\in P_1$ an $N_{1}$-block $\Omega_k^1:=\Lambda_{N_1}(k)$ and an $N_{1}^c$-block $\widetilde{\Omega}_k^1:=\Lambda_{N_1^c}(k)$.
Then $\widetilde{\Omega}_k^1-k\subset\mathbb{Z}^d$ is independent of $k\in P_1 $ and symmetrical about the origin.
If $k\neq k' \in P_1$,
$$\|k-k'\|\geq\min \left(100N_1^c,|\log\frac{\gamma}{2\delta_0}|^{\frac{1}{\tau}}\right)\geq100N_1^c.$$
Thus
$$	{\rm dist}\left( \widetilde{\Omega}_{k}^{1},  \widetilde{\Omega}_{k'}^{1}\right)>10\operatorname{diam}\widetilde{\Omega}_{k}^{1} \  \text {for } k \neq k^{\prime} \in P_{1} .$$
For $k\in Q_0^-$, we  consider 
\begin{align*}
	M_1(z):=T(z)_{\widetilde{\Omega}_k^1-k}=\left(\cos 2 \pi(z+n\cdot  \omega)\delta_{n,n'}-E+\varepsilon \Delta\right)_{n\in \widetilde{\Omega}_k^1-k}
\end{align*}
defined in
\begin{equation}\label{d010-}
	\left\{z \in \mathbb{C}:\  \left| z-\theta_{0} \right|< \delta_{0}^{\frac{1}{10}}\right\}.
\end{equation}
For $n\in (\widetilde{\Omega}_k^1-k)\setminus\{0\}$, we have for $0<\delta_0\ll1,$\begin{align*}
	\|z+n\cdot \omega-\theta_{0}\|&\geq \|n\cdot \omega\|-|z-\theta_0|\\
	&\geq \gamma e^{-N_1^{c\tau}}-\delta_0^{\frac{1}{10}}\\
	&\geq \gamma e^{-|\log\frac{\gamma}{\delta_{0}}|^{\frac{1}{c^4}}}-\delta_0^{\frac{1}{10}}\\
	&>\delta_0^{\frac{1}{10^4}}.
\end{align*}
For $n\in \widetilde{\Omega}_k^1-k $, we have 
\begin{align*}
	\|z+n\cdot \omega+\theta_{0}\|&\geq \|\theta+(n+k)\cdot \omega+\theta_0\|-|z-\theta_0|-\|\theta+k\cdot \omega-\theta_{0}\|\\
	&\geq \delta_{0}^{\frac{1}{100}}-\delta_0^{\frac{1}{10}}-\delta_0\\
	&>\frac{1}{2}\delta_0^{\frac{1}{100}}.
\end{align*}
Since $\delta_{0}\gg \varepsilon$,  we have by Neumann series argument 
$$ \left\|\left(M_1(z)_{(\widetilde{\Omega}_k^1-k)\setminus\{0\}}\right) ^{-1}\right\|<3\delta_0^{-\frac{1}{50}}.$$
 Now we can apply the Schur complement lemma (cf. Lemma \ref{Su} in the appendix) to provide desired estimates.  By  Lemma \ref{Su}, $M_1(z)^{-1}$ is controlled by the inverse of the Schur complement  (of   $(\widetilde{\Omega}_k^1-k)\setminus\{0\}$)
\begin{align*}
	S_1(z)&=M_1(z)_{\{0\}}-R_{\left\{0\right\}} M_1(z) R_{(\widetilde{\Omega}_k^1-k)\setminus\{0\}}\left( M_1(z)_{(\widetilde{\Omega}_k^1-k)\setminus\{0\}}\right)^{-1} R_{(\widetilde{\Omega}_k^1-k)\setminus\{0\}} M_1(z)R_{\{0\}}\\
	&=-2 \sin \pi( z-\theta_{0})\sin \pi(z+\theta_{0})+r(z)\\
	&=g(z)((z-\theta_0)+r_1(z)),
\end{align*}
where $g(z)$ and $r_1(z)$ are analytic functions in  the set defined by  \eqref{d010-}  satisfying $|g(z)|\geq2\|z+\theta_0\|>\delta_{0}^{\frac{1}{100}}$ and   $|r_1(z)|<\varepsilon^{2}\delta_0^{-1}<\varepsilon$.
Since
$$|r_1(z)|<|z-\theta_0|\ {\rm for}\ |z-\theta_0|= \delta_0^\frac{1}{10},$$
using  R\'ouche theorem implies the equation
$$(z-\theta_0)+r_1(z)=0$$
has  a unique root $\theta_1$ in  the set of  \eqref{d010-} satisfying  \[|\theta_0-\theta_1|=|r_1(\theta_1)|<\varepsilon,\  |(z-\theta_0)+r_1(z)|\sim|z-\theta_1|.\]
Moreover, $\theta_1$ is the unique root of  $\operatorname{det}M_1(z)=0$ in  the set \eqref{d010-}.
Since $\|z+\theta_0\|>\frac{1}{2}\delta_{0}^{\frac{1}{100}}$ and $|\theta_0-\theta_1|<\varepsilon,$ we get   $$\|z+\theta_1\|\sim \|z+\theta_0\|,$$	
which shows for $z$ being in the set  of \eqref{d010-},
\begin{align}
\label{S1-}|S_1(z)|&\sim\|z+\theta_1\|\cdot \|z-\theta_1\|,\\
\nonumber\|M_1(z)^{-1}\|&<4\left( 1+\|( M_1(z)_{(\widetilde{\Omega}_k^1-k)\setminus\{0\}})^{-1}\|\right)^2 \left( 1+|S_1(z)|^{-1}\right)\\
\label{M1-}&<\delta_0^{-2}\|z+\theta_1\|^{-1}\cdot \|z-\theta_1\|^{-1}.
\end{align}
where in the first  inequality we use Lemma \ref{Su}.
Now, for $k\in Q_0^+$, we consider $M_1(z)$ in
\begin{equation}\label{d010+}
	\left\{z \in \mathbb{C}:\ \left| z+\theta_{0} \right|< \delta_{0}^{\frac{1}{10}}\right\}.
\end{equation}
The similar argument shows that  $\operatorname{det}M_1(z)=0$ has a unique root $\theta_1'$ in the set of \eqref{d010+}.
We will show $\theta_1+\theta_1'=0.$ In fact, by Lemma \ref{even}, $\operatorname{det}M_1(z)$ is an even function of $z$.  Then the uniqueness of  the root  implies  $\theta_1'=-\theta_1$.
Thus for $z$ being in the set  of $\eqref{d010+}$,   both \eqref{S1-} and \eqref{M1-} hold true as well.
Finally, since $M_1(z)$ is $1$-periodic, \eqref{S1-} and \eqref{M1-} remain valid for 
\begin{equation}\label{D11}
	z\in\{z \in \mathbb{C}:\ \min_{\sigma=\pm1} \left\| z+\sigma\theta_{0} \right\|< \delta_{0}^{\frac{1}{10}}\}.
\end{equation}
From $\eqref{P11}$,  we have $\theta+k\cdot \omega$ belongs to the set of  \eqref{D11}. Thus  for $k\in P_1$, we get
\begin{align}
	\nonumber \|T_{\widetilde{\Omega}_k^1}^{-1}\|&=\|M_1(\theta+k\cdot\omega)^{-1}\|\\
	\label{111e}&<\delta_{0}^{-2}\|\theta+k\cdot\omega-\theta_1\|^{-1}\cdot\|\theta+k\cdot\omega+\theta_1\|^{-1}.
\end{align}


{\bf STEP2: The case $({\bf C}2)_0$ Occurs: }  i.e., 
\[{\rm dist}\left( \widetilde{Q}_{0}^{-},Q_{0}^{+}\right) \leq 100 N_{1}^c.\]
Then there exist  $ i_0 \in Q_{0}^{+}$ and $ j_0 \in \widetilde{Q}_{0}^{-}$ with $\|i_0-j_0\| \leq 100 N_{1}^c$, such that
\[\left\|\theta+i_0\cdot  \omega+\theta_{0}\right\|<\delta_{0}, \ \left\|\theta+j_0 \cdot \omega-\theta_{0}\right\|<\delta_{0}^{\frac{1}{100}}.\]
Denote \[l_0=i_0-j_0.\]
Then $\|l_0\|={\rm dist}(Q_{0}^{+}, \widetilde{Q}_{0}^{-})={\rm dist}(\widetilde{Q}_{0}^{+}, Q_{0}^{-}).$
Define\[O_{1}=Q_0^-\cup( Q_0^+-l_0).\]
For $k\in Q_0^+$, we have 
\begin{align*}
\|\theta+(k-l_0)\cdot \omega-\theta_0\|&<\|\theta+k\cdot \omega+\theta_0\|+\|l_0\cdot \omega+2\theta_0\|\\
&<\delta_0+\delta_{0}+\delta_{0}^\frac{1}{100}<2\delta_{0}^\frac{1}{100}. 
\end{align*}
Thus  $$O_1\subset\left\{o\in \mathbb{Z}^d:\  \|\theta+ o\cdot \omega-\theta_0\|<2\delta_{0}^{\frac{1}{100}}\right\}.$$
For every $o\in O_1$, define its mirror point
\[o^{*}=o+l_0.\]
Next define
\begin{equation}\label{P12}
	P_{1}=\left\{\frac{1}{2}\left(o+o^{*}\right) :\  o \in O_{1}\right\}=\left\{o+\frac{l_0}{2} :\  o \in O_{1}\right\}.
\end{equation}
Associate every $k\in P_1$ with a  $100N_{1}^c$-block $\Omega_k^1:=\Lambda_{100N_{1}^c}(k)$ and a  $N_{1}^{c^2}$-block $\widetilde{\Omega}_k^1:=\Lambda_{N_{1}^{c^2}}(k)$.
Thus
\[Q_{0} \subset \bigcup_{k \in P_{1}} \Omega_{k}^{1}\]
and $\widetilde{\Omega}_k^1-k\subset\mathbb{Z}^d+\frac{l_0}{2}$ is independent of $k\in P_1 $ and symmetrical about the origin.
Notice that
\begin{align*}
	&\min \left(\|\frac{l_0}{2}\cdot  \omega+\theta_{0}\|,\|\frac{l_0}{2}\cdot  \omega+\theta_{0}-\frac{1}{2}\|\right)  =
	\frac{1}{2}\left\|l_0 \cdot \omega+2 \theta_{0}\right\|\\
	&\ \ \ \ \leq \frac{1}{2}\left(\left\|\theta+i_0 \cdot \omega+\theta_{0}\right\|+\left\|\theta+j_0 \cdot \omega-\theta_{0}\right\| \right)< \delta_{0}^{\frac{1}{100}}.\end{align*}
Since $\delta_0\ll1$,   only one of  $$\|\frac{l_0}{2}\cdot  \omega+\theta_{0}\|< \delta_{0}^{\frac{1}{100}},\quad \|\frac{l_0}{2}\cdot  \omega+\theta_{0}-\frac{1}{2}\|< \delta_{0}^{\frac{1}{100}}$$ holds true.	
First,  we consider the case
\begin{equation}\label{sem1}
	\|\frac{l_0}{2} \cdot \omega+\theta_{0}\|< \delta_{0}^{\frac{1}{100}}.
\end{equation}
Let $k \in P_1$. Since $k=\frac{1}{2}\left(o+o^{*}\right)=(o+\frac{l_0}{2})$ (for some $o\in O_1$), we have
\begin{equation}\label{P12g}
	\|\theta+k \cdot \omega\| \leq\|\theta+o\cdot  \omega-\theta_{0}\|+	\|\frac{l_0}{2} \cdot \omega+\theta_{0}\|<3\delta_{0}^{\frac{1}{100}}.
\end{equation}
Thus if $k\neq k'\in P_1$,	 we obtain 
$$\|k-k'\|\geq \left|\log\frac{\gamma}{6\delta_0^\frac{1}{100}}\right|^{\frac{1}{\tau}}\sim N_1^{c^5}\gg 10N_1^{c^2},$$
which implies
$${\rm dist}( \widetilde{\Omega}_{k}^{1},  \widetilde{\Omega}_{k'}^{1})>10\operatorname{diam}\widetilde{\Omega}_{k}^{1} \  \text {for } k \neq k^{\prime} \in P_{1} .$$
Consider 
\begin{align*}
	M_1(z):=T(z)_{\widetilde{\Omega}_k^1-k}=\left((\cos 2 \pi(z+n\cdot  \omega)\delta_{n,n'}-E+\varepsilon \Delta\right)_{n\in \widetilde{\Omega}_k^1-k}
\end{align*}
in \begin{equation}\label{d01000}
	\left\{z \in \mathbb{C}:\ \left| z\right|< \delta_{0}^{\frac{1}{10^3}}\right\}.
\end{equation}
For $n\neq \pm \frac{l_0}{2}$ and  $n\in \widetilde{\Omega}_k^1-k$, we have
\begin{align*}
\left\|n \cdot \omega \pm \theta_{0}\right\| &\geq\|\left(n \mp \frac{l_0}{2}\right)\cdot  \omega\|-\|\frac{l_0}{2} \omega+\theta_{0}\|\\
&>\gamma e^{-(2N_1^{c^2})^{\tau}}-\delta_{0}^{\frac{1}{100}}>2\delta_{0}^{\frac{1}{10^{4}}}.
\end{align*}
Thus for $z$ being in the set of  \eqref{d01000} and  $n\neq \pm \frac{l_0}{2}$, we have
\[\left\|z+n \cdot \omega \pm \theta_{0}\right\| \geq\left\|n \cdot \omega \pm \theta_{0}\right\| -|z|>\delta_{0}^{\frac{1}{10^{4}}}.\]
Hence
\[|\cos 2 \pi(z+n \cdot \omega)-E| \geq \delta_{0}^{2 \times \frac{1}{10^{4}}}\gg\varepsilon.\]
Using  Neumann series argument concludes 
\begin{equation}\label{ca}
	\left\|\left(M_1(z)_{(\widetilde{\Omega}_k^1-k)\backslash\left\{\pm \frac{l_0}{2}\right\}}\right)^{-1} \right\|< \delta_{0}^{-3 \times \frac{1}{10^{4}}}.
\end{equation}
Thus by Lemma \ref{Su},   $M_1(z)^{-1}$ is controlled by the inverse of the Schur complement of    $(\widetilde{\Omega}_k^1-k)\setminus\{\pm\frac{l_0}{2}\}$, i.e., 
\begin{align*}
	S_1(z)&=M_1(z)_{\left\{\pm \frac{l_0}{2}\right\}}-R_{\left\{\pm \frac{l_0}{2}\right\}} M_1(z) R_{(\widetilde{\Omega}_k^1-k)\backslash\left\{\pm\frac{l_0}{2}\right\}}\\
	&\ \ \ \ \times \left(M_1(z)_{(\widetilde{\Omega}_k^1-k) \backslash\left\{\pm \frac{l_0}{2}\right\}}\right)^{-1}R_{(\widetilde{\Omega}_k^1-k)\backslash\left\{\pm\frac{l_0}{2}\right\}} M_1(z)R_{\left\{\pm \frac{l_0}{2}\right\}}.
\end{align*}
Clearly,
\begin{align*}
	\operatorname{det} S_1(z)&=\operatorname{det}\left( M_1(z)_{\left\{\pm \frac{l_0}{2}\right\}}\right)+O(\varepsilon^2\delta_{0}^{-\frac{3}{10^4}}) \\
	&=4 \sin \pi( z+\frac{l_0}{2}\cdot  \omega-\theta_{0}) \sin \pi(z+\frac{l_0}{2} \cdot \omega+\theta_{0}) \\
	&\ \ \ \ \times \sin \pi(z-\frac{l_0}{2}\cdot  \omega-\theta_{0})  \sin \pi(z-\frac{l_0}{2}\cdot \omega+\theta_{0})+O(\varepsilon^{1.5}).
\end{align*}
If $l_0=0$, then  $$	\operatorname{det} S_1(z)=-2\sin \pi(z-\theta_{0})  \sin \pi(z+\theta_{0})+O(\varepsilon^{1.5}).$$ In this case,  the  argument is easier,  and  we omit the discussion. In the following,  we deal with $l_0\neq 0.$
By \eqref{sem1} and \eqref{d01000}, we have
\begin{align*}
	\|z+\frac{l_0}{2} \cdot \omega-\theta_{0}\|&\geq\|l_0\cdot \omega\|-\|\frac{l_0}{2}\cdot  \omega+\theta_{0}\|-|z|\\
	&>\gamma e^{-(100N_1^c)^\tau}-\delta_0^\frac{1}{100}-\delta_0^\frac{1}{10^3}\\
	&>\delta_{0}^{ \frac{1}{10^{4}}},
\end{align*}
\begin{align*}
	\|z-\frac{l_0}{2} \cdot \omega+\theta_{0}\|&\geq\|l_0\cdot \omega\|-\|\frac{l_0}{2}\cdot  \omega+\theta_{0}\|-|z|\\
	&>\gamma e^{-(100N_1^c)^\tau}-\delta_0^\frac{1}{100}-\delta_0^\frac{1}{10^3}\\
	&>\delta_{0}^{ \frac{1}{10^{4}}}.
\end{align*}
Let $z_1$  satisfy 
\[z_{1} \equiv \frac{l_0}{2} \cdot \omega+\theta_{0}\ (\bmod \ \mathbb{Z}), \ \left|z_{1}\right|=\| \frac{l_0}{2}\cdot  \omega+\theta_{0}\| < \delta_{0}^{\frac{1}{100}}.\]
Then \begin{align*}
	\operatorname{det} S_1(z&)\sim \|z+\frac{l_0}{2} \cdot \omega-\theta_{0}\|\cdot \|z-\frac{l_0}{2}\cdot \omega+\theta_{0}\|\cdot |(z-z_1)(z+z_1)+r_1(z)|\\
	&\stackrel{\delta_{0}^{\frac{2}{10^4}}}{\sim}|\left(z-z_{1}\right)\left(z+z_{1}\right)+r_1(z)|,
\end{align*}
where $r_1(z)$ is an analytic function  in  the set of \eqref{d01000} with $|r_1(z)|<\varepsilon\ll\delta_{0}^{\frac{1}{10^3}}$. Applying R\'ouche theorem shows the equation
$$\left(z-z_{1}\right)\left(z+z_{1}\right)+r_1(z)=0$$
has exact two roots $\theta_{1}$, $\theta_{1}'$ in the set of \eqref{d01000}, which are perturbations of $\pm z_1$.
Notice  that \[\left\{|z|< \delta_{0}^{\frac{1}{10^3}}:\ \operatorname{det} M_1(z)=0\right\}=\left\{|z|< \delta_{0}^{\frac{1}{10^3}} :\  \operatorname{det}S_1(z)=0\right\} \]
and $\operatorname{det}M_1(z)$ is an even function (cf. Lemma \ref{even}) of $z$. Thus $$\theta_{1}'=-\theta_{1}.$$
Moreover, we have \[|\theta_{1}-z_1|\leq|r_1(\theta_1)|^{\frac{1}{2}}<\varepsilon^{\frac{1}{2}},\  |\left(z-z_{1}\right)\left(z+z_{1}\right)+r_1(z)|\sim |\left(z-\theta_{1}\right)\left(z+\theta_{1}\right)|. \]
Thus for $z$ being in the set of   \eqref{d01000}, we have
\begin{equation}\label{ye}
	\operatorname{det}	S_1(z)\stackrel{\delta_{0}}{\sim}\|z-\theta_{1}\|\cdot \|z+\theta_{1}\|,
\end{equation}
which concludes  $$\|S_1(z)^{-1}\|\leq C\delta_{0}^{-1}\|z-\theta_1\|^{-1}\cdot \|z+\theta_1\|^{-1}.$$
Recalling \eqref{ca}, we get since Lemma \ref{Su}
\begin{align}
	\nonumber \|M_1(z)^{-1}\|&<4\left( 1+\|( M_1(z)_{(\widetilde{\Omega}_k^1-k)\setminus\{0\}})^{-1}\|\right)^2 \left( 1+\|S_1(z)^{-1}\|\right)\\
	\label{ya}&<\delta_0^{-2}\|z+\theta_1\|^{-1}\cdot \|z-\theta_1\|^{-1}
\end{align}
Thus for the case  \eqref{sem1},  both \eqref{ye} and \eqref{ya} are established  for $z$ belonging to
$$	\left\{z \in \mathbb{C}:\  \left\| z \right\|< \delta_{0}^{\frac{1}{10^3}}\right\}$$
since  $M_1(z)$ is $1$-periodic (in $z$).
By \eqref{P12g}, for $k\in P_1$, we also have 
\begin{align}
	\nonumber\|T_{\widetilde{\Omega}_k^1}^{-1}\|&=\|M_1(\theta+k\cdot\omega)^{-1}\|\\
	\label{1211e}&<\delta_{0}^{-2}\|\theta+k\cdot\omega-\theta_1\|^{-1}\cdot\|\theta+k\cdot\omega+\theta_1\|^{-1}.
\end{align}

For the case
\begin{equation}\label{sem2}
	\|\frac{l_0}{2} \cdot \omega+\theta_{0}-\frac{1}{2}\|<\delta_{0}^{\frac{1}{100}},
\end{equation}
we have for  $k\in P_1$,
\begin{equation}\label{P12g2}
	\|\theta+k \cdot \omega-\frac{1}{2}\| <3\delta_{0}^{\frac{1}{100}}.
\end{equation}
Consider 
\begin{align*}
	M_1(z):=T(z)_{\widetilde{\Omega}_k^1-k}=\left((\cos 2 \pi(z+n\cdot  \omega)\delta_{n,n'}-E+\varepsilon \Delta\right)_{n\in \widetilde{\Omega}_k^1-k}
\end{align*}
in \begin{equation}\label{d010002}
	\left\{z \in \mathbb{C}:\  | z-\frac{1}{2}|< \delta_{0}^{\frac{1}{10^3}}\right\}.
\end{equation}
By the similar argument as above, we get
\[\left\|\left(M_1(z)_{(\widetilde{\Omega}_k^1-k)\backslash\left\{\pm \frac{l_0}{2}\right\}}\right)^{-1} \right\|< \delta_{0}^{-3 \times \frac{1}{10^{4}}}.\]
Thus  $M_1(z)^{-1}$ is controlled by the inverse of the Schur complement of    $(\widetilde{\Omega}_k^1-k)\setminus\{\pm\frac{l_0}{2}\}$: 
\begin{align*}
	S_1(z)&=M_1(z)_{\left\{\pm \frac{l_0}{2}\right\}}-R_{\left\{\pm \frac{l_0}{2}\right\}} M_1(z) R_{(\widetilde{\Omega}_k^1-k)\backslash\left\{\pm\frac{l_0}{2}\right\}}\\
	&\ \ \ \ \times\left(M_1(z)_{(\widetilde{\Omega}_k^1-k) \backslash\left\{\pm \frac{l_0}{2}\right\}}\right)^{-1}R_{(\widetilde{\Omega}_k^1-k)\backslash\left\{\pm\frac{l_0}{2}\right\}} M_1(z)R_{\left\{\pm \frac{l_0}{2}\right\}}.
\end{align*}
Direct computation shows
\begin{align*}
	\operatorname{det} S_1(z)&=\operatorname{det}\left( M_1(z)_{\left\{\pm \frac{l_0}{2}\right\}}\right)+O(\varepsilon^2\delta_{0}^{-\frac{3}{10^4}}) \\
	&=4 \sin \pi( z+\frac{l_0}{2}\cdot  \omega-\theta_{0}) \sin \pi(z+\frac{l_0}{2} \cdot \omega+\theta_{0}) \\
	&\ \ \ \ \times\sin \pi(z-\frac{l_0}{2}\cdot  \omega-\theta_{0})  \sin \pi(z-\frac{l_0}{2}\cdot \omega+\theta_{0})+O(\varepsilon^{1.5}).
\end{align*}
By \eqref{sem2}  and \eqref{d010002}, we have
\begin{align*}
	\|z+\frac{l_0}{2} \cdot \omega-\theta_{0}\|&\geq\|l_0\cdot \omega\|-\|\frac{l_0}{2}\cdot  \omega+\theta_{0}-\frac{1}{2}\|-|z-\frac{1}{2}|\\
	&>\gamma e^{-(100N_1^c)^\tau}-\delta_0^\frac{1}{100}-\delta_0^\frac{1}{10^3}\\
	&>\delta_{0}^{ \frac{1}{10^{4}}},
\end{align*}
\begin{align*}
	\|z-\frac{l_0}{2} \cdot \omega+\theta_{0}\|&\geq\|l_0\cdot \omega\|-\|\frac{l_0}{2}\cdot  \omega+\theta_{0}-\frac{1}{2}\|-|z-\frac{1}{2}|\\
	&>\gamma e^{-(100N_1^c)^\tau}-\delta_0^\frac{1}{100}-\delta_0^\frac{1}{10^3}\\
	&>\delta_{0}^{ \frac{1}{10^{4}}}.
\end{align*}
Let $z_1$  satisfy 
\[z_1 \equiv \frac{l_0}{2} \cdot \omega+\theta_{0}\ (\bmod\  \mathbb{Z}), \ |z_1-\frac{1}{2}|=\| \frac{l_0}{2}\cdot  \omega+\theta_{0}-\frac{1}{2}\| < \delta_{0}^{\frac{1}{100}}.\]
Then \begin{align*}
	\operatorname{det} S_1(z&)\sim \|z+\frac{l_0}{2} \cdot \omega-\theta_{0}\|\cdot \|z-\frac{l_0}{2}\cdot \omega+\theta_{0}\|\cdot |(z-z_1)(z-(1-z_1))+r_1(z)|\\
	&\stackrel{\delta_{0}^{\frac{2}{10^4}}}{\sim}|\left(z-z_1\right)\left(z-(1-z_1)\right)+r_1(z)|,
\end{align*}
where $r_1(z)$ is an analytic function in the set of \eqref{d010002} with $|r_1(z)|<\varepsilon\ll\delta_{0}^{\frac{1}{10^3}}$.
Using again R\'ouche theorem shows the equation
$$\left(z-z_1\right)\left(z-(1-z_{1
})\right)+r_1(z)=0$$
has exact two roots $\theta_{1}$, $\theta_{1}'$ in \eqref{d010002}, which are perturbations of $z_1$ and $1-z_1$.
Notice  that \[\left\{|z-\frac{1}{2}|< \delta_{0}^{\frac{1}{10^3}}:\ \operatorname{det} M_1(z)=0\right\}=\left\{|z-\frac{1}{2}|< \delta_{0}^{\frac{1}{10^3}} :\  \operatorname{det}S_1(z)=0\right\} \]
and $\operatorname{det}M_1(z)$ is a $1$-periodic even function of $z$ (cf. Lemma \ref{even}). Thus $$\theta_{1}'=1-\theta_{1}.$$
Moreover,   \[|\theta_{1}-z_1|\leq|r_1(\theta_1)|^{\frac{1}{2}}<\varepsilon^{\frac{1}{2}},\  |\left(z-z_{1}\right)\left(z-1+z_{1}\right)+r_1(z)|\sim |\left(z-\theta_{1}\right)\left(z-(1-\theta_1)\right)|.\]
Thus for $z$ belonging to  the set of \eqref{d010002}, we have
$$\operatorname{det}S_1(z)\stackrel{\delta_{0}}{\sim}\|z-\theta_{1}\|\cdot \|z-(1-\theta_{1})\|=\|z-\theta_{1}\|\cdot \|z+\theta_{1}\|$$
and
$$	\|M_1(z)^{-1}\|<\delta_0^{-2}\|z-\theta_1\|^{-1}\cdot \|z+\theta_1\|^{-1}.$$
Thus for the case  \eqref{sem2}, both \eqref{ye} and \eqref{ya} hold  for $z$ being in 
$$\{z \in \mathbb{C}:\  \| z-\frac{1}{2} \|< \delta_{0}^{\frac{1}{10^3}}\}.$$
By \eqref{P12g2}, for $k\in P_1$, we obtain
\begin{align}
	\nonumber\|T_{\widetilde{\Omega}_k^1}^{-1}\|&=\|M_1(\theta+k\cdot\omega)^{-1}\|\\
	\label{1221e}&<\delta_{0}^{-2}\|\theta+k\cdot\omega-\theta_1\|^{-1}\cdot\|\theta+k\cdot\omega+\theta_1\|^{-1}.
\end{align}

For $k\in P_1$, we define $A_k^1 \subset \Omega_k^1$ to be
\begin{equation}\label{Ak1}
	A_k^1:=\left\{\begin{aligned}
		&\{k\}\ &\text{  case $({\bf C}1)_0$  }\\
		&\{o\}\cup \{o^*\} \  &\text{  case $({\bf C}2)_0$}
	\end{aligned},\right.
\end{equation}
where $k=\frac{1}{2}(o+o^*)$ for some $o\in O_1$   (cf. \eqref{P12})  in  the case $({\bf C}2)_0$.
We have verified $({\bf a})_1$--$({\bf d})_1$ and $({\bf f})_1.$

{\bf STEP3: Application of resolvent identity}

Now we verify $({\bf e})_1$ which is based on iterating resolvent identity. 

Note that  
$$\left|\log \frac{\gamma}{\delta_{1}}\right|=\left|\log \frac{\gamma}{\delta_{0}}\right|^{c^5}.$$ 
Recall that  $$Q_{1}^{\pm}=\left\{k \in P_1 :\ \left\|\theta+k\cdot \omega \pm \theta_{1}\right\|<\delta_{1}\right\},\ 
Q_{1}=Q_{1}^{+}\cup Q_{1}^{-}.$$
We say a finite set $\Lambda\subset \mathbb{Z}^d$ is $1$-{\bf good} iff
\begin{equation}\label{1good}
	\left\{
	\begin{aligned}
		\Lambda\cap Q_0\cap\Omega_{k}^1\neq\emptyset \Rightarrow \widetilde{\Omega}_{k}^1\subset \Lambda, \\
		\{k\in P_1 :\   \widetilde{\Omega}_{k}^1\subset \Lambda \}\cap Q_{1}= \emptyset.\end{aligned}\right.
\end{equation}
\begin{thm}\label{th11}
	If $\Lambda$ is $1$-\textbf{good}, then
	\begin{align}
		\label{th1L2}\|T_{\Lambda}^{-1}\|&<\delta^{-3}_0\sup_{\left\{k\in P_1 :\   \widetilde{\Omega}_{k}^1\subset \Lambda \right\}}\|\theta+k\cdot\omega-\theta_{1}\|^{-1}\cdot \|\theta+k\cdot\omega+\theta_{1}\|^{-1},\\
	\label{th1exp}
		\left|T_{\Lambda}^{-1}(x, y)\right|&<e^{-\gamma_1\|x-y\|_1} \  {\rm for }\ \|x-y\|>N_{1}^{c^3}.
	\end{align}
	where $\gamma_1=\gamma_0(1-N_1^{\frac{1}{c}-1})^3.$
\end{thm}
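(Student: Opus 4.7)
The plan is to iterate the resolvent identity
\[
T_\Lambda^{-1}(x,y) = T_{W}^{-1}(x,y)\mathbf{1}_{W}(y) - \sum_{\substack{(u,v)\in\partial_\Lambda W \\ u\in W,\,v\in\Lambda\setminus W}} T_{W}^{-1}(x,u)\,\varepsilon\, T_\Lambda^{-1}(v,y),
\]
with $W = W(x)\subset\Lambda$ chosen adaptively to $x$: if $x\in\Omega_k^1$ for some $k\in P_1$, the first clause of \eqref{1good} forces $\widetilde\Omega_k^1\subset\Lambda$ and I set $W(x)=\widetilde\Omega_k^1$; otherwise the buffer $\Omega_k^1\subset\Lambda_{N_1+50}(k)\subsetneq\Lambda_{N_1^c}(k)\subset\widetilde\Omega_k^1$ together with the separation \eqref{haha} ensures $\Lambda_L(x)\cap Q_0=\emptyset$ for any $L\leq \tfrac12(N_1^c-N_1)$, so $W(x)=\Lambda_L(x)\cap\Lambda$ is $0$-\textbf{good} and Lemma \ref{le1} applies.

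For the resonant choice, \eqref{111e}, \eqref{1211e} and \eqref{1221e} provide the norm bound $\|T_{W(x)}^{-1}\|<\delta_0^{-2}\|\theta+k\cdot\omega-\theta_1\|^{-1}\|\theta+k\cdot\omega+\theta_1\|^{-1}$; moreover, since $\widetilde\Omega_k^1\setminus A_k^1$ is $0$-\textbf{good}, expanding $M_1^{-1}$ through the Schur-complement block formula from Section \ref{p1vf} yields the pointwise off-diagonal decay $|T_{W(x)}^{-1}(x,u)|<\|T_{W(x)}^{-1}\|\cdot e^{-(\gamma_0-o(1))\|x-u\|_1}$ whenever $\{x,u\}$ stays away from $A_k^1$. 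For the non-resonant choice, Lemma \ref{le1} gives $|T_{W(x)}^{-1}(x,u)|<e^{-\gamma_0\|x-u\|_1}$ directly. Reinjecting each boundary term into the identity generates a path expansion in which every $0$-\textbf{good} leg contributes a clean exponential factor while each resonant traversal costs at most $\varepsilon\cdot\delta_0^{-2}\delta_1^{-2}$, the latter bound using the second clause of \eqref{1good} to guarantee $k\notin Q_1$ so that both $\|\theta+k\cdot\omega\pm\theta_1\|\geq\delta_1$.

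Because \eqref{haha} prevents any resonant block from appearing twice along a path, and because the scale choice $N_1\sim|\log(\gamma/\delta_0)|^{1/(c^5\tau)}$ combined with $c^{20}<1/\tau$ gives $|\log\delta_1|\ll N_1^c|\log\delta_0|$, the aggregate resonant loss is absorbed into a small rate reduction, giving the exponential decay \eqref{th1exp} with $\gamma_1=\gamma_0(1-N_1^{1/c-1})^3$ whenever $\|x-y\|>N_1^{c^3}$ (which ensures at least one full exponential leg past the diameter of the largest block). For the operator bound \eqref{th1L2} I apply the Schur test $\|T_\Lambda^{-1}\|\leq\max(\sup_x\sum_y,\,\sup_y\sum_x)|T_\Lambda^{-1}(x,y)|$; the near-diagonal portion is dominated by a single resonant contribution realizing the supremum on the right, and the off-diagonal tail, controlled by \eqref{th1exp}, contributes only an extra factor $\delta_0^{-1}$ that upgrades $\delta_0^{-2}$ to $\delta_0^{-3}$.

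The main obstacle is the careful constant bookkeeping in the iteration: each resonant traversal can individually cost as much as $\delta_1^{-2}$, so one must show that the cumulative resonant damage truly fits inside the $(1-N_1^{1/c-1})^3$ corridor. The structural inputs that make this work are the block separation \eqref{haha} (so each block is crossed at most once), the enhanced buffer $\Omega_k^1\subsetneq\widetilde\Omega_k^1$ that allows resonant sites to be paid for by clean exponential legs on their outskirts, and the quantitative scale relation $|\log\delta_1|/N_1^c\to 0$ built into the condition $c^{20}<1/\tau$. Extracting precisely the cubic factor, rather than just a close-to-$\gamma_0$ rate, will require isolating three distinct absorptions of $\gamma_0$ corresponding to entering, crossing, and exiting any resonant block that lies along the path from $x$ to $y$.
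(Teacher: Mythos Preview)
Your approach is essentially the same as the paper's---resolvent identity with an adaptive local block $W(x)$, Schur test for the norm, and iterated resolvent for the off-diagonal decay---but two points in your accounting are genuinely wrong and would derail the argument as written.

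First, the claim that ``\eqref{haha} prevents any resonant block from appearing twice along a path'' is false: \eqref{haha} separates \emph{distinct} blocks, not revisits of the \emph{same} block. A path leaving $\partial^+\widetilde\Omega_k^1$ can drift back toward $2\Omega_k^1$ through many small $0$-good steps. The paper does not rely on a once-only property. Instead it proves (Lemma~\ref{lem11}) that for $k\in P_1\setminus Q_1$ one already has true exponential decay $|T^{-1}_{\widetilde\Omega_k^1}(x,y)|<e^{-\widetilde\gamma_0\|x-y\|_1}$ from $x\in\partial^-\widetilde\Omega_k^1$ to $y\in 2\Omega_k^1$, obtained by one more resolvent step through the $0$-good set $\widetilde\Omega_k^1\setminus A_k^1$ and absorbing the single factor $\|T_{\widetilde\Omega_k^1}^{-1}\|<\delta_1^{-3}$ into the long leg of length $\gtrsim N_1^c$. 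Once every block---resonant or not---contributes a clean exponential factor per step, the iteration telescopes with no need to count visits; the cubic correction $(1-N_1^{1/c-1})^3$ arises from three successive absorptions: passing from $\gamma_0$ to $\widetilde\gamma_0$ in Lemma~\ref{lem11}, then the boundary-counting factor $CN_1^{c^2d}$, then the terminal $\|T_\Lambda^{-1}\|$. You actually gesture toward the Lemma~\ref{lem11} estimate (``pointwise off-diagonal decay \ldots whenever $\{x,u\}$ stays away from $A_k^1$'') but then abandon it and treat the resonant crossing as a pure loss $\delta_0^{-2}\delta_1^{-2}$, which forces you into the incorrect once-only claim.

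Second, your logical order is inverted: you invoke \eqref{th1exp} to control the Schur-test tail in \eqref{th1L2}, but the iteration proving \eqref{th1exp} must terminate when $\|x_L-y\|<3N_1^{c^2}$ by appealing to $\|T_\Lambda^{-1}\|$, which is exactly \eqref{th1L2}. The paper avoids this circularity by proving \eqref{th1L2} first via a single (non-iterated) resolvent step and the self-closing Schur inequality $\sup_x\sum_y|T_\Lambda^{-1}(x,y)|\le C+\tfrac{1}{10}\sup_x\sum_y|T_\Lambda^{-1}(x,y)|$, and only then runs the iteration for \eqref{th1exp}. A minor related sloppiness: ``$x\in\Omega_k^1$'' alone does not trigger the first clause of \eqref{1good}; you need $\Lambda\cap Q_0\cap\Omega_k^1\neq\emptyset$, which is why the paper works with the subset $\widetilde P_1=\{k\in P_1:\Lambda\cap\Omega_k^1\cap Q_0\neq\emptyset\}$ and the doubled blocks $2\Omega_k^1$.
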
	
\begin{proof}[Proof of Theorem \ref{th11}]

Denote   $$2\Omega_{k}^1:=\Lambda_{\operatorname{diam}\Omega_{k}^1}(k).$$ 
We have
\begin{lem}\label{lem11}
	For  $k\in P_1\setminus Q_{1}$, we have
	\begin{align}\label{exp1}
		|T^{-1}_{\widetilde{\Omega}_{k}^{1}}(x,y)|<e^{-\widetilde{\gamma}_0\|x-y\|_1} \  \text{\rm for }   x\in \partial^-\widetilde{\Omega}_{k}^{1} , \ y\in 2\Omega_{k}^1,
	\end{align}
	where $\widetilde{\gamma}_0=\gamma_0(1-N_1^{\frac{1}{c}-1})$.
\end{lem}
\begin{proof}[Proof of Lemma \ref{lem11}]
	From our construction, we have $$Q_{0} \subset \bigcup_{k \in P_{1}} A_{k}^{1}\subset \bigcup_{k \in P_{1}} \Omega_{k}^{1}.$$
	Thus
	$$(\widetilde{\Omega}_{k}^{1}\setminus A_k^1)\cap Q_{0}=\emptyset, $$
which shows   $\widetilde{\Omega}_{k}^{1}\setminus A_k^1$ is $0$-\textbf{good}. As a result, one has by \eqref{0good1},  $$|T^{-1}_{\widetilde{\Omega}_{k}^{1}\setminus A_k^1}(x,w)|<e^{-\gamma_0\|x-w\|_1} \  \text{\rm for }  x\in \partial^-\widetilde{\Omega}_{k}^{1} ,\ w\in  (\widetilde{\Omega}_{k}^{1}\setminus A_k^1)\cap2\Omega_{k}^1. $$
	Since  \eqref{1221e} and  $k\notin Q_1$, we have $$	\left\|T_{\widetilde{\Omega}_{k}^1}^{-1}\right\|<\delta_{0}^{-2}\delta_1^{-2}< \delta_1^{-3}.$$
	Using  resolvent identity implies
	\begin{align*}
		|T^{-1}_{\widetilde{\Omega}_{k}^{1}}(x,y)|&=|T^{-1}_{\widetilde{\Omega}_{k}^{1}\setminus A_k^1}(x,y)\chi_{\widetilde{\Omega}_{k}^{1}\setminus A_k^1}(y)-\sum_{(w', w)\in\partial A_k^1  } T^{-1}_{\widetilde{\Omega}_{k}^{1}\setminus A_k^1}(x,w)\Gamma(w,w')T^{-1}_{\widetilde{\Omega}_{k}^{1}}(w',y)|\\
		&< e^{-\gamma_0\|x-y\|_1}+ 4d\sup_{w\in\partial^+ A_k^1}e^{-\gamma_0\|x-w\|_1}\|T^{-1}_{\widetilde{\Omega}_{k}^{1}}\|\\
		&<e^{-\gamma_0\|x-y\|_1}+\sup_{w\in\partial^+ A_k^1}e^{-\gamma_0\left( \|x-y\|_1-\|y-w\|_1\right)+C|\log\delta_{1}|} \\\
		&<e^{-\gamma_0\|x-y\|_1}+e^{- \gamma_0\left(1-C\left(  \|x-y\|_1^{\frac{1}{c}-1}+\frac{|\log\delta_{1}|}{\|x-y\|_1}\right) \right)\|x-y\|_1}\\
		&<e^{-\gamma_0\|x-y\|_1}+e^{- \gamma_0\left(1-N_1^{\frac{1}{c}-1}\right)\|x-y\|_1}\\
		&=e^{-\widetilde{\gamma}_0\|x-y\|_1}
	\end{align*}
	since
	$$N_1^{c}\lesssim\operatorname{diam}\widetilde{\Omega}_k^1\sim\|x-y\|_1,  \    \|y-w\|_1\lesssim\operatorname{diam}\Omega_{k}^1\lesssim\left( \operatorname{diam}\widetilde{\Omega}_k^1\right) ^\frac{1}{c}$$
	and
	\begin{equation}\label{guj}
		|\log\delta_{1}|\sim|\log\delta_{0}|^{c^5}\sim N_1^{c^{10}\tau}<N_1^{\frac{1}{c}}.\end{equation}
\end{proof}
	
	We are able to prove Theorem \ref{th11}. First,  we prove the  estimate \eqref{th1L2} by  Schur's  test.
	Define $$\widetilde{P}_1=\{k \in P_1 :\  \Lambda\cap \Omega_{k}^{1}\cap Q_{0}\neq \emptyset\}, \  \Lambda'=\Lambda\backslash \bigcup_{k\in \widetilde{P}_1}\Omega^1_{k}.$$
	Then   $\Lambda' \cap Q_0=\emptyset$, which shows $\Lambda'$ is $0$-\textbf{good},  and  \eqref{0good2}--\eqref{0good1} hold for $\Lambda'$.  We have the following cases. 
	\begin{itemize}
		\item[(1).]
		Let $x\notin \bigcup_{k\in \widetilde{P}_1}2\Omega^1_{k}$.   Thus  $N_1\leq {\rm dist}(x,\partial_\Lambda^-\Lambda')$. For $y \in \Lambda$, the resolvent identity reads as 
		\begin{align*}
			T_{\Lambda}^{-1}(x,y)=T_{\Lambda'}^{-1}(x,y)\chi_{\Lambda'}(y)-\sum_{(w, w')\in\partial_\Lambda\Lambda'} T_{\Lambda'}^{-1}(x,w)\Gamma(w,w')T_{\Lambda}^{-1}(w',y).
		\end{align*}
		Since
		\begin{align*}
			\sum_{y\in \Lambda'}|T_{\Lambda'}^{-1}(x,y)\chi_{\Lambda'}(y)|&\leq|T_{\Lambda'}^{-1}(x,x)|+\sum_{\|x-y\|_1>0}|T_{\Lambda'}^{-1}(x,y)\chi_{\Lambda'}(y)|\\	
			&\leq\|T_{\Lambda'}^{-1}\|+\sum_{\|x-y\|_1>0}e^{-\gamma_0\|x-y\|_1}\\
			&\leq2\delta_0^{-2}
		\end{align*}
		and
		\begin{align*}
			\sum_{w\in\partial^-_\Lambda\Lambda'}|T_{\Lambda'}^{-1}(x,w)|&\leq\sum_{\|x-w\|_1\geq N_1}e^{-\gamma_0\|x-w\|_1}<e^{-\frac{1}{2}\gamma_0N_1},
		\end{align*}
		we get
		\begin{align*}
			\sum_{y\in \Lambda}|T_{\Lambda}^{-1}(x,y)|&\leq \sum_{y\in \Lambda'}|T_{\Lambda'}^{-1}(x,y)\chi_{\Lambda'}(y)|+\sum_{y\in \Lambda,(w, w')\in\partial_\Lambda\Lambda'}|T_{\Lambda'}^{-1}(x,w)\Gamma(w,w')T_{\Lambda}^{-1}(w',y)|  \nonumber \\
			&\leq2\delta_0^{-2}+2d\sum_{w\in\partial^-_\Lambda\Lambda'}|T_{\Lambda'}^{-1}(x,w)|\cdot \sup_{w'\in \Lambda}\sum_{y\in \Lambda}|T_{\Lambda}^{-1}(w',y)|\\
			&\leq2\delta_0^{-2}+\frac{1}{10}\sup_{w'\in \Lambda}\sum_{y\in \Lambda}|T_{\Lambda}^{-1}(w',y)|.
		\end{align*}
		\item[(2).]
		Let $x\in 2\Omega^1_{k}$ for some $k\in \widetilde{P}_1$. Thus by \eqref{1good}, we have  $\widetilde{\Omega}_{k}^1 \subset \Lambda$ and  $  k\notin Q_1$.
		For $y\in \Lambda,$ using resolvent identity shows
		\begin{align*}
			T_{\Lambda}^{-1}(x,y)=T_{\widetilde{\Omega}_{k}^1}^{-1}(x,y)\chi_{\widetilde{\Omega}_{k}^1}(y)-\sum_{(w,w')\in\partial_\Lambda\widetilde{\Omega}_{k}^1}T_{\widetilde{\Omega}_{k}^1}^{-1}(x,w)\Gamma(w,w')T_{\Lambda}^{-1}(w',y).
		\end{align*}
		By \eqref{1221e} ,  \eqref{exp1} and  since
		$$ N_1<\operatorname{diam}\widetilde{\Omega}_{k}^1\lesssim {\rm dist}(x,\partial^-_\Lambda\widetilde{\Omega}_{k}^1),$$
		we get
		\begin{align*}
			\sum_{y\in \Lambda}|T_{\Lambda}^{-1}(x,y)|&\leq \sum_{y\in \Lambda}|T_{\widetilde{\Omega}_{k}^1}^{-1}(x,y)\chi_{\widetilde{\Omega}_{k}^1}(y)|+\sum_{y\in\Lambda,(w,w')\in\partial_\Lambda\widetilde{\Omega}_{k}^1}|T_{\widetilde{\Omega}_{k}^1}^{-1}(x,w)\Gamma(w,w')T_{\Lambda}^{-1}(w',y)|   \\
			&<\# \widetilde{\Omega}_{k}^1\cdot \|T_{\widetilde{\Omega}_{k}^1}^{-1}\|+CN_1^{c^2d}e^{-\widetilde{\gamma}_0N_1}\sup_{w'\in \Lambda}\sum_{y\in \Lambda}|T_{\Lambda}^{-1}(w',y)|\\
			&<CN_1^{c^2d}\delta_{0}^{-2}\left\|\theta+k \cdot \omega-\theta_{1}\right\|^{-1} \cdot\left\|\theta+k \cdot \omega+\theta_{1}\right\|^{-1}+\frac{1}{10}\sup_{w'\in \Lambda}\sum_{y\in \Lambda}|T_{\Lambda}^{-1}(w',y)|\\
			&<\frac{1}{2}\delta_{0}^{-3}\left\|\theta+k \cdot \omega-\theta_{1}\right\|^{-1} \cdot\left\|\theta+k \cdot \omega+\theta_{1}\right\|^{-1}+\frac{1}{10}\sup_{w'\in \Lambda}\sum_{y\in \Lambda}|T_{\Lambda}^{-1}(w',y)|.
		\end{align*}	
	\end{itemize}
Combining estimates of  the above two cases yields
	\begin{align}\label{kuaile}
		\|T_{\Lambda}^{-1}\|&\leq\sup_{x\in \Lambda}\sum_{y\in \Lambda}|T_{\Lambda}^{-1}(x,y)| \nonumber \\
		&<\delta^{-3}_0\sup_{\left\{k\in P_1 :\   \widetilde{\Omega}_{k}^1\subset \Lambda \right\}}\left\|\theta+k\cdot  \omega-\theta_{1}\right\|^{-1} \cdot\left\|\theta+k \cdot \omega+\theta_{1}\right\|^{-1}.
	\end{align}
	
	Now we prove the off-diagonal decay estimate \eqref{th1exp}.	For every $w\in \Lambda,$ define its block in $\Lambda$
	\[J_w=\left\{\begin{aligned}
		&\Lambda_{\frac{1}{2}N_1}(w)\cap\Lambda \quad \text{if }   w\notin \bigcup_{k\in \widetilde{P}_1}2\Omega^1_{k}, \ &\textcircled{1} \\
		&\widetilde{\Omega}^{1}_k \quad \text{if }  w \in2\Omega_k^{1} \text{ for some }k\in  \widetilde{P}_1. \ &\textcircled{2}
	\end{aligned}\right. \]
	Then $\operatorname{diam}J_w\leq \max\left( \operatorname{diam}\Lambda_{\frac{1}{2}N_1}(w),\operatorname{diam}\widetilde{\Omega}^{1}_k\right) < 3N_1^{c^2}$.
	For \textcircled{1}, since $${\rm dist}(w,\Lambda\cap Q_0)\geq {\rm dist}(w,\bigcup_{k\in \widetilde{P}_1}\Omega_{k}^1)\geq N_1,$$   we have $J_w\cap Q_0=\emptyset.$  Thus  $J_w$ is $0$-\textbf{good}. Noticing that  ${\rm dist}(w,\partial^-_\Lambda J_w)\geq\frac{1}{2}N_1,$  from \eqref{0good1}, we have
	$$| T^{-1}_{J_w}(w,w')|<e^{-\gamma_0\|w-w'\|_1} \  \text{for $w'\in\partial^-_\Lambda J_w$.  }$$
	For \textcircled{2}, by \eqref{exp1}, we have
	$$  |T_{J_w}(w,w')|<e^{-\widetilde{\gamma}_0\|w-w'\|_1} \  \text{for $w'\in\partial^-_\Lambda J_w$.  } $$
	Let $\|x-y\|>N_{1}^{c^3}$.  Using  resolvent identity shows
	\begin{align*}
		T_{\Lambda}^{-1}(x,y)=T_{J_x}^{-1}(x,y)\chi_{J_x}(y)-\sum _{(w,w')\in\partial_\Lambda J_x}T_{J_x}^{-1}(x,w)\Gamma(w,w')T_{\Lambda}^{-1}(w',y).
	\end{align*}
	The first term of the above identity is zero because $y\notin J_x$  (since $\|x-y\|>N_{1}^{c^3}> 3N_1^{c^2}$). It follows that
	\begin{align*}
		|T_{\Lambda}^{-1}(x,y)|&\leq C N_1^{c^2d}e^{-\min\left( \gamma_0(1-2N_1^{-1}),\widetilde{\gamma}_0(1-N_1^{-1})\right)\|x-x_1\|_1 }|T_{\Lambda}^{-1}(x_1,y)| \\
		&\leq C N_1^{c^2d} e^{-\widetilde{\gamma}_0(1-N_1^{-1})\|x-x_1\|_1}|T_{\Lambda}^{-1}(x_1,y)|\\
		&<e^{-\widetilde{\gamma}_0(1-N_1^{-1}-\frac{C\log N_1}{N_1})\|x-x_1\|_1}|T_{\Lambda}^{-1}(x_1,y)|\\
		&<e^{-\gamma_0(1-N_1^{\frac{1}{c}-1})^2\|x-x_1\|_1}|T_{\Lambda}^{-1}(x_1,y)|\\
		&=e^{-\gamma_0'\|x-x_1\|_1}|T_{\Lambda}^{-1}(x_1,y)|
	\end{align*}	for some $x_1\in \partial_\Lambda^+ J_x $,
	where $\gamma_0'=\gamma_0(1-N_1^{\frac{1}{c}-1})^2.$  Then iterate and stop for some step $L$ such that $\|x_{L}-y\|<3N_1^{c^2}$. Recalling \eqref{guj}  and \eqref{kuaile}, we get
	\begin{align*}
		|T_{\Lambda}^{-1}(x,y)|&\leq e^{-\gamma_0'\|x-x_1\|_1}\cdots e^{-\gamma_0'\|x_{L-1}-x_L\|_1}|T_{\Lambda}^{-1}(x_L,y)| \\
		&\leq e^{-\gamma_0'(\|x-y\|_1-3N_1^{c^2})}\|T_\Lambda^{-1}\|\\
		&<e^{-\gamma_0'(1-3N_1^{c^2-c^3})\|x-y\|_1}\delta_1^{-3}\\
		&<e^{-\gamma_0'(1-3N_1^{c^2-c^3}-3\frac{|\log\delta_{1}|}{N_1^{c^3}})\|x-y\|_1}\\
		&<e^{-\gamma_0'(1-N_1^{\frac{1}{c}-1})\|x-y\|_1}\\
		&=e^{-\gamma_1\|x-y\|_1}.
	\end{align*}

This completes the proof of Theorem \ref{th11}.
\end{proof}

\subsection{Proof of Theorem \ref{Inthm}:  from ${\bf\mathcal{P}}_s$ to ${\bf\mathcal{P}}_{s+1}$}
	\begin{proof}[Proof of Theorem \ref{Inthm}]
	  We have finished the proof of ${\bf\mathcal{P}}_1$ in Subsection \ref{p1vf}. 
	Assume that ${\bf\mathcal{P}}_s$ holds true. In order to complete the proof of Theorem \ref{Inthm} it suffices to establish ${\bf\mathcal{P}}_{s+1}$.
		
	In the following,  we try to prove ${\bf\mathcal{P}}_{s+1}$ holds true. For this purpose,  we will  establish $({\bf a})_{s+1}$--$({\bf f})_{s+1}$ assuming $({\bf a})_{s}$--$({\bf f})_{s}$.	
	We divide the proof into 3 steps. 
	Let
	\begin{equation}\label{Q_s}
		Q_{s}^{\pm}=\left\{k \in P_s :\  \left\|\theta+k\cdot \omega \pm \theta_{s}\right\|<\delta_{s}\right\},
		\  Q_{s}=Q_{s}^{+}\cup Q_{s}^{-}.
	\end{equation}
and
	\begin{equation}\label{wQ_s}
		\widetilde{Q}_{s}^{\pm}=\left\{k \in P_s:\  \left\|\theta+k\cdot \omega \pm \theta_{s}\right\|<\delta_{s}^{\frac{1}{100}}\right\},\ \widetilde{Q}_{s}=\widetilde{Q}_{s}^{+}\cup\widetilde{Q}_{s}^{-}.
	\end{equation}		
	{\bf STEP1: The case $({\bf C}1)_s$ Occurs }: i.e., 
\begin{equation}\label{case1s+1}
	{\rm dist}\left(\widetilde{Q}_{s}^{-}, Q_{s}^{+}\right)>100N_{s+1}^c.
\end{equation}
\begin{rem}\label{r1}
	We can prove  that  $${\rm dist}\left(\widetilde{Q}_{s}^{-}, Q_{s}^{+}\right)={\rm dist}\left(\widetilde{Q}_{s}^{+}, Q_{s}^{-}\right).$$Thus \eqref{case1s+1}  also implies that
	\begin{equation}\label{cases+11}
		{\rm dist}\left(\widetilde{Q}_{s}^{+}, Q_{s}^{-}\right)>100 N_{s+1}^{c}.
	\end{equation}
	By \eqref{P>}  and the definitions  of $Q_s^{\pm}$ (cf. \eqref{Q_s})  and $\widetilde{Q}_s^{\pm}$ (cf. \eqref{wQ_s}), we obtain
	\begin{align}		
	\label{keyi}Q_s^{\pm}&=\{k \in \mathbb{Z}^d+\frac{1}{2}\sum_{i=0}^{s-1}l_i:\ \left\|\theta+k\cdot \omega \pm \theta_{s}\right\|<\delta_{s}\},\\
	\nonumber \widetilde{Q}_s^{\pm}&=\{k \in \mathbb{Z}^d+\frac{1}{2}\sum_{i=0}^{s-1}l_i:\ \left\|\theta+k\cdot \omega \pm \theta_{s}\right\|<\delta_{s}^\frac{1}{100}\}.
	\end{align}
Then the proof  is similar to that of  Remark \ref{r0} and we omit the details.
\end{rem}
Assuming \eqref{case1s+1},	then we define \begin{equation}\label{dy}
	P_{s+1}=Q_{s},\  l_s=0.
\end{equation}
By \eqref{2233} and \eqref{3322},  we have \begin{equation}\label{Ps+11}
	P_{s+1}\subset\left\{k \in \mathbb{Z}^d+\frac{1}{2}\sum_{i=0}^{s}l_i :\  \min_{\sigma=\pm1}(\|\theta+k\cdot\omega+\sigma\theta_{s}\| )<\delta_s\right\}.
\end{equation}
Thus from  \eqref{cases+11}, we get for $k,k'\in P_{s+1}$ with $k\neq k',$
\begin{equation}\label{aha}
	\|k-k'\|>\min\left( |\log\frac{\gamma}{2\delta_s}|^\frac{1}{\tau},100N_{s+1}^c\right)\geq100N_{s+1}^c.
\end{equation}
In the following,  we will associate every $k\in P_{s+1}$ with blocks $\Omega_k^{s+1}$ and  $\widetilde{\Omega}_k^{s+1}$ 
so that \begin{align*}
\Lambda_{N_{s+1}}(k)&\subset \Omega_{k}^{s+1}\subset \Lambda_{N_{s+1}+50N_{s}^{c^2}}(k), \\
\Lambda_{N_{s+1}^c}(k)&\subset \widetilde{\Omega}_k^{s+1}\subset \Lambda_{N_{s+1}^c+50N_{s}^{c^2}}(k),
\end{align*}
and
\begin{equation}\label{hehe}
	\left\{\begin{aligned}
		&\Omega_{k}^{s+1} \cap \widetilde{\Omega}_{k'}^{s'} \neq \emptyset\ \left(s^{\prime}<s+1\right) \Rightarrow \widetilde{\Omega}_{k'}^{s'} \subset \Omega_{k}^{s+1}, \\
		&\widetilde{\Omega}_{k}^{s+1} \cap \widetilde{\Omega}_{k'}^{s'} \neq \emptyset\ \left(s^{\prime}<s+1\right) \Rightarrow \widetilde{\Omega}_{k'}^{s'} \subset \widetilde{\Omega}_{k}^{s+1}, \\
		&{\rm dist}(\widetilde{\Omega}_{k}^{s+1}, \widetilde{\Omega}_{k^{\prime}}^{s+1})>10\operatorname{diam}\widetilde{\Omega}_{k}^{s+1} \ { \rm for } \ k\neq k^{\prime} \in P_{s+1} .
	\end{aligned}\right.
\end{equation}
In addition, the set
$$\widetilde{\Omega}_k^{s+1}-k\subset \mathbb{Z}^d+\frac{1}{2}\sum_{i=0}^{s}l_i$$ is independent  of $k\in P_{s+1}$ and  is symmetrical about the origin.

Such $\Omega_{k}^{s+1}$ and  $\widetilde{\Omega}_{k}^{s+1}$ can be constructed  by the following argument (We only consider  $\widetilde{\Omega}_{k}^{s+1}$ since $\Omega^{s+1}_k$ is discussed by  the similar  argument).
Fixing  $k_0 \in Q_{s}^+$, we  start from
\[J_{0,0}=\Lambda _{N_{s+1}^c}(k_0).\]
Define
$$H_r=( k_0-P_{s+1}+P_{s-r})\cup(k_0+P_{s+1}-P_{s-r})  \quad (0\leq r\leq s-1).$$
Notice that by  \eqref{Ps+11},  we have $k_0-P_{s+1}\in \mathbb{Z}^d$ and,  $P_{s-r}\subset\mathbb{Z}^d+\frac{1}{2}\sum_{i=0}^{s-r-1}l_i$ since \eqref{2233}--\eqref{3322}. Thus $$H_{s-r}\subset\mathbb{Z}^d+\frac{1}{2}\sum_{i=0}^{s-r-1}l_i.$$
Define inductively $$J_{r,0} \subsetneqq J_{r,1}\subsetneqq \cdots \subsetneqq J_{r,t_r}:=J_{r+1,0}, $$
where
$$J_{r,t+1}=J_{r,t}\bigcup \left(  \bigcup_{\{ h\in H_{r}:\ \Lambda_{2N_{s-r}^{c^2}}(h)\cap J_{r,t}\neq\emptyset\} }\Lambda_{2N_{s-r}^{c^2}}(h)\right)  $$
and $t_r$ is the  largest integer satisfying the $\subsetneqq$ relationship (the following argument shows that  $t_r<10$).
Thus
\begin{equation}\label{??}
	h\in H_{r}, \ \Lambda_{2N_{s-r}^{c^2}}(h)\cap J_{r+1,0}\neq\emptyset\Rightarrow \Lambda_{2N_{s-r}^{c^2}}(h)\subset J_{r+1,0}.
\end{equation}
For $\widetilde{k} \in k_0- P_{s+1}$, we have  since
\eqref{Ps+11} $$\min\left(\|\widetilde{k}\cdot  \omega\|,\|\widetilde{k} \cdot \omega+ 2\theta_{s}\|\right)<2 \delta_s.$$
For  $k'\in P_{s-r}$,  we get since   \eqref{2233} and \eqref{3322} that 
\begin{align}
	\label{,,}&\min_{\sigma=\pm1}(\|\theta+k'\cdot \omega+\sigma\theta_{s-r-1}\|)<\delta_{s-r-1}\  {\rm if\   ({\bf C}1)_{s-r}\ holds \ true},\\
\label{..}&\|\theta+k'\cdot\omega\|<3\delta_{s-r-1}^{\frac{1}{100}}\text{ or } \|\theta+k'\cdot\omega+\frac{1}{2}\|<3\delta_{s-r-1}^{\frac{1}{100}}\  {\rm if\   ({\bf C}2)_{s-r}\ holds \ true}.
\end{align}
Thus for $h \in k_0-P_{s+1}+P_{s-r}$,  we  obtain for  \eqref{,,},
\begin{align*}
\min_{\sigma=\pm1}\left(\|\theta+h\cdot  \omega+\sigma\theta_{s-r-1}\|,\|\theta+h \cdot \omega+ 2\theta_{s}+\sigma\theta_{s-r-1}\|\right)<2\delta_{s-r-1}
\end{align*}
and for \eqref{..},
\begin{align*}
\min(\|\theta+h\cdot\omega\|,\|\theta+h\cdot\omega+\frac{1}{2}\|,\|\theta+h\cdot\omega+2\theta_{s}\|,\|\theta+h\cdot\omega+\frac{1}{2}+2\theta_{s}\|)<4\delta_{s-r-1}^{\frac{1}{100}}.
\end{align*}
Notice that $ k_0+P_{s+1}-P_{s-r}=2k_0-(k_0-P_{s+1}+P_{s-r})$ is symmetrical to  $k_0-P_{s+1}+P_{s-r}$  about $k_0$.	
Thus, if a set $\Lambda  \ (\subset\mathbb{Z}^d+\frac{1}{2}\sum_{i=0}^{s-r-1}l_i)$ contains $10$  distinct elements of $H_r$, then
\begin{equation}\label{contradiction}
	\operatorname{diam}\Lambda>\left|\log\frac{\gamma}{8\delta_{s-r-1}^{\frac{1}{100}}}\right|^\frac{1}{\tau}\gg100N_{s-r}^{c^2}.
\end{equation}
We claim that $t_r<10$. Otherwise, there exist distinct $h_t\in H_r\ (1\leq t\leq10)$, such that
$$\Lambda_{2N_{s-r}^{c^2}}(h_{1})\cap J_{r,0}\neq\emptyset,\  \Lambda_{2N_{s-r}^{c^2}}(h_t)\cap\Lambda_{2N_{s-r}^{c^2}}(h_{t+1})\neq\emptyset. $$
In particular,
$${\rm dist}(h_t,h_{t+1})\leq4N_{s-r}^{c^2}.$$ Thus
$$h_t\in \Lambda_{40N_{s-r}^{c^2}}(0)+h_1 \  (1\leq t\leq10). $$
This  contradicts \eqref{contradiction}. Thus we have shown
\begin{equation}\label{?}
	J_{r+1,0}=J_{r,t_r}\subset \Lambda_{40N_{s-r}^{c^2}}(J_{r,0}).
\end{equation}	
Since $$\sum_{r=0}^{s-1}40N_{s-r}^{c^2}<50N_s^{c^2},$$
we find $J_{s,0}$  to satisfy
$$\Lambda _{N_{s+1}^c}(k_0)=J_{0,0}\subset J_{s,0}\subset \Lambda_{50N_{s}^{c^2}}(J_{0,0})\subset\Lambda _{N_{s+1}^c+50N_{s}^{c^2}}(k_0).$$
Now, for any $k\in P_{s+1},$  define
\begin{equation}\label{wuhu}
	\widetilde{\Omega}_k^{s+1}=J_{s,0}+(k-k_0).
\end{equation}
Using  $k-k_0\in \mathbb{Z}^d$ and $\widetilde{\Omega}_k^{s+1}\subset\mathbb{Z}^d$  yields
$$\Lambda_{N_{s+1}^c}(k)\subset \widetilde{\Omega}_k^{s+1}\subset \Lambda_{N_{s+1}^c+50N_{s}^{c^2}}(k).$$
We are able  to verify \eqref{hehe}. 
In fact, since  \eqref{aha} and  $50N_{s}^{c^2}\ll N_{s+1}^{c}$, we get
$${\rm dist}(\widetilde{\Omega}_{k}^{s+1}, \widetilde{\Omega}_{k^{\prime}}^{s+1})>10\operatorname{diam}\widetilde{\Omega}_{k}^{s+1} \ \text {for } k \neq k^{\prime} \in P_{s+1} .$$
Assume that for some $k\in P_{s+1}$ and $k'\in P_{s'}\ (1\leq s'\leq s)$, 
$$\widetilde{\Omega}_{k}^{s+1} \cap \widetilde{\Omega}_{k'}^{s'} \neq \emptyset. $$
Then \begin{equation}\label{cb}
	\left( \widetilde{\Omega}_{k}^{s+1}+(k_0-k)\right)  \cap \left( \widetilde{\Omega}_{k'}^{s'}+(k_0-k) \right) \neq \emptyset .
\end{equation}
From  $$\Lambda_{N_{s'}^c}(k')\subset \widetilde{\Omega}_{k'}^{s'}\subset \Lambda_{N_{s'}^c+50N_{s'-1}^{c^2}}(k')\subset\Lambda_{1.5N_{s'}^{c^2}}(k') , $$
$\widetilde{\Omega}_{k}^{s+1}+(k_0-k)=J_{s,0}$  and \eqref{cb},  we obtain
$$J_{s,0}\cap\Lambda_{1.5N_{s'}^{c^2}}(k'+k_0-k)\neq\emptyset.$$
Recalling \eqref{?}, we have
$$J_{s,0}\subset \Lambda_{50N_{s'-1}^{c^2}}(J_{s-s'+1,0}).$$
Thus $$\Lambda_{50N_{s'-1}^{c^2}}(J_{s-s'+1,0})\cap \Lambda_{1.5N_{s'}^{c^2}}(k'+k_0-k)\neq\emptyset.   $$
From $50N_{s'-1}^{c^2}\ll 0.5N_{s'}^{c^2}$, it follows that
$$J_{s-s'+1,0}\cap \Lambda_{2N_{s'}^{c^2}}(k'+k_0-k)\neq\emptyset.$$
Since  $k'\in P_{s'}$, we have $k'+k_0-k\in H_{s-s'}$, and by \eqref{??},
$$\Lambda_{2N_{s'}^{c^2}}(k'+k_0-k)\subset J_{s-s'+1,0}\subset J_{s,0}.$$
Hence
$$\widetilde{\Omega}_{k'}^{s'}\subset\Lambda_{2N_{s'}^{c^2}}(k')\subset J_{s,0}+(k-k_0)=\widetilde{\Omega}_k^{s+1}.$$
Next, we will show $\widetilde{\Omega}_k^{s+1}-k$ is independent of $k$. For this, recalling \eqref{wuhu} and from   $l_i\in\mathbb{Z}^d,$ $\widetilde{\Omega}_k^{s+1}\subset\mathbb{Z}^d$, $k\in P_{s+1}\subset\mathbb{Z}^d+\frac{1}{2}\sum_{i=0}^{s}l_i,$ we obtain that
$$\widetilde{\Omega}_k^{s+1}-k\subset \mathbb{Z}^d-\frac{1}{2}\sum_{i=0}^{s}l_i=\mathbb{Z}^d+\frac{1}{2}\sum_{i=0}^{s}l_i$$
and
$$\widetilde{\Omega}_k^{s+1}-k=J_{s,0}+(k-k_0)-k=\widetilde{\Omega}_{k_0}^{s+1}-k_0$$ is independent of $k$.
Finally, we prove the symmetry property of $\widetilde{\Omega}_k^{s+1}$. The definition of $H_r$ implies that it is symmetrical about $k_0$,  which implies all   $J_{r,t}$ is symmetrical about $k_0$ as well.  In particular, $	\widetilde{\Omega}_{k_0}^{s+1}=J_{s,0}$ is  symmetrical about $k_0$, i.e.,  $\widetilde{\Omega}_{k_0}^{s+1}-k_0$ is symmetrical about origin. In summary, we have established $({\bf a})_{s+1}$ and $({\bf b})_{s+1}$ in the case $({\bf C}1)_s.$

Now we turn to the proof of $({\bf c})_{s+1}$.  First,  in this construction we have for every $k' \in Q_{s}\ (= P_{s+1})$,
$$\widetilde{\Omega}_{k'}^{s} \subset \Omega_{k'}^{s+1}.$$
For every  $k\in P_{s+1}$,  define
$$A_k^{s+1}=A_k^{s}.$$	Then $A_k^{s+1}\subset \Omega_k^s \subset\Omega_k^{s+1}$
and $\# A_k^{s+1}=\# A_k^{s}\leq2^s.$ It remains to show
$\widetilde{\Omega}_{k}^{s+1} \setminus A_k^{s+1}$ is $s$-\textbf{good}, i.e., 
$$\left\{\begin{aligned}
	&l'\in Q_{s'},\widetilde{\Omega}_{l'}^{s'}\subset(\widetilde{\Omega}_{k}^{s+1} \setminus A_k^{s+1}),\widetilde{\Omega}_{l'}^{s'}\subset \Omega_{l}^{s'+1} \Rightarrow \widetilde{\Omega}_{l}^{s'+1} \subset(\widetilde{\Omega}_{k}^{s+1} \setminus A_k^{s+1})\ \text{for }s'<s,\\
	&\left\{l\in P_s :\   \widetilde{\Omega}_{l}^s\subset (\widetilde{\Omega}_{k}^{s+1} \setminus A_k^{s+1}) \right\}\cap Q_{s}= \emptyset.
\end{aligned}\right.
$$
Assume  that  $$l'\in Q_{s'},\ \widetilde{\Omega}_{l'}^{s'}\subset(\widetilde{\Omega}_{k}^{s+1}\setminus A_k^{s+1}),\ \widetilde{\Omega}_{l'}^{s'}\subset \Omega_{l}^{s'+1} .$$
We have the following two cases.  The first case is  $s'\leq s-2$. In this case,  since $\emptyset\neq\widetilde{\Omega}_{l'}^{s'}\subset\widetilde{\Omega}_{l}^{s'+1}\cap\widetilde{\Omega}_{k}^{s+1}
,$
we get  by using \eqref{hehe} that $ \widetilde{\Omega}_{l}^{s'+1}\subset \widetilde{\Omega}_{k}^{s+1}.$
Assuming \begin{equation}\label{cuo}
	\widetilde{\Omega}_{l}^{s'+1}\cap A_k^{s+1}\neq \emptyset,
\end{equation}
then  $\widetilde{\Omega}_{l}^{s'+1}\cap \widetilde{\Omega}_{k}^{s}\neq\emptyset.$
Thus  from \eqref{haha}  (since $s'+1< s$), one has $\widetilde{\Omega}_{l}^{s'+1}\subset \widetilde{\Omega}_{k}^{s},$ which implies
$\widetilde{\Omega}_{l'}^{s'}\subset(\widetilde{\Omega}_{k}^{s}\setminus A_k^{s}).$ Since $(\widetilde{\Omega}_{k}^{s}\setminus A_k^{s})$ is $(s-1)$-\textbf{good},   we get
$$\widetilde{\Omega}_{l}^{s'+1}\subset(\widetilde{\Omega}_{k}^{s}\setminus A_k^{s})\subset(\widetilde{\Omega}_{k}^{s+1} \setminus A_k^{s+1}).$$
This contradicts  \eqref{cuo}. We then consider  the case   $s'=s-1$.  From   $\widetilde{\Omega}_{l'}^{s-1}\subset \Omega_{l}^{s}$ and $\widetilde{\Omega}_{l}^{s}\cap A_k^s \neq\emptyset$,  then $l=k$ and $\widetilde{\Omega}_{l'}^{s-1}\subset (\widetilde{\Omega}_{k}^{s}\setminus A_k^s).$
This  contradicts 
$$\left\{l\in P_{s-1} :\   \widetilde{\Omega}_{l}^{s-1}\subset(\widetilde{\Omega}_{k}^{s}\setminus A_k^s) \right \}\cap Q_{s-1}= \emptyset$$
because $(\widetilde{\Omega}_{k}^{s}\setminus A_k^s)$  is $(s-1)$-\textbf{good}.
Finally, if $l\in Q_s$  and $\widetilde{\Omega}_{l}^s\subset \widetilde{\Omega}_{k}^{s+1},$
then  $l=k$ since $k$ is the only element of $Q_s$ such that  $\widetilde{\Omega}_{k}^s\subset \widetilde{\Omega}_{k}^{s+1}$  by the separation  property of $Q_s$. As a result, 
$\widetilde{\Omega}_{l}^s\nsubseteq (\widetilde{\Omega}_{k}^{s+1} \setminus A_k^{s+1}),$
which implies  $$\left\{l\in P_s :\   \widetilde{\Omega}_{l}^s\subset (\widetilde{\Omega}_{k}^{s+1} \setminus A_k^{s+1}) \right\}\cap Q_{s}= \emptyset.$$	
Moreover, the set $$A_k^{s+1}-k=A_k^s-k$$ is independent of $k\in P_{s+1} $ and symmetrical about the origin since the induction assumptions on $A_k^s$ of the $s$-th step. This finishes the proof of $({\bf c})_{s+1}$ in the case $({\bf C}1)_{s}.$

In the following,  we try to prove $({\bf d})_{s+1}$ and $({\bf f})_{s+1}$ in the case $({\bf C}1)_{s}$. For  the case $k\in Q_s^-$, we  consider the analytic matrix-valued function  
\begin{align*}
	M_{s+1}(z):=T(z)_{\widetilde{\Omega}_k^{s+1}-k}=\left(\cos 2 \pi(z+n\cdot  \omega)\delta_{n,n'}-E+\varepsilon \Delta\right)_{n\in \widetilde{\Omega}_k^{s+1}-k}
\end{align*}
defined  in
\begin{equation}\label{d1-}
	\{z \in \mathbb{C}:\  \left| z-\theta_{s} \right|< \delta_{s}^{\frac{1}{10}}\}.
\end{equation}
If $k'\in P_s$ and $\widetilde{\Omega}_{k'}^s\subset(\widetilde{\Omega}_{k}^{s+1} \setminus A_k^{s+1})$, then $0\neq \|k'-k\|\leq 2N_{s+1}^{c}$. Thus
\begin{align*}
	\|\theta+k'\cdot \omega-\theta_{s}\|&\geq \|(k'-k)\cdot \omega\|-\|\theta+k\cdot \omega-\theta_{s}\|\\
	&\geq \gamma e^{-(2N_{s+1}^c)^{\tau}}-\delta_s\\
	&\geq \gamma e^{-2^{\tau}|\log\frac{\gamma}{\delta_{s}}|^{\frac{1}{c^4}}}-\delta_s\\
	&>\delta_s^{\frac{1}{10^4}}.
\end{align*}
By \eqref{cases+11}, we have $k'\notin \widetilde{Q}_{s}^+$ and thus
$$	\|\theta+k'\cdot \omega+\theta_{s}\|>\delta_s^\frac{1}{100}.$$
From \eqref{L2}, we obtain
\begin{align}
	\nonumber\|T_{\widetilde{\Omega}_{k}^{s+1} \setminus A_k^{s+1}}^{-1}\|&<\delta^{-3}_{s-1}\sup_{\left\{k'\in P_s :\   \widetilde{\Omega}_{k'}^s\subset (\widetilde{\Omega}_{k}^{s+1} \setminus A_k^{s+1})\right\}}\|\theta+k'\cdot\omega-\theta_{s}\|^{-1}\cdot \|\theta+k'\cdot\omega+\theta_{s}\|^{-1}\\
	\label{sgood1}&<\frac{1}{2}\delta_s^{-2\times \frac{1}{100}}.
\end{align}
One  may restate \eqref{sgood1} as
$$	\left\|\left( M_{s+1}(\theta+k\cdot\omega)_{(\widetilde{\Omega}_{k}^{s+1} \setminus A_k^{s+1})-k}\right) ^{-1}\right\|<\frac{1}{2}\delta_s^{-2\times \frac{1}{100}}.$$
Notice that  \begin{align}
	\nonumber\|z-(\theta+k\cdot\omega)\|&\leq|z-\theta_s|+\|\theta+k\cdot\omega-\theta_s\|\\
	\label{dd}&<\delta_{s}^{\frac{1}{10}}+\delta_{s}<2\delta_s^{\frac{1}{10}}\ll\delta_s^{2\times\frac{1}{100}}.
\end{align}
Thus by Neumann series argument, we can show
\begin{equation}\label{404}
	\left\|\left( M_{s+1}(z)_{(\widetilde{\Omega}_{k}^{s+1} \setminus A_k^{s+1})-k}\right) ^{-1}\right\|<\delta_s^{-2\times \frac{1}{100}}.
\end{equation}
We may then control $M_{s+1}(z)^{-1}$ by the inverse of
\begin{align*}
	S_{s+1}(z)&=M_{s+1}(z)_{ A_k^{s+1}-k}-R_{A_k^{s+1}-k} M_{s+1}(z) R_{(\widetilde{\Omega}_{k}^{s+1} \setminus A_k^{s+1})-k}\\ 
	&\ \ \ \ \times \left( M_{s+1}(z)_{(\widetilde{\Omega}_{k}^{s+1} \setminus A_k^{s+1})-k}\right) ^{-1} R_{(\widetilde{\Omega}_{k}^{s+1} \setminus A_k^{s+1})-k}  M_{s+1}(z)R_{A_k^{s+1}-k}.
\end{align*}
Our next aim is to analyze the function $\operatorname{det}S_{s+1}(z).$
Since $A_k^{s+1}-k=A_k^{s}-k\subset \Omega_k^{s}-k$ and ${\rm dist}(\Omega_k^{s},\partial^+\widetilde{\Omega}_{k}^{s})>1 $, we obtain
$$R_{A_k^{s+1}-k} M_{s+1}(z) R_{(\widetilde{\Omega}_{k}^{s+1} \setminus A_k^{s+1})-k}=R_{A_k^{s}-k} M_{s+1}(z) R_{(\widetilde{\Omega}_{k}^{s} \setminus A_k^{s})-k}.$$
Thus \begin{align*}
	S_{s+1}(z)&=M_{s+1}(z)_{ A_k^{s}-k}-R_{A_k^{s}-k} M_{s+1}(z) R_{(\widetilde{\Omega}_{k}^{s} \setminus A_k^{s})-k}\\ 
&\ \ \ \ \times \left( M_{s+1}(z)_{(\widetilde{\Omega}_{k}^{s+1} \setminus A_k^{s+1})-k}\right) ^{-1} R_{(\widetilde{\Omega}_{k}^{s} \setminus A_k^{s})-k}  M_{s+1}(z)R_{A_k^{s}-k}.
\end{align*}
Since $\widetilde{\Omega}_k^s\setminus A_k^s$ is $(s-1)$-\textbf{good} and by \eqref{L2}--\eqref{exp},  we get
\begin{align*}\|{T}_{\widetilde{\Omega}_k^s\setminus A_k^s}^{-1}\|&<\delta_{s-1}^{-3},\\
	\left|{T}_{\widetilde{\Omega}_k^s\setminus A_k^s}^{-1}(x, y)\right|&<e^{-\gamma_{s-1}\|x-y\|_1} \ \text {for }\|x-y\|>N_{s-1}^{c^3}.
\end{align*}
Equivalently, 
\begin{align}
\label{de}\left\|\left( M_{s+1}(\theta+k\cdot \omega)_{(\widetilde{\Omega}_k^s\setminus A_k^s)-k}\right) ^{-1}\right\|&<\delta_{s-1}^{-3},\\
\label{da}\left|\left( M_{s+1}(\theta+k\cdot \omega)_{(\widetilde{\Omega}_{k}^{s} \setminus A_k^{s})-k}\right) ^{-1}(x,y) \right| &<e^{-\gamma_{s-1}\|x-y\|_1}\  \text { for }\|x-y\|>N_{s-1}^{c^3}.
\end{align}
In the set defined by  \eqref{dd},  we  claim that
\begin{equation}\label{jidan}
	\left|\left( M_{s+1}(z)_{(\widetilde{\Omega}_{k}^{s} \setminus A_k^{s})-k}\right) ^{-1}(x,y) \right| <\delta_s^{10}\ \text { for }\|x-y\|>N_{s-1}^{c^4}.
\end{equation}
\begin{proof}[Proof of the Claim {\rm (i.e., \eqref{jidan})}]
	Denote $$T_1= M_{s+1}(\theta+k\cdot \omega)_{(\widetilde{\Omega}_{k}^{s} \setminus A_k^{s})-k},\    T_2=M_{s+1}(z)_{(\widetilde{\Omega}_{k}^{s} \setminus A_k^{s})-k}.$$
	Then $D=T_1-T_2$ is diagonal  so that  $\|D\|<4\pi\delta_s^\frac{1}{10}$ since \eqref{dd}.
	Using  Neumann series expansion yields
	\begin{equation}\label{z}
		T_2^{-1}=(I-T_1^{-1}D)^{-1}T_1^{-1}=\sum_{i=0}^{+\infty}\left( T_1^{-1}D\right)^iT_1^{-1}.
	\end{equation}
	Since \eqref{de}  and \eqref{da},  we have
	$$	 \left|T_1^{-1}(x,y) \right| <\delta_{s-1}^{-3}e^{-\gamma_{s-1}(\|x-y\|_1-N_{s-1}^{c^3})}. $$
	Thus for $\|x-y\|>N_{s-1}^{c^4}$ and $0\leq i\leq200$,
	\begin{align*}
		|( \left( T_1^{-1}D\right)^iT_1^{-1}) (x,y)|&\leq\left(4\pi\delta_s^\frac{1}{10} \right)^i\sum_{w_1,\cdots,w_i}|T_1(x,w_1)\cdots T_1(w_{i-1},w_i)T_1(w_i,y)|\\
		&<\left(4\pi\delta_s^\frac{1}{10} \right)^iCN_s^{c^2d}\delta_{s-1}^{-3(i+1)}e^{-\gamma_{s-1}(\|x-y\|_1-(i+1)N_{s-1}^{c^3})}	\\
		&<\delta_s^{\frac{1}{20}(i-1)}e^{-\gamma_{s-1}(N_{s-1}^{c^4}-(i+1)N_{s-1}^{c^3})}.
	\end{align*}
	From  $2<\gamma_{s-1}$, $201N_{s-1}^{c^3}<\frac{1}{2}N_{s-1}^{c^4}$ and  $|\log\delta_s|\sim|\log\delta_{s-1}|^{c^5}\sim N_{s}^{c^{10}\tau}\sim N_{s-1}^{c^{15}\tau}<N_{s-1}^{c^3},$ we get
	$$e^{-\gamma_{s-1}(N_{s-1}^{c^4}-(i+1)N_{s-1}^{c^3})}<e^{-N_{s-1}^{c^4}}<\delta_{s}^{20}.$$
	Hence \begin{equation}\label{y}
		\sum_{ i=0}^{200}|( \left( T_1^{-1}D\right)^iT_1^{-1}) (x,y)|<\frac{1}{2}\delta_{s}^{10}.
	\end{equation}
	For $i>200$,
	$$	|( \left( T_1^{-1}D\right)^iT_1^{-1}) (x,y)|<\left(4\pi\delta_s^\frac{1}{10} \right)^i\delta_{s-1}^{-3(i+1)}<\delta_s^{\frac{1}{20}i}<\delta_s^{10}\delta_s^{\frac{1}{20}(i-200)}.$$
	Thus\begin{equation}\label{x}
		\sum_{ i>200}|( \left( T_1^{-1}D\right)^iT_1^{-1}) (x,y)|<\frac{1}{2}\delta_{s}^{10}.
	\end{equation}
	Combining \eqref{z}, \eqref{y}  and  \eqref{x}, we get
	$$\left|T_2^{-1}(x,y) \right| <\delta_s^{10}\ \text { for }\|x-y\|>N_{s-1}^{c^4}.$$
This completes the proof of \eqref{jidan}. 
\end{proof}
Denote $X=(\widetilde{\Omega}_{k}^{s} \setminus A_k^{s})-k$  and $Y=(\widetilde{\Omega}_{k}^{s+1} \setminus A_k^{s+1})-k$. Let $x\in X$ satisfy  ${\rm dist}(x,A^{s}_k-k)\leq1$. By resolvent identity, we have for any  $y \in Y$,
\begin{align}\label{22}
	&\left(M_{s+1}(z)_{Y}\right) ^{-1}(x,y)-\chi_{X}(y)\left( M_{s+1}(z)_{X}\right) ^{-1}(x,y)\nonumber \\
	&\ \ \ \ =-\sum _{(w,w') \in\partial_{Y} X} \left(M_{s+1}(z)_{X}\right)^{-1}(x,w)\Gamma(w,w')\left(M_{s+1}(z)_{Y}\right) ^{-1}(w',y).
\end{align}
Since
$${\rm dist}(x,w)\geq {\rm dist}(A_k^s-k,\partial^- \widetilde{\Omega}_k^s-k )-2>N_s> N_{s-1}^{c^4},$$
 \eqref{404} and \eqref{jidan}, the right hand side (RHS) of \eqref{22} is bounded by
$CN_s^{c^2d}\delta_s^{-\frac{1}{50}}\delta_s^{10}<\delta_s^9.$
It then follows that   \begin{align*}
	&R_{A_k^{s}-k} M_{s+1}(z) R_{X}  \left( M_{s+1}(z)_{Y}\right) ^{-1}\\  
	&\ \ \ \  = R_{A_k^{s}-k} M_{s+1}(z) R_{X}  \left( M_{s+1}(z)_{X}\right) ^{-1} R_{X}+O(\delta_s^9).
\end{align*}
As a result,  \begin{align*}
	&R_{A_k^{s}-k} M_{s+1}(z) R_{X} \left( M_{s+1}(z)_{Y}\right) ^{-1} R_{X}  M_{s+1}(z)R_{A_k^{s}-k}\\
	&\ \ \ \ = R_{A_k^{s}-k} M_{s+1}(z) R_{X} \left( M_{s+1}(z)_{X}\right) ^{-1} R_{X}  M_{s+1}(z)R_{A_k^{s}-k}+O(\delta_s^9)\\
	&\ \ \ \ = R_{A_k^{s}-k} M_{s}(z) R_{X} \left( M_{s}(z)_{X}\right) ^{-1} R_{X}  M_{s}(z)R_{A_k^{s}-k}+O(\delta_s^9)
\end{align*}
and
\begin{align*}
	S_{s+1}(z)&=M_{s}(z)_{ A_k^{s}-k}-R_{A_k^{s}-k} M_{s}(z) R_{X} \left( M_{s}(z)_{X}\right) ^{-1} R_{X}  M_{s}(z)R_{A_k^{s}-k}+O(\delta_s^9)\\
	&=S_s(z)+O(\delta_s^9),
\end{align*}
which implies \eqref{10} for  the  $(s+1)$-th step. 
Recalling  \eqref{d1-} and \eqref{dede}, we have since \eqref{sSsim}
$$	\operatorname{det}S_{s}(z)\stackrel{\delta_{s-1}}{\sim}\|z-\theta_{s}\|\cdot \|z+\theta_{s}\|.$$
By Hadamard's inequality,  we obtain \begin{align*}
	\operatorname{det}S_{s+1}(z)&=\operatorname{det}S_s(z)+O((2^s)^210^{2^s}\delta_s^9)\\
	&=\operatorname{det}S_s(z)+O(\delta_s^8),
\end{align*}
where we use the fact that  $\#( A^{s}_k-k)\leq2^{s}$, \eqref{10} and $\log\log|\log \delta_s|\sim s.$
Notice that \begin{align*}
	\|z+\theta_{s}\|&\geq\|\theta+k\cdot\omega+\theta_{s}\|-\|z-\theta_{s}\|-\|\theta+k\cdot\omega-\theta_{s}\|\\
	&>\delta_{s}^{\frac{1}{100}}-\delta_{s}^{\frac{1}{10}}-\delta_{1}\\&>\frac{1}{2}\delta_{s}^{\frac{1}{100}}.
\end{align*}
Then  we have
$$\operatorname{det}S_{s+1}(z)\stackrel{\delta_{s}}{\sim}(z-\theta_{s})+r_{s+1}(z),$$
where $r_{s+1}(z)$ is an analytic function defined  in \eqref{d1-} with $|r_{s+1}(z)|<\delta_s^7.$
Finally, by  the R\'ouche theorem, the equation $$(z-\theta_{s})+r_{s+1}(z)=0 $$ has a unique root $\theta_{s+1}$ in  the set defined by \eqref{d1-} satisfying
$$|\theta_{s+1}-\theta_s|=|r_{s+1}(\theta_{s+1})|<\delta_s^7,\ |(z-\theta_{s})+r_{s+1}(z)|\sim|z-\theta_{s+1}|.$$
Moreover $\theta_{s+1}$ is also the unique root of $\operatorname{det}M_{s+1}(z)=0$ in the set defined by \eqref{d1-}.
From $\|z+\theta_{s}\|>\frac{1}{2}\delta_{s}^{\frac{1}{100}}$  and $|\theta_{s+1}-\theta_s|<\delta_s^7$, we have
$$\|z+\theta_{s}\|\sim\|z+\theta_{s+1}\|.$$
Thus if $z$ belongs to the set defined by $\eqref{d1-}$, we have 
\begin{equation}\label{dada}
	\operatorname{det}S_{s+1}(z)\stackrel{\delta_{s}}{\sim}\|z-\theta_{s+1}\|\cdot \|z+\theta_{s+1}\|.
\end{equation}
Since $|\log\delta_{s+1}|\sim|\log\delta_s|^{c^5},$ we get $\delta_{s+1}^{\frac{1}{10^4}}<\frac{1}{2}\delta_s^{\frac{1}{10}}.$
Recalling \eqref{d1-},  
then \eqref{dada} remains  valid for $z$ satisfying
$$ \left\|z- \theta_{s+1}\right\|<\delta_{s+1}^{\frac{1}{10^4}}.$$
For $k\in Q_s^+$,  one  considers 
\begin{align*}
	M_{s+1}(z):=T(z)_{\widetilde{\Omega}_k^{s+1}-k}=\left(\cos 2 \pi(z+n\cdot  \omega)\delta_{n,n'}-E+\varepsilon \Delta\right)_{n\in \widetilde{\Omega}_k^{s+1}-k}
\end{align*}
for $z$ being in 
\begin{equation}\label{d1+}
	\{z \in \mathbb{C}:\  \left| z+\theta_{s} \right|< \delta_{s}^{\frac{1}{10}}\}.
\end{equation}
The same argument shows that  $\operatorname{det}M_{s+1}(z)=0$ has a unique root $\theta_{s+1}'$ in the set defined by \eqref{d1+}. Since $\operatorname{det}M_{s+1}(z)$ is an even function of $z,$  we get $\theta_{s+1}'=-\theta_{s+1}.$
Thus if $z$ belongs to the set defined by $\eqref{d1+}$, we also have \eqref{dada}. 
In conclusion, \eqref{dada}  is established for $z$ belonging to
$$\left\{z\in \mathbb{C}:\ \min_{\sigma=\pm1} \left\|z+\sigma \theta_{s+1}\right\|<\delta_{s+1}^{\frac{1}{10^4}}\right\},$$
which proves \eqref{sSsim} for the $(s+1)$-th step. Combining  $l_s=0$,  \eqref{keyi}--\eqref{dy} 
and  the following
$$\|\theta+k\cdot \omega \pm \theta_{s+1}\|<10\delta_{s+1}^{\frac{1}{100}},\   |\theta_{s+1}-\theta_{s}|<\delta_s^{7} \Rightarrow\|\theta+k\cdot\omega\pm\theta_{s}\|<\delta_s,$$
we get
\begin{equation*}
	\left\{k \in \mathbb{Z}^d+\frac{1}{2}\sum_{i=0}^{s}l_i:\ \min_{\sigma=\pm1}\left\|\theta+k\cdot  \omega +\sigma \theta_{s+1}\right\|<10\delta_{s+1}^{\frac{1}{100}}\right\}\subset P_{s+1},
\end{equation*}
which proves \eqref{P>} at the  $(s+1)$-th step. Finally, we want to estimate $T^{-1}_{\widetilde\Omega_k^{s+1}}.$
For  $k\in P_{s+1}$, by \eqref{Ps+11}, we obtain
$$\theta+k\cdot\omega \in \{z\in \mathbb{C}:\ \min_{\sigma=\pm1} \left\|z+\sigma \theta_s\right\|<\delta_s^{\frac{1}{10}}\},$$
{which together with  \eqref{dada} implies
\begin{align*}
	&\left|\operatorname{det}(T_{A_{k}^{s+1}}-R_{A_{k}^{s+1}} T R_{\widetilde{\Omega}_{k}^{s+1} \backslash A_{k}^{s+1}} T_{\widetilde{\Omega}_{k}^{s+1} \backslash A_{k}^{s+1}}^{-1} R_{\widetilde{\Omega}_{k}^{s+1} \backslash A_{k}^{s+1}}T R_{A_{k}^{s+1}})\right|\\
	&\ \ \ \ =|\operatorname{det}S_{s+1}(\theta+k\cdot\omega)|\\
	&\ \ \ \ \geq\frac{1}{C}\delta_{s}\left\|\theta +k\cdot  \omega-\theta_{s+1}\right\| \cdot\left\|\theta +k\cdot  \omega+\theta_{s+1}\right\|.
\end{align*}
By Cramer's rule and Hadamard's inequality, one has
\begin{align*}
	\|(T_{A_{k}^{s+1}}-R_{A_{k}^{s+1}} T R_{\widetilde{\Omega}_{k}^{s+1} \backslash A_{k}^{s+1}} T_{\widetilde{\Omega}_{k}^{s+1} \backslash A_{k}^{s+1}}^{-1} R_{\widetilde{\Omega}_{k}^{s+1} \backslash A_{k}^{s+1}}T R_{A_{k}^{s+1}})^{-1}\| \\
	<C2^s10^{2^{s}} \delta_{s}^{-1}\left\|\theta +k\cdot  \omega-\theta_{s+1}\right\|^{-1}\cdot\left\|\theta +k \cdot \omega+\theta_{s+1}\right\|^{-1}.
\end{align*}
From Schur complement argument (cf. Lemma \ref{Su}) and \eqref{sgood1}, we get
\begin{align}
		\nonumber\|T_{\widetilde{\Omega}_{k}^{s+1}}^{-1}\|&<4\left( 1+\|T_{\widetilde{\Omega}_{k}^{s+1} \backslash A_{k}^{s+1}}^{-1}\|\right) ^2\\
		\nonumber&\ \ \ \ \times\left( 1+ \|(T_{A_{k}^{s+1}}-R_{A_{k}^{s+1}} T R_{\widetilde{\Omega}_{k}^{s+1} \backslash A_{k}^{s+1}} T_{\widetilde{\Omega}_{k}^{s+1} \backslash A_{k}^{s+1}}^{-1} R_{\widetilde{\Omega}_{k}^{s+1} \backslash A_{k}^{s+1}}T R_{A_{k}^{s+1}})^{-1}\|\right) \\
		\label{111}&< \delta_{s}^{-2}\left\|\theta +k\cdot  \omega-\theta_{s+1}\right\|^{-1}\cdot\left\|\theta +k \cdot \omega+\theta_{s+1}\right\|^{-1}.
\end{align}}

{\bf STEP2: The case $({\bf C}2)_s$ Occurs: }  i.e., 
$${\rm dist}\left(\widetilde{Q}_{s}^{-}, Q_{s}^{+}\right)\leq100N_{s+1}^c.$$
Then there exist $ i_s \in Q_{s}^{+}$ and $j_s \in \widetilde{Q}_{s}^{-}$ with  $\|i_s-j_s\| \leq 100 N_{s+1}^{c}$, such that
\[\left\|\theta+i_s\cdot  \omega+\theta_{s}\right\|<\delta_{s}, \ \left\|\theta+j_s\cdot  \omega-\theta_{s}\right\|<\delta_{s}^{\frac{1}{100}}.\]
Denote \[l_s=i_s-j_s.\]
Using \eqref{2233} and \eqref{3322} yields
$$ Q_{s}^{+},\ \widetilde{Q}_{s}^{-}\subset P_s\subset\mathbb{Z}^d+\frac{1}{2}\sum_{i=0}^{s-1}l_i.$$
Thus $i_s\equiv j_s  \ ({\rm mod}\  \mathbb{Z}^d) $ and  $l_s\in \mathbb{Z}^d$.
Define\begin{equation}\label{Os+1}
	O_{s+1}=Q_s^-\cup( Q_s^+-l_s).
\end{equation}
For every $o\in O_{s+1}$, define its mirror point
\[o^{*}=o+l_s.\]
Then we have
$$O_{s+1}\subset\left\{ o\in \mathbb{Z}^d+\frac{1}{2}\sum_{i=0}^{s-1}l_i:\ \|\theta+o\cdot \omega-\theta_s\|<2\delta_s^\frac{1}{100} \right\}$$ and
$$O_{s+1}+l_s\subset \left\{ o^*\in \mathbb{Z}^d+\frac{1}{2}\sum_{i=0}^{s-1}l_i:\ \|\theta+o^*\cdot \omega+\theta_s\|<2\delta_s^\frac{1}{100} \right\}.$$
Then by \eqref{P>}, we obtain \begin{equation}\label{keke}
	O_{s+1}\cup (O_{s+1}+l_s)\subset P_s.
\end{equation}
Define \begin{equation}\label{P_{s+1}}
	P_{s+1}=\left\{ \frac{1}{2}( o+o^*):\  o\in O_{s+1}\right\}=\left\{o+\frac{l_s}{2}:\  o\in O_{s+1}\right\}.
\end{equation}
Notice that
\begin{align*}
	&\min \left(\|\frac{l_s}{2}\cdot \omega+\theta_{s}\|,\|\frac{l_s}{2}\cdot \omega+\theta_{s}-\frac{1}{2}\|\right) =
	\frac{1}{2}\left\|l_s\cdot  \omega+2 \theta_{s}\right\|\\ 
	&\ \ \ \ \leq \frac{1}{2}(\left\|\theta+i_s\cdot  \omega+\theta_{s}\right\|+\left\|\theta+j_s\cdot \omega-\theta_{s}\right\| )< \delta_{s}^{\frac{1}{100}}.
\end{align*}
Since $\delta_s\ll 1$,  only one of  the following 		
$$\|\frac{l_s}{2}\cdot \omega+\theta_{s}\|< \delta_{s}^{\frac{1}{100}},\  \|\frac{l_s}{2}\cdot \omega+\theta_{s}-\frac{1}{2}\|< \delta_{s}^{\frac{1}{100}}$$
occurs. First, we consider the case
\begin{equation}\label{0}
	\|\frac{l_s}{2}\cdot \omega+\theta_{s}\|< \delta_{s}^{\frac{1}{100}}.
\end{equation}
Let $k\in P_{s+1}$. Since $k=o+\frac{l_s}{2}$ for some $o\in O_{s+1}$ and \eqref{0}, we get 
\[\|\theta+k\cdot \omega\| \leq\|\theta+o\cdot \omega-\theta_{s}\|+	\|\frac{l_s}{2}\cdot \omega+\theta_{s}\|<3\delta_{s}^{\frac{1}{100}},\]
which implies  \begin{equation}\label{wewe}
	P_{s+1}\subset \left\{ k\in \mathbb{Z}^d+\frac{1}{2}\sum_{i=0}^{s}l_i:\ \|\theta+k\cdot \omega\|<3\delta_s^\frac{1}{100} \right\}.
\end{equation}
Moreover, if $k\neq k' \in P_{s+1}$, then
$$\|k-k'\|>\left|\log \frac{\gamma}{6\delta_s^\frac{1}{100}}\right|\sim N_{s+1}^{c^5}\gg10N_{s+1}^{c^2}.$$
Similar to the proof  appeared in   {\bf STEP1} (i.e.,  the $({\bf C}1)_s$ case), we can associate $k\in P_{s+1}$ the blocks $\Omega_k^{s+1}$ and  $\widetilde{\Omega}_k^{s+1}$ with
\begin{align*}
\Lambda_{100N_{s+1}^c}(k)&\subset \Omega_{k}^{s+1}\subset \Lambda_{100N_{s+1}^{c}+50N_{s}^{c^2}}(k),\\
\Lambda_{N_{s+1}^{c^2}}(k)&\subset \widetilde{\Omega}_k^{s+1}\subset \Lambda_{N_{s+1}^{c^2}+50N_{s}^{c^2}}(k)\end{align*}
satisfying
\begin{equation}\label{xsxs}
	\left\{\begin{aligned}
		&\Omega_{k}^{s+1} \cap \widetilde{\Omega}_{k'}^{s'} \neq \emptyset\ \left(s^{\prime}<s+1\right) \Rightarrow \widetilde{\Omega}_{k'}^{s'} \subset \Omega_{k}^{s+1}, \\
		&\widetilde{\Omega}_{k}^{s+1} \cap \widetilde{\Omega}_{k'}^{s'} \neq \emptyset\ \left(s^{\prime}<s+1\right) \Rightarrow \widetilde{\Omega}_{k'}^{s'} \subset \widetilde{\Omega}_{k}^{s+1}, \\
		&{\rm dist}(\widetilde{\Omega}_{k}^{s+1}, \widetilde{\Omega}_{k^{\prime}}^{s+1})>10\operatorname{diam}\widetilde{\Omega}_{k}^{s+1}  \ { \rm for } \ k \neq k^{\prime} \in P_{s+1} .
	\end{aligned}\right.
\end{equation}
In addition, the set
$$\widetilde{\Omega}_k^{s+1}-k\subset \mathbb{Z}^d+\frac{1}{2}\sum_{i=0}^{s}l_i$$ is independent  of $k\in P_{s+1}$ and  symmetrical about the origin.
Clearly, in this construction,  for every $k' \in Q_{s}$, there exists $k=k'-\frac{l_s}{2}\text{ or } k'+\frac{l_s}{2}\in P_{s+1}$, such that
$$\widetilde{\Omega}_{k'}^{s} \subset \Omega_k^{s+1}.$$

For every $k\in P_{s+1}$, we have $o, o^*\in P_s$ since \eqref{keke}.
Define
$$A_k^{s+1}=A_o^{s}\cup A_{o^*}^{s} ,$$
where $o\in O_{s+1}$ and $k=o+o^*$ (cf. \eqref{P_{s+1}}).
Then
 \begin{align*}
 A_k^{s+1}&\subset \Omega_o^s\cup \Omega_{o^*}^s \subset\Omega_k^{s+1},\\
\# A_k^{s+1}&=\# A_o^{s}+\# A_{o^*}^{s}\leq2^{s+1}.
\end{align*}
Now we will verify that
$(\widetilde{\Omega}_{k}^{s+1} \setminus A_k^{s+1})$ is $s$-\textbf{good}, i.e., 
$$\left\{\begin{aligned}
	&l'\in Q_{s'},\widetilde{\Omega}_{l'}^{s'}\subset(\widetilde{\Omega}_{k}^{s+1} \setminus A_k^{s+1}),\widetilde{\Omega}_{l'}^{s'}\subset \Omega_{l}^{s'+1} \Rightarrow \widetilde{\Omega}_{l}^{s'+1} \subset(\widetilde{\Omega}_{k}^{s+1} \setminus A_k^{s+1})\ \text{for }s'<s,\\
	&\left\{l\in P_s :\   \widetilde{\Omega}_{l}^s\subset (\widetilde{\Omega}_{k}^{s+1} \setminus A_k^{s+1}) \right\}\cap Q_{s}= \emptyset.
\end{aligned}\right.
$$
For this purpose, assume that  $$l'\in Q_{s'},\ \widetilde{\Omega}_{l'}^{s'}\subset(\widetilde{\Omega}_{k}^{s+1}\setminus A_k^{s+1}),\ \widetilde{\Omega}_{l'}^{s'}\subset \Omega_{l}^{s'+1}. $$
If $s'\leq s-2$, since $\emptyset\neq\widetilde{\Omega}_{l'}^{s'}\subset\widetilde{\Omega}_{l}^{s'+1}\cap\widetilde{\Omega}_{k}^{s+1}
,$ we have by \eqref{xsxs}
$ \widetilde{\Omega}_{l}^{s'+1}\subset \widetilde{\Omega}_{k}^{s+1}.$
If $\widetilde{\Omega}_{l}^{s'+1}\cap A_k^{s+1}\neq \emptyset,$
then we have
$ \widetilde{\Omega}_{l}^{s'+1}\cap A_o^{s}\neq \emptyset\text{ or }\widetilde{\Omega}_{l}^{s'+1}\cap A_{o^*}^{s}\neq\emptyset.$
Thus by \eqref{haha} $ (s'+1< s)$, we get
$\widetilde{\Omega}_{l'}^{s'+1}\subset \widetilde{\Omega}_{o}^{s}\text{ or }\widetilde{\Omega}_{l'}^{s'+1}\subset \widetilde{\Omega}_{o^*}^{s},$
which implies
$\widetilde{\Omega}_{l'}^{s'}\subset(\widetilde{\Omega}_{o}^{s}\setminus A_o^{s})\text{ or } \widetilde{\Omega}_{l'}^{s'}\subset(\widetilde{\Omega}_{o^*}^{s}\setminus A_{o^*}^{s}).$
Thus we have either
$\widetilde{\Omega}_{l'}^{s'+1}\subset(\widetilde{\Omega}_{o}^{s}\setminus A_o^{s})\subset(\widetilde{\Omega}_{k}^{s+1} \setminus A_k^{s+1})$
or
$\widetilde{\Omega}_{l'}^{s'+1}\subset(\widetilde{\Omega}_{o^*}^{s}\setminus A_{o^*}^{s})\subset(\widetilde{\Omega}_{k}^{s+1} \setminus A_k^{s+1})$
since  both $(\widetilde{\Omega}_{o}^{s}\setminus A_o^{s})$  and $(\widetilde{\Omega}_{o^*}^{s}\setminus A_{o^*}^{s})$ are  $(s-1)$-\textbf{good}. This is a contradiction.
If  $s'=s-1$, $\widetilde{\Omega}_{l'}^{s-1}\subset \Omega_{l}^{s}$ and $\widetilde{\Omega}_{l}^{s}\cap A_k^{s+1} \neq\emptyset$, then either  $l=o$ or $l=o^*$, thus  $\widetilde{\Omega}_{l'}^{s-1}\subset (\widetilde{\Omega}_{o}^{s}\setminus A_o^s)\text{ or }\widetilde{\Omega}_{l'}^{s-1}\subset (\widetilde{\Omega}_{o^*}^{s}\setminus A_{o^*}^s).$
This  contradicts 
$$\left\{l\in P_{s-1} :\   \widetilde{\Omega}_{l}^{s-1}\subset(\widetilde{\Omega}_{o}^{s}\setminus A_o^s)  \right\}\cap Q_{s-1}= \left\{l\in P_{s-1} :\   \widetilde{\Omega}_{l}^{s-1}\subset(\widetilde{\Omega}_{o^*}^{s}\setminus A_{o^*}^s) \right\}\cap Q_{s-1}= \emptyset$$
since both $(\widetilde{\Omega}_{o}^{s}\setminus A_o^s)$ and $(\widetilde{\Omega}_{o^*}^{s}\setminus A_{o^*}^s)$  are  $(s-1)$-\textbf{good}.
Finally, if  $l\in Q_s$  and  $ \widetilde{\Omega}_{l}^s\subset \widetilde{\Omega}_{k}^{s+1},$
then  $l=o$ or $l=o^*$. Thus
$\widetilde{\Omega}_{l}^s\nsubseteq (\widetilde{\Omega}_{k}^{s+1} \setminus A_k^{s+1}),$
which implies  $$\left\{l\in P_s :\   \widetilde{\Omega}_{l}^s\subset (\widetilde{\Omega}_{k}^{s+1} \setminus A_k^{s+1}) \right\}\cap Q_{s}= \emptyset.$$
Moreover,  we have 
\begin{align*}
A_k^{s+1}-k&=(A_o^s-k)\cup (A_{o^*}^s-k)\\
&=\left( (A_o^s-o)-\frac{l_s}{2}\right) \cup\left( (A_{o^*}^s-o^*)+\frac{l_s}{2}\right) \end{align*}
is independent of $k\in P_{s+1}$ and symmetrical about  the origin.

Now consider  the  analytic matrix-valued function
$$M_{s+1}(z):=T(z)_{\widetilde{\Omega}_{k}^{s+1}-k}=\left(\cos 2 \pi(z+n \cdot\omega)\delta_{n,n'}-E+\varepsilon \Delta\right)_{n\in \widetilde{\Omega}_{k}^{s+1}-k}$$
defined in \begin{equation}\label{1000}
	\{z\in \mathbb{C}:\  \left|z\right|<\delta_s^{\frac{1}{10^3}}\}.
\end{equation}
If  $k'\in P_{s}$ and  $\widetilde{\Omega}_{k'}^{s}\subset (\widetilde{\Omega}_{k}^{s+1} \setminus A_k^{s+1})$, then $k'\neq o, o^*$ and $\|k'-o\|,\|k'-o^*\|\leq4N_{s+1}^{c^2}$.
Thus
\begin{align*}
	\|\theta+k'\cdot \omega-\theta_{s}\|&\geq \|(k'-o)\cdot \omega\|-\|\theta+o\cdot \omega-\theta_{s}\|\\
	&\geq \gamma e^{-(4N_{s+1}^{c^2})^{\tau}}-2\delta_s^\frac{1}{100}\\
	&\geq \gamma e^{-4^{\tau}|\log\frac{\gamma}{\delta_{s}}|^{\frac{1}{c}}}-2\delta_s^\frac{1}{100}\\
	&>\delta_s^{\frac{1}{10^4}},
\end{align*}
and
\begin{align*}
	\|\theta+k'\cdot \omega+\theta_{s}\|&\geq \|(k'-o^*)\cdot \omega\|-\|\theta+o^*\cdot \omega+\theta_{s}\|\\
	&\geq \gamma e^{-(4N_{s+1}^{c^2})^{\tau}}-2\delta_s^\frac{1}{100}\\
	&\geq \gamma e^{-4^{\tau}|\log\frac{\gamma}{\delta_{s}}|^{\frac{1}{c}}}-2\delta_s^\frac{1}{100}\\
	&>\delta_s^{\frac{1}{10^4}}.
\end{align*}
By \eqref{L2}, we have
\begin{align}
	\nonumber\|T_{\widetilde{\Omega}_{k}^{s+1} \setminus A_k^{s+1}}^{-1}\|&<\delta^{-3}_{s-1}\sup_{\left\{k'\in P_s :\   \widetilde{\Omega}_{k'}^s\subset (\widetilde{\Omega}_{k}^{s+1} \setminus A_k^{s+1}) \right\}}\|\theta+k'\cdot\omega-\theta_{s}\|^{-1}\cdot \|\theta+k'\cdot\omega+\theta_{s}\|^{-1}\\
	\label{sgood2}&<\frac{1}{2}\delta_s^{-3\times \frac{1}{10^4}}.
\end{align}
One may restate \eqref{sgood2} as
$$	\left\|\left( M_{s+1}(\theta+k\cdot\omega)_{(\widetilde{\Omega}_{k}^{s+1} \setminus A_k^{s+1})-k}\right) ^{-1}\right\|<\frac{1}{2}\delta_s^{-3\times \frac{1}{10^4}}.$$
Since \begin{align}
	\nonumber\|z-(\theta+k\cdot\omega)\|&\leq|z|+\|\theta+k\cdot\omega\|\\
	\label{ff}&<\delta_{s}^{\frac{1}{10^3}}+3\delta_{s}^\frac{1}{100} <2\delta_s^{\frac{1}{10^3}}\ll\delta_s^{3\times\frac{1}{10^4}},
\end{align}
we obtain using Neumann series argument
\begin{equation}\label{808}
	\left\|\left( M_{s+1}(z)_{(\widetilde{\Omega}_{k}^{s+1} \setminus A_k^{s+1})-k}\right) ^{-1}\right\|<\delta_s^{-3\times \frac{1}{10^4}}.
\end{equation}
We may   control $M_{s+1}(z)^{-1}$ by the inverse of
\begin{align*}
	S_{s+1}(z)&=M_{s+1}(z)_{ A_k^{s+1}-k}-R_{A_k^{s+1}-k} M_{s+1}(z) R_{(\widetilde{\Omega}_{k}^{s+1} \setminus A_k^{s+1})-k}\\ 
	&\ \ \ \  \times \left( M_{s+1}(z)_{(\widetilde{\Omega}_{k}^{s+1} \setminus A_k^{s+1})-k}\right) ^{-1} R_{(\widetilde{\Omega}_{k}^{s+1} \setminus A_k^{s+1})-k}  M_{s+1}(z)R_{A_k^{s+1}-k}.
\end{align*}
Our next aim is to analyze   $\operatorname{det}S_{s+1}(z).$
Since $A_k^{s+1}-k=(A_o^{s}-k)\cup(A_{o^*}^{s}-k)$,  $A_o^{s}-k \subset \Omega_o^{s}-k,\  A_{o^*}^{s}-k\subset  \Omega_{o^*}^{s}-k$
and $${\rm dist}(\Omega_o^{s}-k, \Omega_{o^*}^{s}-k)>10\operatorname{diam}\widetilde{\Omega}_o^{s},$$
we have
$$M_{s+1}(z)_{ A_k^{s+1}-k}=M_{s+1}(z)_{A_o^{s}-k}\oplus M_{s+1}(z)_{A_{o^*}^{s}-k}.$$
From  ${\rm dist}(\Omega_o^{s},\partial^+\widetilde{\Omega}_{o}^{s})$  and ${\rm dist}(\Omega_{o^*}^{s},\partial^+\widetilde{\Omega}_{o^*}^{s})>1$,  we have
\begin{align*}
&R_{A_o^{s}-k} M_{s+1}(z) R_{(\widetilde{\Omega}_{k}^{s+1} \setminus A_k^{s+1})-k}=R_{A_o^{s}-k} M_{s+1}(z) R_{(\widetilde{\Omega}_{o}^{s} \setminus A_o^{s})-k},\\
&R_{A_{o^*}^{s}-k} M_{s+1}(z) R_{(\widetilde{\Omega}_{k}^{s+1} \setminus A_k^{s+1})-k}=R_{A_{o^*}^{s}-k} M_{s+1}(z) R_{(\widetilde{\Omega}_{o^*}^{s} \setminus A_{o^*}^{s})-k}.
\end{align*}
Denote $$X=(\widetilde{\Omega}_{o}^{s} \setminus A_{o}^{s})-k, \ X^*=(\widetilde{\Omega}_{o^*}^{s} \setminus A_{o^*}^{s})-k,\ Y=(\widetilde{\Omega}_{k}^{s+1} \setminus A_k^{s+1})-k.$$  Then	direct computations yield
{\small\begin{align}\label{chang}
		\begin{split}
			S_{s+1}(z)&=M_{s+1}(z)_{A_o^{s}-k}\oplus M_{s+1}(z)_{A_{o^*}^{s}-k}-(R_{A_o^{s}-k}\oplus R_{A_{o^*}^{s}-k}) M_{s+1}(z) R_{Y} M_{s+1}(z)_{Y}^{-1} R_{Y}  M_{s+1}(z)R_{A_k^{s+1}-k}  \\
			&=\left( M_{s+1}(z)_{A_o^{s}-k}-R_{A_o^{s}-k} M_{s+1}(z) R_{X} M_{s+1}(z)_{Y} ^{-1} R_{Y}  M_{s+1}(z)R_{A_k^{s+1}-k}\right) \\
			&\ \ \ \ \oplus\left( M_{s+1}(z)_{A_{o^*}^{s}-k}-R_{A_{o^*}^{s}-k} M_{s+1}(z) R_{X^*}M_{s+1}(z)_{Y}^{-1} R_{Y}  M_{s+1}(z)R_{A_k^{s+1}-k}\right).
		\end{split}
\end{align}}
Since $\widetilde{\Omega}_o^s\setminus A_o^s$ is $(s-1)$-\textbf{good},  we have by \eqref{L2}--\eqref{exp}
\begin{align*}
\|{T}_{\widetilde{\Omega}_o^s\setminus A_o^s}^{-1}\|&<\delta_{s-1}^{-3},\\
	\left|{T}_{\widetilde{\Omega}_o^s\setminus A_o^s}^{-1}(x, y)\right|&<e^{-\gamma_{s-1}\|x-y\|_1} \  \text {for }\|x-y\|>N_{s-1}^{c^3}.
\end{align*}
In other words, 
\begin{align}
	\label{ed}\left\|\left( M_{s+1}(\theta+k\cdot \omega)_{X}\right) ^{-1}\right\|&<\delta_{s-1}^{-3},\\
	\label{ad}\left|\left( M_{s+1}(\theta+k\cdot \omega)_{X}\right) ^{-1}(x,y) \right| &<e^{-\gamma_{s-1}\|x-y\|_1}\  \text { for }\|x-y\|>N_{s-1}^{c^3}.
\end{align}
From the approximation  \eqref{ff}, we deduce by the same argument as \eqref{jidan} that 
\begin{equation}\label{yadan}
	\left|\left( M_{s+1}(z)_{(\widetilde{\Omega}_{k}^{s} \setminus A_k^{s})-k}\right) ^{-1}(x,y) \right| <\delta_s^{10}\  \text { for }\|x-y\|>N_{s-1}^{c^4}.
\end{equation}
Let $x\in X$ and  ${\rm dist}(x,A^{s}_o-k)\leq1$. By resolvent identity, we have for any  $y \in Y$,
\begin{align}\label{33}
	\nonumber&\left(M_{s+1}(z)_{Y}\right) ^{-1}(x,y)-\chi_{X}(y)\left( M_{s+1}(z)_{X}\right) ^{-1}(x,y) \\
	&\ \ \ \ = -\sum _{(w,w') \in\partial_{Y} X} \left( M_{s+1}(z)_{X}\right)^{-1}(x,w)\Gamma(w,w')\left( M_{s+1}(z)_{Y}\right) ^{-1}(w',y).
\end{align}
From
$${\rm dist}(x,w)\geq {\rm dist}(A_o^s-k,\partial^- \widetilde{\Omega}_o^s-k )-2>N_s> N_{s-1}^{c^4}, $$
\eqref{808}  and \eqref{yadan}, the RHS of \eqref{33} is bounded by
$CN_s^{c^2d}\delta_s^{-\frac{3}{10^4}}\delta_s^{10}<\delta_s^9.$
It follows that   \begin{align*}
	&R_{A_o^{s}-k} M_{s+1}(z) R_{X}  \left( M_{s+1}(z)_{Y}\right) ^{-1}= R_{A_o^{s}-k} M_{s+1}(z) R_{X}  \left( M_{s+1}(z)_{X}\right) ^{-1} R_{X}+O(\delta_s^9).
\end{align*}
Similarly,  \begin{align*}
	&R_{A_{o^*}^{s}-k} M_{s+1}(z) R_{X^*}  \left( M_{s+1}(z)_{Y}\right) ^{-1} = R_{A_{o^*}^{s}-k} M_{s+1}(z) R_{X^*}  \left( M_{s+1}(z)_{X^*}\right) ^{-1} R_{X^*}+O(\delta_s^9).
\end{align*}
Recalling \eqref{chang}, we get  {\footnotesize \begin{align}\label{xixi}
		S_{s+1}(z)
		&=\left( M_{s+1}(z)_{A_o^{s}-k}-R_{A_o^{s}-k} M_{s+1}(z) R_{X}\left( M_{s+1}(z)_{X}\right) ^{-1} R_{(\widetilde{\Omega}_{o}^{s} \setminus A_o^{s})-k}  M_{s+1}(z)R_{A_o^{s}-k}\right) \nonumber\\
		&\ \ \ \  \oplus\left( M_{s+1}(z)_{A_{o^*}^{s}-k}-R_{A_{o^*}^{s}-k} M_{s+1}(z) R_{X^*}\left( M_{s+1}(z)_{X^*}\right) ^{-1} R_{X^*}  M_{s+1}(z)R_{A_{o^*}^{s}-k}\right)+O(\delta_s^9) \nonumber\\
		&=S_s(z-\frac{l_s}{2}\cdot \omega)\oplus S_s(z+\frac{l_s}{2}\cdot \omega)+O(\delta_s^9).
\end{align}}
From \eqref{0} and \eqref{1000}, we have $$\|z-\frac{l_s}{2}\cdot \omega-\theta_s\|\leq|z|+\|\frac{l_s}{2}\cdot \omega+\theta_s\|<\delta_s^{\frac{1}{10^3}}+\delta_s^\frac{1}{100}<\delta_s^{\frac{1}{10^4}},$$
$$\|z+\frac{l_s}{2}\cdot \omega+\theta_s\|<|z|+\|\frac{l_s}{2}\cdot \omega+\theta_s\|<\delta_s^{\frac{1}{10^3}}+\delta_s^\frac{1}{100}<\delta_s^{\frac{1}{10^4}}.$$
Thus  both $z-\frac{l_s}{2}\cdot \omega$ and $z+\frac{l_s}{2}\cdot \omega$ belong to the set defined by  \eqref{dede},
which together with \eqref{sSsim} implies
\begin{align}
	\label{kx}&\operatorname{det}S_s(z-\frac{l_s}{2}\cdot \omega)\stackrel{\delta_{s-1}}{\sim}\|(z-\frac{l_s}{2}\cdot \omega)-\theta_{s}\|\cdot \|(z-\frac{l_s}{2}\cdot \omega)+\theta_{s}\|,\\
\label{xk}&\operatorname{det}S_s(z+\frac{l_s}{2}\cdot \omega)\stackrel{\delta_{s-1}}{\sim}\|(z+\frac{l_s}{2}\cdot \omega)-\theta_{s}\|\cdot \|(z+\frac{l_s}{2}\cdot \omega)+\theta_{s}\|.
\end{align}
Moreover, \begin{align}\label{le}
	\begin{split}
		\operatorname{det}S_{s+1}(z)&=\operatorname{det}S_s(z-\frac{l_s}{2}\omega)\cdot \operatorname{det}S_s(z+\frac{l_s}{2}\omega)+O((2^{s+1})^210^{2^{s+1}}\delta_s^9)\\
		&=\operatorname{det}S_s(z-\frac{l_s}{2}\omega)\cdot \operatorname{det}S_s(z+\frac{l_s}{2}\omega)+O(\delta_s^8)
	\end{split}	
\end{align}
since $\#( A^{s+1}_k-k)\leq2^{s+1}$, \eqref{10} and $\log\log|\log \delta_s|\sim s.$
Notice that   	\begin{align}
	\begin{split}\label{la}
		\|z+\frac{l_s}{2} \cdot \omega-\theta_{s}\|&\geq\|l_s\cdot \omega\|-\|z-\frac{l_s}{2}\cdot \omega-\theta_s\|\\
		&>\gamma e^{-(100N_s^c)^\tau}-\delta_{s}^{ \frac{1}{10^{4}}}\\
		&>\delta_{s}^{ \frac{1}{10^{4}}},
	\end{split}	
\end{align}
\begin{align}\label{li}
	\begin{split}
		\|z-\frac{l_s}{2} \cdot \omega+\theta_{s}\|&\geq\|l_s\cdot \omega\|-\|z+\frac{l_s}{2}\cdot \omega+\theta_s\|\\
		&>\gamma e^{-(100N_s^c)^\tau}-\delta_{s}^{ \frac{1}{10^{4}}}\\
		&>\delta_{s}^{ \frac{1}{10^{4}}}.
	\end{split}	
\end{align}
Let $z_{s+1}$  satisfy
\begin{equation}\label{ap}
	z_{s+1} \equiv \frac{l_s}{2} \cdot \omega+\theta_{s}\ (\bmod \ \mathbb{Z}), \ \left|z_{s+1}\right|=\| \frac{l_s}{2}\cdot  \omega+\theta_{s}\| < \delta_{s}^{\frac{1}{100}}.
\end{equation}
From \eqref{kx}--\eqref{li}, we get
$$\operatorname{det}S_{s+1}(z)\stackrel{\delta_{s}}{\sim}(z-z_{s+1})\cdot (z+z_{s+1})+r_{s+1}(z),$$
where $r_{s+1}(z)$ is an analytic function in the set defined by  \eqref{1000} with $|r_{s+1}(z)|<\delta_s^7.$	
By R\'ouche theorem, the equation
$$\left(z-z_{s+1}\right)\left(z+z_{s+1}\right)+r_{s+1}(z)=0$$
has exactly two roots $\theta_{s+1}$, $\theta_{s+1}'$ in  the set defined by \eqref{1000}, which are perturbations of $\pm z_{s+1}$, respectively.
Notice  that \[\left\{|z|< \delta_{s}^{\frac{1}{10^3}}:\ \operatorname{det} M_{s+1}(z)=0\right\}=\left\{|z|< \delta_{s}^{\frac{1}{10^3}} :\  \operatorname{det}S_{s+1}(z)=0\right\} \]
and $\operatorname{det}M_{s+1}(z)$ is an even function of $z$. Thus $$\theta_{s+1}'=-\theta_{s+1}.$$
Moreover, we get \begin{equation}\label{app}
	|\theta_{s+1}-z_{s+1}|\leq|r_{s+1}(\theta_{s+1})|^{\frac{1}{2}}<\delta_s^3 \end{equation}   and  \[|\left(z-z_{s+1}\right)\left(z+z_{s+1}\right)+r_{s+1}(z)|\sim |\left(z-\theta_{s+1}\right)\left(z+\theta_{s+1}\right)|.\]
Thus for $z$ being in the set defined by \eqref{1000}, we have
\begin{equation}\label{nana}
	\operatorname{det}S_{s+1}(z)\stackrel{\delta_{s}}{\sim}\|z-\theta_{s+1}\|\cdot \|z+\theta_{s+1}\|.
\end{equation}
Since $\delta_{s+1}^{\frac{1}{10^4}}<\frac{1}{2}\delta_s^{\frac{1}{10^3}},$  by combining \eqref{ap} and \eqref{app}, we get   $$\{z\in \mathbb{C}:\  \min_{\sigma=\pm1}\left|z+\sigma \theta_{s+1}\right|<\delta_{s+1}^{\frac{1}{10^4}}\}\subset\{z\in \mathbb{C}:\  \left|z\right|<\delta_s^{\frac{1}{10^3}}\}.$$
Hence \eqref{nana} also holds true  for $z$ belonging to
$$\{z\in \mathbb{C}:\  \left\|z\pm \theta_{s+1}\right\|<\delta_{s+1}^{\frac{1}{10^4}}\},$$
which proves \eqref{sSsim} for the  $(s+1)$-th step. 

Notice that
$$\|\theta+k\cdot \omega + \theta_{s+1}\|<10\delta_{s+1}^{\frac{1}{100}},\   |\theta_{s+1}-z_{s+1}|<\delta_s^{3} \Rightarrow\|\theta+k\cdot\omega+\frac{l_s}{2}+\theta_{s}\|<\delta_s.$$
Thus  if
\begin{equation*}
	k \in \mathbb{Z}^d+\frac{1}{2}\sum_{i=0}^{s}l_i\ {\rm and }\ \left\|\theta+k\cdot  \omega + \theta_{s+1}\right\|<10\delta_{s+1}^{\frac{1}{100}},
\end{equation*}
then
\begin{equation*}
	k+\frac{l_s}{2} \in \mathbb{Z}^d+\frac{1}{2}\sum_{i=0}^{s-1}l_i\ {\rm and}\  \|\theta+(k+\frac{l_s}{2} )\cdot  \omega + \theta_{s}\|<\delta_{s}.
\end{equation*}
Thus by \eqref{keyi},  we have $k+\frac{l_s}{2}\in Q_s^+$.
Recalling also \eqref{Os+1} and \eqref{P_{s+1}}, we have $k\in P_{s+1}.$ Thus  
\begin{equation*}
	\left\{k \in \mathbb{Z}^d+\frac{1}{2}\sum_{i=0}^{s}l_i:\ \left\|\theta+k\cdot  \omega + \theta_{s+1}\right\|<10\delta_{s+1}^{\frac{1}{100}}\right\}\subset P_{s+1}.
\end{equation*}
Similarly,
\begin{equation*}
	\left\{k \in \mathbb{Z}^d+\frac{1}{2}\sum_{i=0}^{s}l_i:\ \left\|\theta+k\cdot  \omega - \theta_{s+1}\right\|<10\delta_{s+1}^{\frac{1}{100}}\right\}\subset P_{s+1}.
\end{equation*}
Hence we  prove \eqref{P>} for the $(s+1)$-th step.

Finally, we will estimate $T^{-1}_{\widetilde\Omega_k^{s+1}}$.  For  $k\in P_{s+1}$,  we have by  \eqref{wewe}
$$\theta+k\cdot\omega \in \left\{z\in \mathbb{C}:\  \left\|z\right\|<\delta_s^{\frac{1}{10^3}}\right\}.$$
Thus   from \eqref{nana}, we obtain
\begin{align*}
	&\left|\operatorname{det}(T_{A_{k}^{s+1}}-R_{A_{k}^{s+1}} T R_{\widetilde{\Omega}_{k}^{s+1} \backslash A_{k}^{s+1}} T_{\widetilde{\Omega}_{k}^{s+1} \backslash A_{k}^{s+1}}^{-1} R_{\widetilde{\Omega}_{k}^{s+1} \backslash A_{k}^{s+1}}T R_{A_{k}^{s+1}})\right|\\
	&\ \ \ \ =|\operatorname{det}S_{s+1}(\theta+k\cdot\omega)|\\
	&\ \ \ \  \geq\frac{1}{C}\delta_{s}\left\|\theta +k\cdot  \omega-\theta_{s+1}\right\| \cdot\left\|\theta +k\cdot  \omega+\theta_{s+1}\right\|.
\end{align*}
Using  Cramer's rule and Hadamard's inequality implies
\begin{align*}
	&\|(T_{A_{k}^{s+1}}-R_{A_{k}^{s+1}} T R_{\widetilde{\Omega}_{k}^{s+1} \backslash A_{k}^{s+1}} T_{\widetilde{\Omega}_{k}^{s+1} \backslash A_{k}^{s+1}}^{-1} R_{\widetilde{\Omega}_{k}^{s+1} \backslash A_{k}^{s+1}}T R_{A_{k}^{s+1}})^{-1}\| \\
	&\ \ \ \ <C2^{s+1}10^{2^{s+1}} \delta_{s}^{-1}\left\|\theta +k\cdot  \omega-\theta_{s+1}\right\|^{-1}\cdot\left\|\theta +k \cdot \omega+\theta_{s+1}\right\|^{-1}.
\end{align*}
Recalling  Schur complement argument (cf. Lemma \ref{Su}) and \eqref{sgood2}, we get
\begin{align}\label{222}
	\begin{split}
		\|T_{\widetilde{\Omega}_{k}^{s+1}}^{-1}\|&<4\left( 1+\|T_{\widetilde{\Omega}_{k}^{s+1} \backslash A_{k}^{s+1}}^{-1}\|\right)^2\\
		&\ \ \ \  \times \left( 1+\|(T_{A_{k}^{s+1}}-R_{A_{k}^{s+1}} T R_{\widetilde{\Omega}_{k}^{s+1} \backslash A_{k}^{s+1}} T_{\widetilde{\Omega}_{k}^{s+1} \backslash A_{k}^{s+1}}^{-1} R_{\widetilde{\Omega}_{k}^{s+1} \backslash A_{k}^{s+1}}T R_{A_{k}^{s+1}})^{-1}\|\right) \\
		&< \delta_{s}^{-2}\left\|\theta +k\cdot  \omega-\theta_{s+1}\right\|^{-1}\cdot\left\|\theta +k \cdot \omega+\theta_{s+1}\right\|^{-1}.
	\end{split}	
\end{align}

For the case \begin{equation}\label{1}
	\left\|\frac{l_s}{2} \cdot \omega+\theta_{s}-\frac{1}{2}\right\|<\delta_{s}^{\frac{1}{100}},
\end{equation}
we have
\begin{equation}\label{Ps2g2}
	P_{s+1}\subset \left\{ k\in \mathbb{Z}^d+\frac{1}{2}\sum_{i=0}^{s}l_i:\ \|\theta+k\cdot \omega-\frac{1}{2}\|<3\delta_s^\frac{1}{100} \right\}.
\end{equation}
Thus we can consider  
\begin{align*}
	M_{s+1}(z):=T(z)_{\widetilde{\Omega}_k^1-k}=\left(\cos 2 \pi(z+n\cdot  \omega)\delta_{n,n'}-E+\varepsilon \Delta\right)_{n\in \widetilde{\Omega}_k^1-k}
\end{align*}
in \begin{equation}\label{hei}
	\left\{z \in \mathbb{C}:\  | z-\frac{1}{2}|< \delta_{s}^{\frac{1}{10^3}}\right\}.
\end{equation}
By the similar arguments as above, we obtain
both $\theta_{s+1}$  and $1-\theta_{s+1}$ belong to the set defined by \eqref{hei}. Moreover,   all the corresponding conclusions in the case  of \eqref{0} hold for the case \eqref{1}.
Recalling \eqref{111}, estimate  \eqref{222} holds for  the case \eqref{1} as well.

{\bf STEP3: Application of resolvent identity}

Finally, we aim to establish $({\bf e})_{s+1}$ by iterating the resolvent identity. 

Recall that  
$$\left|\log \frac{\gamma}{\delta_{s+1}}\right|=\left|\log \frac{\gamma}{\delta_{s}}\right|^{c^5}.$$ Define  $$Q_{s+1}=\left\{k \in P_{s+1} :\ \min_{\sigma=\pm1}\left\|\theta+k\cdot \omega +\sigma \theta_{s+1}\right\|<\delta_{s+1}\right\}. $$
Assume the finite set $\Lambda\subset \mathbb{Z}^d$ is $(s+1)$-\textbf{good}, i.e., 
\begin{equation}\label{s+1g}
	\left\{\begin{aligned}
		&k'\in Q_{s'},\widetilde{\Omega}_{k'}^{s'}\subset\Lambda,\widetilde{\Omega}_{k'}^{s'}\subset \Omega_{k}^{s'+1} \Rightarrow \widetilde{\Omega}_k^{s'+1} \subset\Lambda\ \text{ for }s'<s+1,\\
		&\{k\in P_{s+1} :\   \widetilde{\Omega}_{k}^{s+1}\subset \Lambda \}\cap Q_{s+1}= \emptyset.
	\end{aligned}\right.
\end{equation}
It remains to verify the implications \eqref{L2} and \eqref{exp}  with $s$ being replaced with $s+1.$

For $k\in P_t\ (1\leq t\leq s+1)$,  denote  by  $$2\Omega_{k}^{t}:=\Lambda_{\operatorname{diam}\Omega_{k}^{t}}(k)$$   the ``double''-size  block  of  $\Omega^{t}_k$. Define moreover
\begin{equation}\label{wp}
	\widetilde{P}_{t}=\{k \in P_{t} :\  \exists \ k'\in Q_{t-1}\ {\rm s.t., }\ \widetilde{\Omega}_{k'}^{t-1}\subset\Lambda,\widetilde{\Omega}_{k'}^{t-1}\subset \Omega_{k}^{t} \}\quad (1\leq t\leq s+1).
\end{equation}
\begin{lem}\label{lems+1}
	For  $k\in P_{s+1}\setminus Q_{s+1}$, we have
	\begin{align}\label{exps+1}
		|T^{-1}_{\widetilde{\Omega}_{k}^{s+1}}(x,y)|<e^{-\widetilde{\gamma}_{s}\|x-y\|_1} \  {\rm for } \ x\in \partial^-\widetilde{\Omega}_{k}^{s+1} \ {\rm and }\ \ y\in 2\Omega_{k}^{s+1},
	\end{align}
	where $\widetilde{\gamma}_s=\gamma_s(1-N_{s+1}^{\frac{1}{c}-1})$.
\end{lem}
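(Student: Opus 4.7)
The plan is to mirror the one-step induction from Lemma \ref{lem11}, upgrading the ingredients from $0$-\textbf{good} to $s$-\textbf{good} and from scale $1$ to scale $s+1$. Concretely, I would use the fact established in $(\textbf{c})_{s+1}$ that $\widetilde\Omega_k^{s+1}\setminus A_k^{s+1}$ is $s$-\textbf{good}, and then close the loop via a single-shot resolvent identity peeling off the small set $A_k^{s+1}$.

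The two inputs I need are the following. First, applying the induction hypothesis $(\textbf{e})_s$ to $\Lambda=\widetilde\Omega_k^{s+1}\setminus A_k^{s+1}$ yields both $\|T^{-1}_{\widetilde\Omega_k^{s+1}\setminus A_k^{s+1}}\|<\delta_s^{-3}$ (really much better, cf.\ \eqref{sgood1} or \eqref{sgood2}) and, crucially,
\[
|T^{-1}_{\widetilde\Omega_k^{s+1}\setminus A_k^{s+1}}(x,w)|<e^{-\gamma_s\|x-w\|_1}\quad\text{for }\|x-w\|>N_s^{c^3}.
\]
Second, the hypothesis $k\in P_{s+1}\setminus Q_{s+1}$ combined with \eqref{111} (case $(\textbf{C}1)_s$) or \eqref{222} (case $(\textbf{C}2)_s$) and the non-resonance $\min_\sigma\|\theta+k\cdot\omega+\sigma\theta_{s+1}\|\geq\delta_{s+1}$ gives the norm bound
\[
\|T^{-1}_{\widetilde\Omega_k^{s+1}}\|<\delta_s^{-2}\delta_{s+1}^{-2}<\delta_{s+1}^{-3}.
\]

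With these in hand, the resolvent identity for the pair $\widetilde\Omega_k^{s+1}\setminus A_k^{s+1}\subset\widetilde\Omega_k^{s+1}$ reads
\[
T^{-1}_{\widetilde\Omega_k^{s+1}}(x,y)=T^{-1}_{\widetilde\Omega_k^{s+1}\setminus A_k^{s+1}}(x,y)\chi_{\widetilde\Omega_k^{s+1}\setminus A_k^{s+1}}(y)-\!\!\sum_{(w,w')\in\partial A_k^{s+1}}\!\!T^{-1}_{\widetilde\Omega_k^{s+1}\setminus A_k^{s+1}}(x,w)\,\Gamma(w,w')\,T^{-1}_{\widetilde\Omega_k^{s+1}}(w',y).
\]
The direct term is estimated by $e^{-\gamma_s\|x-y\|_1}$ since $x\in\partial^-\widetilde\Omega_k^{s+1}$ and $y\in 2\Omega_k^{s+1}$ force $\|x-y\|\gtrsim\mathrm{diam}\,\widetilde\Omega_k^{s+1}\gg N_s^{c^3}$. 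For the boundary sum, $w\in\partial^+A_k^{s+1}\subset\Omega_k^{s+1}$ so $\|y-w\|\lesssim\mathrm{diam}\,\Omega_k^{s+1}$, and bounding $|T^{-1}_{\widetilde\Omega_k^{s+1}}(w',y)|\leq\delta_{s+1}^{-3}$ yields
\[
e^{-\gamma_s\|x-w\|_1}\delta_{s+1}^{-3}\leq e^{-\gamma_s\bigl(\|x-y\|_1-\|y-w\|_1\bigr)+C|\log\delta_{s+1}|}\leq e^{-\gamma_s\bigl(1-CN_{s+1}^{c-c^2}-C|\log\delta_{s+1}|N_{s+1}^{-c^2}\bigr)\|x-y\|_1},
\]
where I used $\|y-w\|_1\lesssim(\mathrm{diam}\,\widetilde\Omega_k^{s+1})^{1/c}\lesssim\|x-y\|_1^{1/c}$ in the sub-linear case $(\textbf{C}2)_s$, and the analogous (easier) estimate with $\Omega_k^{s+1}\sim N_{s+1}$, $\widetilde\Omega_k^{s+1}\sim N_{s+1}^c$ in case $(\textbf{C}1)_s$.

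The main obstacle — the same one flagged in the introduction — is the sublinear ratio $\mathrm{diam}\,\Omega_k^{s+1}\sim(\mathrm{diam}\,\widetilde\Omega_k^{s+1})^\rho$ with $\rho<1$, which is precisely what buys us $1-N_{s+1}^{1/c-1}$ rather than a catastrophic $1-O(1)$ loss. Using $|\log\delta_{s+1}|\sim|\log\delta_s|^{c^5}\lesssim N_{s+1}^{c^{10}\tau}\ll N_{s+1}^{1/c}$ (from $c^{20}<1/\tau$), both correction terms in the exponent are dominated by $N_{s+1}^{1/c-1}$, yielding the claimed rate $\widetilde\gamma_s=\gamma_s(1-N_{s+1}^{1/c-1})$. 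Once this lemma is established, the iteration of the resolvent identity along a chain of blocks $J_w$ — with $J_w=\widetilde\Omega_{k}^{s+1}$ if $w$ lies in some $2\Omega_k^{s+1}$ for $k\in\widetilde P_{s+1}$ and $J_w=\Lambda_{N_{s+1}/2}(w)\cap\Lambda$ otherwise — together with the Schur-test bound \eqref{L2} for the $(s+1)$-th step (proved exactly as in \eqref{kuaile}), produces \eqref{exp} at level $s+1$ with constant $\gamma_{s+1}=\gamma_s(1-N_{s+1}^{1/c-1})^3$, completing $(\textbf{e})_{s+1}$.
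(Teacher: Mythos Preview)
Your proof of the lemma is correct and follows the paper's argument essentially verbatim: the same single-shot resolvent identity peeling off $A_k^{s+1}$, the same two inputs (off-diagonal decay from $s$-goodness of $\widetilde\Omega_k^{s+1}\setminus A_k^{s+1}$ via $(\textbf{e})_s$, and the norm bound $\|T^{-1}_{\widetilde\Omega_k^{s+1}}\|<\delta_{s+1}^{-3}$ from \eqref{111}/\eqref{222}), and the same absorption of the $\|y-w\|_1\lesssim\|x-y\|_1^{1/c}$ and $|\log\delta_{s+1}|\lesssim N_{s+1}^{1/c}$ corrections into the factor $(1-N_{s+1}^{1/c-1})$.

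One caveat about your closing paragraph, which goes beyond the lemma: in the subsequent proof of $(\textbf{e})_{s+1}$ the paper does \emph{not} take $J_w=\Lambda_{N_{s+1}/2}(w)\cap\Lambda$ in the ``otherwise'' case and $\widetilde\Omega_k^{s+1}$ in the resonant case. Rather, it uses the full multi-scale family: for $w\in 2\Omega_k^t$ with $k\in\widetilde P_t\setminus Q_t$ (any $1\leq t\leq s+1$) one takes $J_w=\widetilde\Omega_k^t$, and only in the genuinely non-resonant case $w\notin\bigcup_{k\in\widetilde P_1}2\Omega_k^1$ does one use the tiny block $\Lambda_{N_1/2}(w)\cap\Lambda$. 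Your dichotomy would force the ``otherwise'' block to be $s$-good, which is not guaranteed; the paper's choice ensures every $J_w$ carries a decay estimate already established at its own scale (via the lemma at each level $t$).
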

\begin{proof}[Proof of Lemma \ref{lems+1}]
	Notice first  that $${\rm dist}( \partial^-\widetilde{\Omega}_{k}^{s+1} ,  2\Omega_{k}^{s+1})\gtrsim\operatorname{diam}\widetilde{\Omega}_{k}^{s+1}>N_{s+1}\gg N_s^{c^3}.$$	Since   $\widetilde{\Omega}_{k}^{s+1}\setminus A_k^{s+1}$ is $s$-\textbf{good},
	we have by \eqref{exp}
	$$|T^{-1}_{\widetilde{\Omega}_{k}^{s+1}\setminus A_k^{s+1}}(x,w)|<e^{-\gamma_s\|x-w\|_1} \quad \text{for }  x\in \partial^-\widetilde{\Omega}_{k}^{s+1} , \ \ w\in  (\widetilde{\Omega}_{k}^{s+1}\setminus A_k^{s+1})\cap2\Omega_{k}^{s+1} .$$
From  \eqref{222} and  $k\notin Q_{s+1}$, we obtain $$	\left\|T_{\widetilde{\Omega}_{k}^{s+1}}^{-1}\right\|<\delta_{s}^{-2}\delta_{s+1}^{-2}< \delta_{s+1}^{-3}.$$
	Using  resolvent identity implies (since $x\in \partial^-\widetilde{\Omega}_{k}^{s+1}$)
	\begin{align*}
		|T^{-1}_{\widetilde{\Omega}_{k}^{s+1}}(x,y)|&=\left|T^{-1}_{\widetilde{\Omega}_{k}^{s+1}\setminus A_k^{s+1}}(x,y)\chi_{\widetilde{\Omega}_{k}^{s+1}\setminus A_k^{s+1}}(y)-\sum_{(w', w)\in\partial A_k^{s+1}  } T^{-1}_{\widetilde{\Omega}_{k}^{s+1}\setminus A_k^{s+1}}(x,w)\Gamma(w,w')T^{-1}_{\widetilde{\Omega}_{k}^{s+1}}(w',y)\right|\\
		&< e^{-\gamma_s\|x-y\|_1}+  2d\cdot 2^{s+1} \sup_{w\in\partial^+ A_k^{s+1}}e^{-\gamma_s\|x-w\|_1}\|T^{-1}_{\widetilde{\Omega}_{k}^{s+1}}\|\\
		&<e^{-\gamma_s\|x-y\|_1}+ \sup_{w\in\partial^+ A_k^{s+1}}e^{-\gamma_s\left( \|x-y\|_1-\|y-w\|_1\right)+C|\log\delta_{s+1}|} \\\
		&<e^{-\gamma_s\|x-y\|_1}+e^{- \gamma_s\left(1-C\left( \|x-y\|_1^{\frac{1}{c}-1}+\frac{|\log\delta_{s+1}|}{\|x-y\|_1}\right) \right)\|x-y\|_1}\\
		&<e^{- \gamma_{s}\left(1-N_{s+1}^{\frac{1}{c}-1}\right)\|x-y\|_1}\\
		&=e^{-\widetilde{\gamma}_s\|x-y\|_1},
	\end{align*}
	since
	$$N_{s+1}^{c}\lesssim\operatorname{diam}\widetilde{\Omega}_k^{s+1}\sim\|x-y\|_1,  \  \|y-w\|_1\lesssim\operatorname{diam}\Omega_{k}^{s+1}\lesssim\left( \operatorname{diam}\widetilde{\Omega}_k^{s+1}\right) ^\frac{1}{c}$$
	and
	\begin{equation}\label{guji}
		|\log\delta_{s+1}|\sim|\log\delta_{s}|^{c^5}\sim N_{s+1}^{c^{10}\tau}<N_{s+1}^{\frac{1}{c}}.
	\end{equation}
This proves the lemma. 
\end{proof}
Next we consider the general case and will finish the proof of $({\bf e})_{s+1}$. Define 
$$ \Lambda'=\Lambda\backslash \bigcup_{k\in \widetilde{P}_{s+1}}\Omega^{s+1}_{k}.$$
We claim that  $\Lambda'$ is $s$-\textbf{good}. In fact, for $s'\leq s-1$, assume $\widetilde{\Omega}_{l'}^{s'}\subset\Lambda',\  \widetilde{\Omega}_{l'}^{s'}\subset \Omega_{l}^{s'+1}$ and $\widetilde{\Omega}_l^{s'+1} \cap \left(\bigcup_{k\in \widetilde{P}_{s+1}}\Omega^{s+1}_{k}\right)\neq \emptyset.$
Thus by \eqref{xsxs},	 we obtain
 $\widetilde{\Omega}_l^{s'+1} \subset \bigcup_{k\in \widetilde{P}_{s+1}}\Omega^{s+1}_{k},$
which  contradicts    $ \widetilde{\Omega}_{l'}^{s'}\subset\Lambda'.$
If there exists $k'$ such that $k'\in Q_s$ and  $\widetilde{\Omega}_{k'}^{s}\subset \Lambda'\subset \Lambda$,  then by \eqref{s+1g} there exists $k\in P_{s+1}$, such that
$$\widetilde{\Omega}_{k'}^{s}\subset \Omega_{k}^{s+1}\subset \Lambda.$$
Hence recalling \eqref{wp}, one has $k\in \widetilde{P}_{s+1}$  and 
$$\widetilde{\Omega}_{k'}^{s}\subset\bigcup_{k\in \widetilde{P}_{s+1}}  \Omega_{k}^{s+1}.$$ 
This contradicts  $	\widetilde{\Omega}_{k'}^{s}\subset \Lambda'$. We have proven the claim. As a result, 
the estimates \eqref{L2} and \eqref{exp} hold true with $\Lambda$ replaced by $\Lambda'$. We can now estimate $T_\Lambda^{-1}$. For this purpose, we have the following two cases. 
\begin{itemize}
	\item[(1).]
	Assume that $x\notin \bigcup_{k\in \widetilde{P}_{s+1}}2\Omega^{s+1}_{k}$.    Then  $N_{s}^{c^3}\ll N_{s+1}\leq  {\rm dist}(x,\partial_\Lambda^-\Lambda')$.
	For $y \in \Lambda$, using resolvent identity shows
	\begin{align*}
		T_{\Lambda}^{-1}(x,y)=T_{\Lambda'}^{-1}(x,y)\chi_{\Lambda'}(y)-\sum_{(w, w')\in\partial_\Lambda\Lambda'} T_{\Lambda'}^{-1}(x,w)\Gamma(w,w')T_{\Lambda}^{-1}(w',y).
	\end{align*}
	Since
	\begin{align*}
		\sum_{y\in \Lambda'}|T_{\Lambda'}^{-1}(x,y)\chi_{\Lambda'}(y)|&\leq \sum_{\|x-y\|\leq N_s^{c^3}}|T_{\Lambda'}^{-1}(x,y)|+ \sum_{\|x-y\|> N_s^{c^3}}|T_{\Lambda'}^{-1}(x,y)|\\
		&\leq N_s^{c^3}\cdot \|T^{-1}_{\Lambda'}\|+\sum_{\|x-y\|> N_s^{c^3}}e^{-\gamma_s\|x-y\|_1}\\
		&\leq 2N_s^{c^3}\delta_{s-1}^{-3}\delta_{s}^{-2}\\&<\frac{1}{2}\delta_{s}^{-3}
	\end{align*}
	and
	\begin{align*}
		\sum_{w\in\partial^-_\Lambda\Lambda'}|T_{\Lambda'}^{-1}(x,w)|&\leq\sum_{\|x-w\|_1\geq N_{s+1}}e^{-\gamma_s\|x-w\|_1}<e^{-\frac{1}{2}\gamma_sN_{s+1}},
	\end{align*}
	we get
	\begin{align*}
		\sum_{y\in \Lambda}|T_{\Lambda}^{-1}(x,y)|&\leq \sum_{y\in \Lambda'}|T_{\Lambda'}^{-1}(x,y)\chi_{\Lambda'}(y)|+\sum_{y\in \Lambda,(w, w')\in\partial_\Lambda\Lambda'}|T_{\Lambda'}^{-1}(x,w)\Gamma(w,w')T_{\Lambda}^{-1}(w',y)|  \nonumber \\
		&\leq\frac{1}{2}\delta_s^{-3}+2d\sum_{w\in\partial^-_\Lambda\Lambda'}|T_{\Lambda'}^{-1}(x,w)|\cdot \sup_{w'\in \Lambda}\sum_{y\in \Lambda}|T_{\Lambda}^{-1}(w',y)|\\
		&\leq\frac{1}{2}\delta_s^{-3}+\frac{1}{10}\sup_{w'\in \Lambda}\sum_{y\in \Lambda}|T_{\Lambda}^{-1}(w',y)|.
	\end{align*}
	\item[(2).]   Assume that $x\in 2\Omega^{s+1}_{k}$ for some $k\in \widetilde{P}_{s+1}$. Then by \eqref{s+1g}, we have $\widetilde{\Omega}_{k}^{s+1} \subset \Lambda$ and    $k\notin Q_{s+1}.$
	For $y\in \Lambda,$ t using  resolvent identity shows
	\begin{align*}
		T_{\Lambda}^{-1}(x,y)=T_{\widetilde{\Omega}_{k}^{s+1}}^{-1}(x,y)\chi_{\widetilde{\Omega}_{k}^{s+1}}(y)-\sum_{(w,w')\in\partial_\Lambda\widetilde{\Omega}_{k}^{s+1}}T_{\widetilde{\Omega}_{k}^{s+1}}^{-1}(x,w)\Gamma(w,w')T_{\Lambda}^{-1}(w',y).
	\end{align*}
By  \eqref{222},  \eqref{exps+1} and 
	$$ N_{s+1}<\operatorname{diam}\widetilde{\Omega}_{k}^{s+1}\lesssim {\rm dist}(x,\partial^-_\Lambda\widetilde{\Omega}_{k}^{s+1}),$$
	we have
	\begin{align*}
		\sum_{y\in \Lambda}|T_{\Lambda}^{-1}(x,y)|&\leq \sum_{y\in \Lambda}|T_{\widetilde{\Omega}_{k}^{s+1}}^{-1}(x,y)\chi_{\widetilde{\Omega}_{k}^{s+1}}(y)|+\sum_{y\in\Lambda,(w,w')\in\partial_\Lambda\widetilde{\Omega}_{k}^{s+1}}|T_{\widetilde{\Omega}_{k}^{s+1}}^{-1}(x,w)\Gamma(w,w')T_{\Lambda}^{-1}(w',y)|   \\
		&<\#\widetilde{\Omega}_{k}^{s+1}\cdot \|T_{\widetilde{\Omega}_{k}^{s+1}}^{-1}\|+CN_{s+1}^{c^2d}e^{-\widetilde{\gamma}_s N_{s+1}}\sup_{w'\in \Lambda}\sum_{y\in \Lambda}|T_{\Lambda}^{-1}(w',y)|\\
		&<CN_{s+1}^{c^2d}\delta_{s}^{-2}\left\|\theta+k \cdot \omega-\theta_{s+1}\right\|^{-1} \cdot\left\|\theta+k \cdot \omega+\theta_{s+1}\right\|^{-1}+\frac{1}{10}\sup_{w'\in \Lambda}\sum_{y\in \Lambda}|T_{\Lambda}^{-1}(w',y)|\\
		&<\frac{1}{2}\delta_{s}^{-3}\left\|\theta+k \cdot \omega-\theta_{s+1}\right\|^{-1} \cdot\left\|\theta+k \cdot \omega+\theta_{s+1}\right\|^{-1}+\frac{1}{10}\sup_{w'\in \Lambda}\sum_{y\in \Lambda}|T_{\Lambda}^{-1}(w',y)|.
	\end{align*}
\end{itemize}
Combining the above two cases, we obtain
\begin{align}\label{L22}
	\|T_{\Lambda}^{-1}\|&\leq\sup_{x\in \Lambda}\sum_{y\in \Lambda}|T_{\Lambda}^{-1}(x,y)| \nonumber \\
	&<\delta^{-3}_{s}\sup_{\left\{k\in P_{s+1} :\   \widetilde{\Omega}_{k}^{s+1}\subset \Lambda \right\}}\left\|\theta+k\cdot  \omega-\theta_{s+1}\right\|^{-1} \cdot\left\|\theta+k \cdot \omega+\theta_{s+1}\right\|^{-1}.
\end{align}
Finally, we turn to the off-diagonal decay estimates. From  \eqref{gai}, \eqref{s+1g} and  \eqref{wp}, it follows that for $k'\in \widetilde{P}_t\cap Q_t\ (1\leq t\leq s)$
there exists $k\in \widetilde{P}_{t+1}$ such that
$$\widetilde{\Omega}_{k'}^t\subset\Omega_k^{t+1}$$
and 
$$\widetilde{P}_{s+1}\cap Q_{s+1}=\emptyset.$$  Moreover,
$$\bigcup_{1\leq t\leq s+1}\bigcup_{k\in \widetilde{P}_t}\widetilde{\Omega}^t_{k}\subset \Lambda.$$
Hence for any $w \in \Lambda$, if
$$w\in \bigcup_{k\in \widetilde{P}_1}2\Omega^1_{k},  $$
then there exists $1\leq t\leq s+1$ such that
$$w\in \bigcup_{k\in \widetilde{P}_t\setminus Q_t}2\Omega^t_{k}.  $$
For every $w\in \Lambda,$ define its block in $\Lambda$
\[J_w=\left\{\begin{aligned}
	&\Lambda_{\frac{1}{2}N_1}(w)\cap\Lambda \quad \text{if }   w\notin \bigcup_{k\in \widetilde{P}_1}2\Omega^1_{k},\   &\textcircled{1} \\
	&\widetilde{\Omega}^{t}_k \quad \text{if }  w \in2\Omega_k^{t} \text{ for some }k\in  \widetilde{P}_t\setminus Q_t.\   &\textcircled{2}
\end{aligned}\right. \]
Then $\operatorname{diam}J_w\leq\operatorname{diam}\widetilde{\Omega}^{s+1}_k < 3N_{s+1}^{c^2}$. For \textcircled{1}, we have  $J_w\cap Q_0=\emptyset$ and ${\rm dist}(w,\partial^-_\Lambda J_w)\geq\frac{1}{2}N_1.$ Thus
$$| T^{-1}_{J_w}(w,w')|<e^{-\gamma_0\|w-w'\|_1} \  \text{\rm for}\  w'\in\partial^-_\Lambda J_w.  $$
For \textcircled{2}, by \eqref{exps+1}, we have
$$  |T_{J_w}(w,w')|<e^{-\widetilde{\gamma}_{t-1}\|w-w'\|_1} \  {\rm for}\  w'\in\partial^-_\Lambda J_w. $$
Let $\|x-y\|>N_{s+1}^{c^3}$.  The resolvent identity reads as 
\begin{align*}
	T_{\Lambda}^{-1}(x,y)=T_{J_x}^{-1}(x,y)\chi_{J_x}(y)-\sum _{(w,w')\in\partial_\Lambda J_x}T_{J_x}^{-1}(x,w)\Gamma(w,w')T_{\Lambda}^{-1}(w',y).
\end{align*}
The first term in the above identity is zero since $\|x-y\|>N_{s+1}^{c^3}>3 N_{s+1}^{c^2}$ (so that $y\notin J_x$). It follows that
\begin{align*}
	|T_{\Lambda}^{-1}(x,y)|&\leq C N_{s+1}^{c^2d}e^{-\min\limits_t\left( \gamma_0(1-2N_1^{-1}),\widetilde{\gamma}_{t-1}(1-N_{t}^{-1})\right)\|x-x_1\|_1 }|T_{\Lambda}^{-1}(x_1,y)| \\
	&\leq C N_{s+1}^{c^2d} e^{-\widetilde{\gamma}_s(1-N_{s+1}^{-1})\|x-x_1\|_1}|T_{\Lambda}^{-1}(x_1,y)|\\
	&<e^{-\widetilde{\gamma}_s(1-N_{s+1}^{-1}-\frac{C\log N_{s+1}}{N_{s+1}})\|x-x_1\|_1}|T_{\Lambda}^{-1}(x_1,y)|\\
	&<e^{-\gamma_s(1-N_{s+1}^{\frac{1}{c}-1})^2\|x-x_1\|_1}|T_{\Lambda}^{-1}(x_1,y)|\\
	&=e^{-\gamma_s'\|x-x_1\|_1}|T_{\Lambda}^{-1}(x_1,y)|
\end{align*}	for some $x_1\in \partial_\Lambda^+ J_x $,
where $\gamma_s'=\gamma_s(1-N_{s+1}^{\frac{1}{c}-1})^2.$
Iterate the above procedure   and stop it if  for some $L$, $\|x_{L}-y\|<3N_{s+1}^{c^2}$. Recalling \eqref{guji} and \eqref{L22}, we get
\begin{align*}
	|T_{\Lambda}^{-1}(x,y)|&\leq e^{-\gamma_s'\|x-x_1\|_1}\cdots e^{-\gamma_s'\|x_{L-1}-x_L\|_1}|T_{\Lambda}^{-1}(x_L,y)| \\
	&\leq e^{-\gamma_s'(\|x-y\|_1-3N_{s+1}^{c^2})}\|T_\Lambda^{-1}\|<e^{-\gamma_s'(1-3N_{s+1}^{c^2-c^3})\|x-y\|_1}\delta_{s+1}^{-3}\\
	&<e^{-\gamma_s'(1-3N_{s+1}^{c^2-c^3}-3\frac{|\log\delta_{s+1}|}{N_{s+1}^{c^3}})\|x-y\|_1}\\
	&<e^{-\gamma_s'(1-N_{s+1}^{\frac{1}{c}-1})\|x-y\|_1}\\
	&=e^{-\gamma_{s+1}\|x-y\|_1}.
\end{align*}
This gives the off-diagonal decay estimates. 

We have completed the proof of Theorem \ref{Inthm}.
\end{proof}
\section{Arithmetic Anderson localization}
As an application of Green's function estimates of previous section, we prove the arithmetic version of Anderson localization below. 
\begin{proof}[Proof of Theorem \ref{thm1}]
	Recall first 
	 \begin{align}\label{res}
		\nonumber&\Theta_{\tau_1}=\left\{(\theta,\omega) \in \mathbb{T}\times \mathcal{R}_{\tau,\gamma}:\   {\rm the\  relation}\  \left\| 2 \theta+n\cdot \omega \right\|\leq e^{-\|n\|^{\tau_1}} \text{ holds for finitely many $n\in \mathbb{Z}^d$} \right\},
	\end{align}
	where $0<\tau_1<\tau$.
	
	
	We prove  for $0<\varepsilon\leq\varepsilon_0$, $\omega\in\mathcal{R}_{\tau,\gamma}$ and $(\theta,\omega) \in \Theta_{\tau_1},$ $H(\theta)$ has only pure point spectrum with exponentially decaying eigenfunctions.
	Let $\varepsilon_0$ be given by Theorem \ref{Inthm}. Fix $\omega$ and $\theta$ so that $\omega\in\mathcal{R}_{\tau,\gamma}$ and $(\theta,\omega) \in \Theta_{\tau_1}.$  Let $E\in [-2,2]$ be a generalized eigenvalue of $H(\theta)$ and $u=\{u(n)\}_{n\in\Z^d}\neq0$ be the corresponding generalized eigenfunction satisfying $|u(n)|\leq(1+\|n\|)^d.$ From Schnol's theorem, it suffices to show $u$  decays exponentially. For this purpose, note first  there exists (since $(\theta,\omega) \in \Theta_{\tau_1}$) some $\widetilde{s}\in\N$ such that
	\begin{equation}\label{xiajie}
		\left\| 2 \theta+n\cdot \omega \right\|>e^{-\|n\|^{\tau_1}}\ {\rm for}\ {\rm all}\   n\ {\rm satisfying}\ \|n\| \geq N_{\widetilde{s}}.	
		\end{equation}
	We claim  that there exists $s_0>0$ such that,  for $s\geq s_0$,
	\begin{equation}\label{mao}
		\Lambda_{2N_s^{c^4}}\cap\left(\bigcup_{k\in Q_s}\widetilde{\Omega}_k^s\right)\neq \emptyset.
	\end{equation}
	For otherwise, then there exist a subsequence $s_i\to+\infty$ (as $i\to\infty$) such that
	\begin{equation}\label{bumao}
		\Lambda_{2N_{s_i}^{c^4}}\cap\left(\bigcup_{k\in Q_{s_i}}\widetilde{\Omega}_k^{s_i}\right)= \emptyset.
	\end{equation}
	Then we can enlarge $\Lambda_{N_{s_i}^{c^4}}$ 
	to $\widetilde{\Lambda}_{i}$
	satisfying 
	$$
	\Lambda_{N_{s_i}^{c^4}}\subset \widetilde{\Lambda}_{i}\subset \Lambda_{N_{s_i}^{c^4}+50N_{s_i}^{c^2}}, 	$$
	and 
	$$ \widetilde{\Lambda}_{i}\cap\widetilde{\Omega}_k^{s'}\neq\emptyset \Rightarrow \widetilde{\Omega}_k^{s'}\subset \widetilde{\Lambda}_{i}\ {\rm for}\  s'\leq s\ {\rm and}\  k\in P_{s'}.$$
	From \eqref{bumao}, we have
	$$ \widetilde{\Lambda}_{i}\cap\left(\bigcup_{k\in Q_{s_i}}\widetilde{\Omega}_k^{s_i}\right)= \emptyset,$$
	which shows  $\widetilde{\Lambda}_{i}$ is $s_i$-\textbf{good}. As a result, for $n\in \Lambda_{N_{s_i}}$,  since ${\rm dist}(n,\partial^- \widetilde{\Lambda}_{N_{s_i}^{c^4}})\geq\frac{1}{2} N_{s_i}^{c^4}>N_{s_i}^{c^3}$, we have
	\begin{align*}
		|u(n)|&\leq \sum_{(n',n'')\in \partial \widetilde{\Lambda}_{i}} | T^{-1}_{\widetilde{\Lambda}_{N_{s_i}^{c^4}}}(n, n') u(n'')|\\
		&\leq2d\sum_{n'\in \partial^- \widetilde{\Lambda}_{i}} | T^{-1}_{\widetilde{\Lambda}_{i}}(n, n')|\cdot \sup_{n''\in  \partial^+ \widetilde{\Lambda}_{i}}| u(n'')|\\
		&\leq C N_{s_i}^{2c^4d}\cdot e^{-\frac{1}{2}\gamma_{\infty}N_{s_i}^{c^4}}\rightarrow0.
	\end{align*}
	From $N_{s_i}\to +\infty$, it follows that $u(n)=0\ {\rm for}\ \forall\  n\in \mathbb{Z}^d$. This contradicts  $u\neq0$, and the claim is proved. 
	
	Next, define
	$$U_{s} =\Lambda_{8 N_{s+1}^{c^4}} \backslash \Lambda_{4 N_{s}^{c^4}},\ U_{s}^{*} = \Lambda_{10 N_{s+1}^{c^4}} \backslash \Lambda_{3 N_{s}^{c^4}}. $$
	We can also enlarge  $U^*_s$  to $\widetilde{U}^*_s$ so that
	$$ U^*_s\subset \widetilde{U}^*_s\subset \Lambda_{50N_s^{c^2}}(	U^*_s),$$
	and 
	$$\widetilde{U}^*_s\cap\widetilde{\Omega}_k^{s'}\neq\emptyset \Rightarrow \widetilde{\Omega}_k^{s'}\subset \widetilde{U}^*_s \ {\rm for}\  s'\leq s \ {\rm and}\  k\in P_{s'} .$$
Let $n$ satisfy $\|n\|>\max(4 N_{\widetilde{s}}^{c^4},4 N_{s_0}^{c^4})$. Then  there exists some $s\geq\max(\widetilde{s},s_0)$ such that
	\begin{equation}\label{n}
		n \in 	U_{s}.
	\end{equation}
By \eqref{mao},  without loss of generality, we may assume
	$$	\Lambda_{2N_s^{c^4}}\cap\widetilde{\Omega}_k^s\neq \emptyset$$
	for some $k\in Q_s^+$.
	Then for $k\neq k'\in Q_s^+$, we have
	$$\|k-k'\|>\left|\log\frac{\gamma}{2\delta_s}\right|^\frac{1}{\tau}\gtrsim N_{s+1}^{c^5}\gg\operatorname{diam}\widetilde{U}^*_s.$$
	Thus $$\widetilde{U}^*_s\cap\left(\bigcup_{l\in Q_s^+}\widetilde{\Omega}_l^s\right)= \emptyset.$$
	Now, if there exists $l\in Q_s^-$ such that
	$$\widetilde{U}^*_s\cap\widetilde{\Omega}_l^s\neq  \emptyset,$$
	then
	$$N_{s}<N_{s}^{c^4}-100N_s^{c^2} \leq\|l\|-\|k\|\leq\|l+k\|\leq\|l\|+\|k\|<11N_{s+1}^{c^4}.$$
	Recalling $$ Q_s\subset P_s \subset \mathbb{Z}^d+\frac{1}{2}\sum_{i=0}^{s-1}l_i, $$
	we have $l+k\in \mathbb{Z}^d.$ Hence  by \eqref{xiajie}, 
	\begin{align*}
	e^{-(11N_{s+1}^{c^4})^{\tau_1}}&<\|2\theta+ (l+k)\cdot \omega\|\\
	&\leq\|\theta+l\cdot \omega-\theta_s\|+\|\theta+k\cdot \omega+\theta_s\|<2\delta_s.
	 \end{align*}
	This contradicts 	$$|\log\delta_{s}|\sim N_{s+1}^{c^5\tau}\gg N_{s+1}^{c^4\tau_1}.$$
	We thus have shown
	$$\widetilde{U}^*_s\cap\left(\bigcup_{l\in Q_s}\widetilde{\Omega}_l^s\right)= \emptyset.$$
	This implies $\widetilde{U}^*_s$ is $s$-\textbf{good}.
	
	Finally, recalling \eqref{n},  we have  $${\rm dist}(n,\partial^-\widetilde{U}^*_s)\geq\min\left({10N_{s+1}^{c^4}-|n|,|n|-3N_s^{c^4}}\right)-1\geq\frac{1}{5}\|n\|> N_s^{c^3}.$$
	Then
	\begin{align*}
		|u(n)|&\leq \sum_{(n',n'')\in \partial \widetilde{U}^*_s} | T^{-1}_{\widetilde{U}^*_s}(n, n') u(n'')|\\
		&\leq2d\sum_{n'\in \partial^- \widetilde{U}^*_s} | T^{-1}_{\widetilde{U}^*_s}(n, n')|\cdot \sup_{n''\in  \partial^+ \widetilde{U}^*_s}| u(n'')|\\
		&\leq CN_{s+1}^{2c^4d}\cdot e^{-\frac{1}{5}\gamma_{\infty}\|n\|}\\
		&\leq C\|n\|^{2c^5d}	\cdot e^{-\frac{1}{5}\gamma_{\infty}\|n\|}\\
		&<e^{-\frac{1}{6}\gamma_{\infty}\|n\|},
	\end{align*}
	which yields  the exponential decay $u$.	

We complete the proof of Theorem \ref{thm1}. 
\end{proof}
\begin{rem}
	 Assume that for some $E\in[-2,2]$, the inductive  process  stops at a finite stage (i.e.,  $Q_s=\emptyset$ for some $s<\infty$). Then for  $N>N_s^{c^5}$, we can enlarge ${\Lambda}_N$  to $\widetilde{\Lambda}_N$  with
	$${\Lambda}_N\subset \widetilde{\Lambda}_{N}\subset \Lambda_{N+50N_{s}^{c^2}},$$
	and
	$$\widetilde{\Lambda}_N\cap\widetilde{\Omega}_k^{s'}\neq\emptyset \Rightarrow \widetilde{\Omega}_k^{s'}\subset \widetilde{\Lambda}_N\ {\rm for}\  s'\leq s \ {\rm and}\  k\in P_{s'}. $$
	Thus $\widetilde{\Lambda}_N$ is $s$-\textbf{good}.
	For $n\in {\Lambda}_{N^\frac{1}{2}}$, since  ${\rm dist}(n,\partial^-\widetilde{\Lambda}_N)>N_s^{c^3}$, we have
	\begin{align*}
		|u(n)|&\leq \sum_{(n',n'')\in \partial \widetilde{\Lambda}_{N}} | T^{-1}_{\widetilde{\Lambda}_{N}}(n, n') u(n'')|\\
		&\leq2d\sum_{n'\in \partial^- \widetilde{\Lambda}_{N}} | T^{-1}_{\widetilde{\Lambda}_{N}}(n, n')|\cdot \sup_{n''\in  \partial^+ \widetilde{\Lambda}_{N}}| u(n'')|\\
		&\leq C N^{2d}\cdot e^{-\frac{1}{2}\gamma_{\infty}N}\rightarrow0.
	\end{align*}
	Hence  such $E$ is not a generalized eigenvalue of $H(\theta)$.
\end{rem}

\section{$(\frac12-)$-H\"older continuity of the IDS}
In this section,  we apply our estimates to obtain $(\frac12-)$-H\"older continuity of the IDS. 
\begin{proof}[Proof of Theorem \ref{thm2}]
	
	Let $T$ be given by \eqref{T}.  Fix $\mu>0$, $\theta\in \mathbb{T}$ and $ E\in [-2,2]$. Let $\varepsilon_0$ be defined in Theorem \ref{Inthm} and assume $0<\varepsilon\leq \varepsilon_0.$
	Fix\begin{equation}\label{zh}
		0<\eta<\eta_0=\min\left( e^{-\left( \frac{4}{\mu}\right) ^{\frac{c}{c-1}}},e^{-|\log\delta_0|^c}\right).
	\end{equation} 
	
	Denote by  $\left\{\xi_{r} :\  r=1, \ldots, R\right\} \subset\operatorname{span}\left(\delta_{n} :\  n \in \Lambda_{N}\right)$   the $\ell^{2}$-orthonormal eigenvectors of  $T_{\Lambda_N}$ with eigenvalues belonging to $[-\eta, \eta]$. We aim to prove that for sufficiently large $N$ (depending on $\eta$), 
	$$R\leq(\#\Lambda_N)\eta^{\frac{1}{2}-\mu}.$$
	From \eqref{zh}, we can choose $ s\geq1$ such that
	$$
	|\log \delta_{s-1}|^{c}\leq|\log\eta|<|\log \delta_{s}|^{c}.
	$$
	Enlarge  $\Lambda_{N}$ to $\widetilde{\Lambda}_N $ so that
	$$	\Lambda_{N}\subset\widetilde{\Lambda}_N\subset \Lambda_{N+50N_s^{c^2}}$$
	and $$
	\quad \widetilde{\Lambda}_N\cap\widetilde{\Omega}_k^{s'}\neq\emptyset \Rightarrow \widetilde{\Omega}_k^{s'}\subset\widetilde{\Lambda}_N\ {\rm for}\  s'\leq s \  {\rm and}\  k \in P_{s'}.$$
	Define further
	$$
	\mathcal{K}=\left\{k \in P_{s} :\  \widetilde{\Omega}_k^{s}\subset\widetilde{\Lambda}_N , \min_{\sigma=\pm1} (\|\theta +k\cdot  \omega+\sigma \theta_s\|)<\eta^{\frac{1}{2}-\frac{\mu}{2} }\right\}
	$$
and 
	$$
	\widetilde{\Lambda}_N'=\widetilde{\Lambda}_N \setminus \bigcup_{k \in \mathcal{K}} \Omega_{k}^{s}.
	$$
	Thus by \eqref{haha}, we obtain
	$$k'\in Q_{s'},\widetilde{\Omega}_{k'}^{s'}\subset\widetilde{\Lambda}_N',\widetilde{\Omega}_{k'}^{s'}\subset \Omega_{k}^{s'+1} \Rightarrow \widetilde{\Omega}_k^{s'+1} \subset\widetilde{\Lambda}_N'\ \text{ for }s'<s.$$
	Since
	$$|\log\eta|<|\log \delta_{s}|^{c}\sim|\log \delta_{s-1}|^{c^6}\sim N_{s}^{c^{11}\tau}<N_{s}^\frac{1}{c},$$
	we get from the resolvent identity
	\begin{align}\label{eL2}
		\begin{split}
			\|T_{\widetilde{\Lambda}_N'}^{-1}\|&<\delta^{-3}_{s-1}\sup_{\left\{k\in P_s :\   \widetilde{\Omega}_{k}^s\subset \widetilde{\Lambda}_N' \right\}}\|\theta+k\cdot\omega-\theta_{s}\|^{-1}\cdot \|\theta+k\cdot\omega+\theta_{s}\|^{-1} \\
			&<\delta^{-3}_{s-1}\eta^{\mu-1 }
			<\frac{1}{2}\eta^{-1},
		\end{split}	
	\end{align}
	where  the last inequality follows from \eqref{zh}.

	By the uniform distribution of $\{n\cdot \omega\}_{n\in \mathbb{Z}^d}$ in $\mathbb{T}$, we have
	\begin{align*}
		\#(\widetilde{\Lambda}_N\setminus\widetilde{\Lambda}_N')&\leq \# \Omega_k^s\cdot \#\mathcal{K}\\
		&\leq CN_s^{cd}\cdot \#\left\{k \in \mathbb{Z}+\sum_{i=0}^{s-1}l_i :\ \|k\|\leq N+50N_s^{c^2},  \min_{\sigma=\pm1} (\|\theta +k\cdot  \omega+\sigma \theta_s\|)<\eta^{\frac{1}{2}-\frac{\mu}{2} }\right\}\\
		&\leq CN_s^{cd}\cdot \eta^{\frac{1}{2}-\frac{\mu}{2} }(N+50N_s^{c^2})^d\\
		&\leq CN_s^{cd}\cdot \eta^{\frac{1}{2}-\frac{\mu}{2} }\#\Lambda_N
	\end{align*}
	for sufficiently large $N$.
	
	For a vector  $\xi\in\C^{\Lambda}$ with $\Lambda\subset\Z^d$, we define $\|\xi\|$ to be the $\ell^2$-norm.  Assume  $\xi \in\left\{\xi_{r} :\  r \leq R\right\}$ be an eigenvector of $T_{\Lambda_N}$.  Then
	$$\|T_{\Lambda_N}\xi\|=\|R_{\Lambda_N}T\xi\|\leq\eta.$$
	Hence
	\begin{align}\label{fuza}
			\eta\geq\|R_{\widetilde{\Lambda}_N'}T_{\Lambda_N}\xi\|
			&=\|R_{\widetilde{\Lambda}_N'}TR_{\widetilde{\Lambda}_N'}\xi+R_{\widetilde{\Lambda}_N'}TR_{\Lambda_N\setminus\widetilde{\Lambda}_N'}\xi- R_{\widetilde{\Lambda}_N'\setminus\Lambda_N}T\xi  \|
	\end{align}
	Applying $T_{\widetilde{\Lambda}_N'}^{-1}$ to \eqref{fuza} and \eqref{eL2}  implies
	\begin{align}\label{zafu}
		\left\|R_{\widetilde{\Lambda}_N'}\xi+T_{\widetilde{\Lambda}_N'}^{-1}\left( R_{\widetilde{\Lambda}_N'}TR_{\Lambda_N\setminus\widetilde{\Lambda}_N'}\xi- R_{\widetilde{\Lambda}_N'\setminus\Lambda_N}T\xi \right)  \right\|<\frac{1}{2}.
	\end{align}
	Denote
	$$H=\operatorname{Range}T_{\widetilde{\Lambda}_N'}^{-1}\left( R_{\widetilde{\Lambda}_N'}TR_{\Lambda_N\setminus\widetilde{\Lambda}_N'}- R_{\widetilde{\Lambda}_N'\setminus\Lambda_N}T \right).$$
	Then
	\begin{align*}
		\operatorname{dim}H&\leq\operatorname{Rank}T_{\widetilde{\Lambda}_N'}^{-1}\left( R_{\widetilde{\Lambda}_N'}TR_{\Lambda_N\setminus\widetilde{\Lambda}_N'}- R_{\widetilde{\Lambda}_N'\setminus\Lambda_N}T \right)\\
		&\leq\#(\widetilde{\Lambda}_N\setminus\widetilde{\Lambda}_N') +\#(\widetilde{\Lambda}_N\setminus\Lambda_N) \\
		&\leq CN_s^{cd}\cdot \eta^{\frac{1}{2}-\frac{\mu}{2} }\#\Lambda_N+CN_s^{c^2d}N^{d-1}\\
		&\leq CN_s^{cd}\cdot \eta^{\frac{1}{2}-\frac{\mu}{2} }\#\Lambda_N.
	\end{align*}
	Denote by $P_H$ the orthogonal projection to $H$. Applying $I-P_H$ to \eqref{zafu}, we get
	\begin{equation*}\label{xishou}
		\|R_{\widetilde{\Lambda}_N'}\xi-P_HR_{\widetilde{\Lambda}_N'}\xi\|^2=\|R_{\widetilde{\Lambda}_N'}\xi\|^2-\|P_HR_{\widetilde{\Lambda}_N'}\xi\|^2\leq\frac{1}{4}.
	\end{equation*}
	Before concluding the proof, we need a useful lemma. 
	\begin{lem}\label{fx}
		Let $H$ be a Hilbert space and  let $H_1$, $H_2$ be its subspaces. Let $\left\{\xi_{r} :\  r=1, \ldots, R\right\}$ be a set of  orthonormal vectors.  Then we have
		$$\sum_{r=1}^R\|P_{H_1}P_{H_2}\xi_{r}\|^2\leq\operatorname{dim}H_1.$$
	\end{lem}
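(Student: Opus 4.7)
The plan is to reduce the claim to a Hilbert--Schmidt trace estimate. Set $T=P_{H_1}P_{H_2}$, so that
$$\|P_{H_1}P_{H_2}\xi_r\|^2=\langle T\xi_r,T\xi_r\rangle=\langle T^*T\xi_r,\xi_r\rangle,$$
and the quantity to be estimated becomes $\sum_{r=1}^R\langle T^*T\xi_r,\xi_r\rangle$. Since $\{\xi_r\}_{r=1}^R$ is an orthonormal family, it can be extended to an orthonormal basis $\{\xi_r\}\cup\{\eta_\alpha\}$ of $H$; as $T^*T$ is positive self-adjoint, the diagonal terms $\langle T^*T\eta_\alpha,\eta_\alpha\rangle$ are nonnegative, whence
$$\sum_{r=1}^R\|P_{H_1}P_{H_2}\xi_r\|^2\le \operatorname{tr}(T^*T).$$

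Next I would compute $\operatorname{tr}(T^*T)$ by choosing a basis adapted to $H_1$. Note $T^*T=P_{H_2}P_{H_1}^2P_{H_2}=P_{H_2}P_{H_1}P_{H_2}$, and by cyclicity of the trace
$$\operatorname{tr}(T^*T)=\operatorname{tr}(P_{H_1}P_{H_2}P_{H_1}).$$
Now pick an orthonormal basis $\{e_j\}_{j\in J}$ of $H_1$ and complete it to an orthonormal basis of $H$ by an orthonormal basis $\{f_k\}$ of $H_1^{\perp}$. For $f_k\in H_1^{\perp}$ we have $P_{H_1}f_k=0$, so these vectors contribute zero to the trace. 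For $e_j\in H_1$, using $P_{H_1}e_j=e_j$ and $P_{H_2}^2=P_{H_2}$,
$$\langle P_{H_1}P_{H_2}P_{H_1}e_j,e_j\rangle=\langle P_{H_2}e_j,P_{H_2}e_j\rangle=\|P_{H_2}e_j\|^2\le\|e_j\|^2=1.$$
Summing over $j$ yields $\operatorname{tr}(P_{H_1}P_{H_2}P_{H_1})\le\#J=\dim H_1$, and combining with the previous display gives the claim.

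This is a standard fact and I do not anticipate any real obstacle: the only point requiring mild care is the legitimacy of the trace manipulations when $\dim H_1$ could a priori be infinite, but the inequality is trivial in that case and when $\dim H_1<\infty$ the operator $P_{H_1}P_{H_2}P_{H_1}$ is finite rank, so the trace is well defined and the cyclic identity is valid. Thus the proof reduces to the two short observations above.
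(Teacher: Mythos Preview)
Your proof is correct. The paper's argument reaches the same endpoint but by a more elementary route: it fixes an orthonormal basis $\{\phi_i\}$ of $H_1$, expands $\|P_{H_1}P_{H_2}\xi_r\|^2=\sum_i|\langle\phi_i,P_{H_2}\xi_r\rangle|^2$ by Parseval, swaps the order of summation, moves $P_{H_2}$ across the inner product, and then applies Bessel's inequality to the orthonormal family $\{\xi_r\}$ to get $\sum_i\|P_{H_2}\phi_i\|^2\le\dim H_1$. Your trace argument is a repackaging of the same computation: the step $\sum_r\langle T^*T\xi_r,\xi_r\rangle\le\operatorname{tr}(T^*T)$ is exactly the Bessel step in disguise, and your evaluation of $\operatorname{tr}(P_{H_1}P_{H_2}P_{H_1})$ in the basis $\{e_j\}$ of $H_1$ reproduces the paper's final sum $\sum_j\|P_{H_2}e_j\|^2$. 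The trace formulation is perhaps more conceptual and generalizes cleanly, while the paper's version avoids any need to discuss trace-class issues or cyclicity; both are perfectly fine here.
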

	
	\begin{proof}[Proof of Lemma \ref{fx}]
		Denote by  $\left\langle \cdot, \cdot \right\rangle$ the inner product on $H.$ Let  $\left\{\phi_{i}\right\}$ be the orthonormal basis of $H_1$. By Parseval's equality and  Bessel's inequality, we have
		\begin{align*}
			\sum_{r=1}^R\|P_{H_1}P_{H_2}\xi_{r}\|^2&=\sum_{r=1}^R\sum_i|\left\langle\phi_{i},P_{H_2}\xi_r \right\rangle|^2 \\
			&=\sum_i\sum_{r=1}^R|\left\langle P_{H_2}\phi_{i},\xi_r \right\rangle|^2 \\
			&\leq \sum_i\|P_{H_2}\phi_{i}\|^2\\
			&\leq \sum_i\|\phi_{i}\|^2\leq\operatorname{dim}H_1.
		\end{align*}	
	\end{proof}
	Finally, it follows from Lemma \ref{fx} that
	\begin{align*}
		R=\sum_{r=1}^R\|\xi_{r}\|^2&=\sum_{r=1}^R	\|R_{\widetilde{\Lambda}_N'}\xi_r\|^2+\sum_{r=1}^R\|R_{\Lambda_N\setminus\widetilde{\Lambda}_N'}\xi_r\|^2\\
		&\leq\frac{1}{4}R+\sum_{r=1}^R\left(\|P_HR_{\widetilde{\Lambda}_N'}\xi_r\|^2+\|R_{\Lambda_N\setminus\widetilde{\Lambda}_N'}\xi_r\|^2\right)\\
		&\leq\frac{1}{4}R+\operatorname{dim}H+\#(\Lambda_N\setminus\widetilde{\Lambda}_N')\\
		&\leq\frac{1}{4}R+ CN_s^{cd}\cdot \eta^{\frac{1}{2}-\frac{\mu}{2} }\#\Lambda_N.
	\end{align*}
	Hence we get
	$$	R\leq CN_s^{cd}\cdot \eta^{\frac{1}{2}-\frac{\mu}{2} }\#\Lambda_N\leq\eta^{\frac{1}{2}-\mu} \#\Lambda_N.$$
We finish the proof of Theorem \ref{thm2}. 
\end{proof}	
\begin{rem}
	In the above  proof,   if the inductive process stops at a finite stage (i.e., $Q_s=\emptyset$ for some $s$) and $|\log\delta_s|^c \leq|\log\eta|$. Then $\widetilde{\Lambda}_N$ is $s$-\textbf{good} and 
	$$\|T_{\widetilde{\Lambda}_N}^{-1}\|
	<\delta^{-3}_{s-1}\delta_s^{-2}
	<\frac{1}{2}\eta^{-1},$$
which implies  $$R\leq \frac{4}{3}\#(\widetilde{\Lambda}_N\setminus\Lambda_N)\leq C N_s^{c^2d} N^{-1}\#\Lambda_N.$$ Letting  $N\to \infty$, we get $	\mathcal{N}(E+\eta)-\mathcal{N}(E-\eta)=0$, which means  $E\notin \sigma(H(\theta)).$
\end{rem}
\section*{Acknowledgments}
   Y. Shi  is partially supported by National Key R\&D Program under Grant 2021YFA1001600  and  NSF of China under Grant 12271380.  Z. Zhang is partially supported by NSF of China under Grant 12171010.
\appendix{}
\section{}\label{appa}
 \begin{proof}[Proof of Remark \ref{r0}]
	Let $i\in Q_{0}^{+}$ and  $j\in\widetilde{Q}_{0}^{-}$ satisfy  \[\left\|\theta+i\cdot  \omega+\theta_{0}\right\|<\delta_{0}, \ \left\|\theta+j\cdot  \omega-\theta_{0}\right\|<\delta_{0}^{\frac{1}{100}}.\]
Then  \eqref{bruno} implies $1,\omega_1,\dots,\omega_d$ are rational independent and $\{k\cdot \omega\}_{k \in \mathbb{Z}^d}$ is dense in $\mathbb{T}$. Thus there exists $k\in \mathbb{Z}^d$ such that $\|2\theta+k\cdot \omega\|$ is sufficiently small  with
	\begin{align*}
	\left\|\theta+(k-j)\cdot  \omega+\theta_{0}\right\|&\leq\|2\theta+k\cdot \omega\|+\left\|\theta+j\cdot  \omega-\theta_{0}\right\|<\delta_{0}^{\frac{1}{100}},\\
	\left\|\theta+(k-i)\cdot  \omega-\theta_{0}\right\|&\leq\|2\theta+k\cdot \omega\|+\left\|\theta+i \cdot \omega+\theta_{0}\right\|<\delta_{0}.
	\end{align*}
We obtain then  $k-j\in\widetilde{Q}_{0}^{+}$ and $k-i\in Q_{0}^{-},$ which implies $${\rm dist}\left(\widetilde{Q}_{0}^{+}, Q_{0}^{-}\right)\leq {\rm dist}\left(\widetilde{Q}_{0}^{-}, Q_{0}^{+}\right).$$
	The similar argument shows  $${\rm dist}\left(\widetilde{Q}_{0}^{+}, Q_{0}^{-}\right)\geq {\rm dist}\left(\widetilde{Q}_{0}^{-}, Q_{0}^{+}\right).$$
	We have shown  $${\rm dist}\left(\widetilde{Q}_{0}^{+}, Q_{0}^{-}\right)={\rm dist}\left(\widetilde{Q}_{0}^{-}, Q_{0}^{+}\right).$$
\end{proof}
 \section{}
 \begin{lem}[Schur Complement Lemma]\label{Su}
	Let $A\in\mathbb{C}^{d_1\times d_1}, D\in\mathbb{C}^{d_2\times d_2}, B\in\mathbb{C}^{d_1\times d_2}, D\in\mathbb{C}^{d_2\times d_1}$ be matrices and
	$$	M=
	\begin{pmatrix}
		A&B\\
		C&D
	\end{pmatrix}.$$
	Assume further that $A$ is invertible and $\|B\|,\|C\|\leq1$. Then we have
	\begin{itemize}
		\item[(1).]
		$$	\det M=\det A\cdot\det S,$$
		where
		$$
		S=D-CA^{-1}B
		$$
		is called the Schur complement of $A$.
		\item[(2).] $M$  is invertible iff $S$ is invertible, and
		\begin{equation}\label{kay}
			\|S^{-1}\|\leq\|M^{-1}\|<4\left( 1+\|A^{-1}\|\right)^2 \left( 1+\|S^{-1}\|\right) .
		\end{equation}
	\end{itemize}
\end{lem}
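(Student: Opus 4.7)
The plan is to reduce both claims to the block LU factorization associated to Gaussian elimination of $B$ against $A$. Concretely, one verifies by direct multiplication that
\begin{equation*}
	M = \begin{pmatrix} I & 0 \\ CA^{-1} & I \end{pmatrix} \begin{pmatrix} A & 0 \\ 0 & S \end{pmatrix} \begin{pmatrix} I & A^{-1}B \\ 0 & I \end{pmatrix}, \qquad S = D - CA^{-1}B.
\end{equation*}
Since the left and right triangular factors have unit determinant, taking determinants of both sides yields $\det M = \det A \cdot \det S$, which is claim~(1). Moreover, these triangular factors are always invertible (with obvious inverses obtained by negating their off-diagonal blocks), so $M$ is invertible if and only if the middle block-diagonal factor is; since $A$ is invertible by hypothesis, this happens exactly when $S$ is invertible.

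For the norm inequalities in (2), I would invert the factorization to obtain the explicit formula
\begin{equation*}
	M^{-1} = \begin{pmatrix} A^{-1} + A^{-1}BS^{-1}CA^{-1} & -A^{-1}BS^{-1} \\ -S^{-1}CA^{-1} & S^{-1} \end{pmatrix}.
\end{equation*}
The lower bound $\|S^{-1}\|\le\|M^{-1}\|$ is then immediate: the $(2,2)$-block of $M^{-1}$ equals $S^{-1}$, so for any unit $v\in\mathbb{C}^{d_2}$ one has $S^{-1}v = R_2 M^{-1} (0,v)^{\mathsf T}$ where $R_2$ restricts to the second block, giving $\|S^{-1}v\|\le\|M^{-1}\|$.

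For the upper bound, I would decompose $M^{-1}$ as the sum of its four single-block pieces and apply the triangle inequality, using the elementary fact that the operator norm of a matrix having a single nonzero block equals the norm of that block. Together with $\|B\|,\|C\|\le1$, this yields
\begin{equation*}
	\|M^{-1}\| \le \|A^{-1}\| + \|A^{-1}\|^2\|S^{-1}\| + 2\|A^{-1}\|\|S^{-1}\| + \|S^{-1}\| \le (1+\|A^{-1}\|)^2(1+\|S^{-1}\|),
\end{equation*}
which is well within the claimed bound (the extra factor of $4$ absorbs any discrepancy if a cruder block-norm inequality is preferred).

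No step here is a genuine obstacle; the only mild nuisance is making the block-norm bookkeeping clean, which I would handle by writing the four-block split explicitly rather than invoking any sharper inequality. The real content is the factorization identity, after which both (1) and (2) fall out mechanically.
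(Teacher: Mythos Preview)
Your proposal is correct and follows essentially the same approach as the paper: both hinge on the explicit block formula for $M^{-1}$, from which the determinant identity, the invertibility equivalence, and the norm bounds in \eqref{kay} all follow by inspection. The paper simply writes down the inverse matrix and asserts the conclusion, whereas you supply the LU factorization as a derivation and spell out the four-block triangle-inequality estimate; these are welcome details but not a different idea.
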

\begin{proof}[Proof of Lemma \ref{Su}]
	Direct computation shows
	$$M^{-1}=
	\begin{pmatrix}
		A^{-1}+A^{-1}BS^{-1}CA^{-1}&-A^{-1}BS^{-1}\\
		-S^{-1}CA^{-1}&S^{-1}
	\end{pmatrix},$$
	which implies \eqref{kay}.
\end{proof}
\section{}
\begin{lem}\label{even}
	Let $l\in\frac{1}{2}\mathbb{Z}^d$ and let $\Lambda\subset\mathbb{Z}^d+l$ be a finite set which is symmetrical about the origin (i.e.,  $n\in \Lambda\Leftrightarrow-n \in \Lambda$). Then $$\operatorname{det}T(z)_{\Lambda}=\operatorname{det}\left((\cos 2 \pi(z+n\cdot  \omega){\delta_{n,n'}}-E+\varepsilon \Delta\right)_{n\in\Lambda} $$ is an even function of $z$.
\end{lem}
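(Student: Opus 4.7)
The plan is to exhibit a similarity transformation sending $T(z)_\Lambda$ to $T(-z)_\Lambda$, which will immediately give the equality of determinants and hence the evenness.

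First I would observe that the hypothesis $l\in\tfrac12\Z^d$ makes the map $\sigma:n\mapsto -n$ a well-defined involution on $\Z^d+l$ (since $-n = (n-2l)\cdot(-1)+\text{stuff}$, and $2l\in\Z^d$), and the assumed symmetry $n\in\Lambda\Leftrightarrow -n\in\Lambda$ says that $\sigma$ restricts to a bijection of $\Lambda$ to itself. Let $P$ denote the corresponding permutation matrix on $\ell^2(\Lambda)$, i.e.\ $P(m,n)=\delta_{m,-n}$, so that $(PAP^{-1})(m,n)=A(-m,-n)$ for any matrix $A$ indexed by $\Lambda$.

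Next I would compute $PT(z)_\Lambda P^{-1}$ entrywise. For the diagonal part we have
$$
(\cos 2\pi(z+(-m)\cdot\omega)-E)\delta_{-m,-n}=(\cos 2\pi(-z+m\cdot\omega)-E)\delta_{m,n},
$$
where the crucial step uses only that $\cos$ is an even function. For the Laplacian part we have $\varepsilon\Delta(-m,-n)=\varepsilon\delta_{\|{-m}+n\|_1,1}=\varepsilon\delta_{\|m-n\|_1,1}=\varepsilon\Delta(m,n)$, since the $\ell^1$-norm is invariant under $n\mapsto -n$. Putting these two computations together, $PT(z)_\Lambda P^{-1}=T(-z)_\Lambda$.

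Finally, since conjugation preserves determinants, $\det T(-z)_\Lambda=\det(PT(z)_\Lambda P^{-1})=\det T(z)_\Lambda$, proving that $z\mapsto \det T(z)_\Lambda$ is even. There is no real obstacle here; the only small point to double-check is that $\sigma$ genuinely preserves $\Lambda$ when $l\notin\Z^d$ (i.e.\ when $l\in\tfrac12\Z^d\setminus\Z^d$), which amounts to verifying that $-n\in\Z^d+l$ for $n\in\Z^d+l$—this is exactly ensured by $2l\in\Z^d$, so the hypothesis is placed precisely to make the argument work.
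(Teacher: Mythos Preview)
Your proof is correct and follows essentially the same approach as the paper: both conjugate $T(z)_\Lambda$ by the involution $n\mapsto -n$ (the paper calls it a unitary $U_\Lambda$, you call it a permutation matrix $P$) and observe that this yields $T(-z)_\Lambda$, whence the determinants agree. Your version is slightly more detailed in verifying the Laplacian part and the well-definedness of $\sigma$ on $\Z^d+l$, but the idea is identical.
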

\begin{proof}[Proof of Lemma \ref{even}]
	Define the unitary map
	\begin{align*}
		U_\Lambda:\ell^2(\Lambda)\longrightarrow \ell^2(\Lambda)\ {\rm with}\ (U\phi)(n)=\phi(-n).
	\end{align*}
	Then  $$U_\Lambda^{-1}T(z)_{\Lambda}U_\Lambda=\left( (\cos 2 \pi(z-n\cdot  \omega)\delta_{n,n'}-E+\varepsilon \Delta\right)_{n\in\Lambda} =T(-z)_{\Lambda},$$
	which implies $$\operatorname{det}T(z)_{\Lambda}=\operatorname{det}T(-z)_{\Lambda}.$$
\end{proof}

\bibliographystyle{alpha}

 \end{document}